\newtheorem{tlem}{Lemma}
\newtheorem{tthm}[tlem]{Theorem}
\newtheorem{tcor}[tlem]{Corollary}
\newtheorem{tdef}[tlem]{Definition}
\newtheorem{trem}[tlem]{Remark}
\def \ie {i.e.~}
\def \lhs {l.h.s.~}
\def \rhs {r.h.s.~}
\def \CP { \mathcal{CP} }
\def \ECP { \mathcal{ECP} }
\def \F { \mathcal{F} }
\def \S { \mathscr{S} }
\journal{Discrete Applied Mathematics}
\begin{document}

\begin{frontmatter}

\title{A polyhedral approach for the Equitable Coloring Problem\tnoteref{grant}}

\author[a]{Isabel M\'endez-D\'iaz\fnref{correspon}}
\author[b,c]{Graciela Nasini}
\author[b,c]{Daniel Sever\'in}

\address[a]{ FCEyN, Universidad de Buenos Aires,
	Argentina }

\address[b]{ FCEIA, Universidad Nacional de Rosario,
	Argentina }

\address[c]{ CONICET, Argentina }

\tnotetext[grant]{Partially supported by grants UBACyT X143 (2008-2010), PID-CONICET 204 (2010-2012)
and PICT 2006-1600.\\
\emph{E-mail addresses}: \texttt{imendez@dc.uba.ar} (I. M\'endez-D\'iaz),
\texttt{nasini@fceia.unr.edu.ar} (G. Nasini), \texttt{daniel@fceia.unr.edu.ar} (D. Sever\'in)}
\fntext[correspon]{Corresponding author at Departamento de Ciencias de la Computaci\'on, Facultad de Ciencias Exactas y Naturales, Universidad de Buenos Aires,
Intendente Guiraldes 2160 (Ciudad Universitaria, Pabell\'on 1), Argentina}

\begin{abstract}
In this work we study the polytope associated with a 0,1-integer programming formulation for the
Equitable Coloring Problem. We find several families of valid inequalities and derive sufficient
conditions in order to be facet-defining inequalities. 
We also present computational evidence that shows the efficacy of these inequalities used in a
cutting-plane algorithm.
\end{abstract}

\begin{keyword}
equitable graph coloring, integer programming, cut and branch
\MSC[2010] 90C57 \sep 05C15
\end{keyword}

\end{frontmatter}


\section{Introduction} \label{SINTRO}

In graph theory, there is a large family of optimization
problems having relevant practical importance, besides its theoretical
interest. One of the most representative problem of this family is the
\emph{Graph Coloring Problem} (GCP), which arises in many applications
such as scheduling, timetabling, electronic bandwidth allocation and sequencing problems.

Given a simple graph $G = (V, E)$, where $V$  is the set of vertices
and $E$ is the set of edges, a coloring of $G$ is an assignment of colors to each vertex such that the endpoints of any edge have different colors.
A $k$-coloring is a coloring that uses $k$ colors. Equivalently, a $k$-coloring can be defined as a partition of $V$ into $k$ subsets, called \emph{color classes}, such that adjacent vertices belong to different classes.
Given a $k$-coloring, color classes are denoted by $C_1, \ldots, C_k$ assuming that, for each $i \in \{1,\ldots,k\}$, vertices in $C_i$ are colored with color $i$.
We can also define a $k$-coloring of $G$ as a mapping $c: V \rightarrow \{1,\ldots,k\}$ such that
$c(u) \ne c(v)$ for all $(u, v) \in E$.
The GCP consists of finding the minimum number of colors such that a coloring exists. This minimum
number of colors is called the \emph{chromatic number} of the graph $G$ and is denoted by $\chi(G)$.

Some applications impose additional restrictions arising variations of GCP.
For instance, in scheduling problems, it may be required to ensure the uniformity of the distribution of workload employees.
The addition of these extra \emph{equity} constraints gives rise to the
\emph{Equitable Coloring Problem} (ECP).
An \emph{equitable $k$-coloring} (or just $k$-eqcol) of $G$ is a $k$-coloring satisfying the \emph{equity constraints},
\ie $\left| |C_i| - |C_j| \right| \leq 1$, for $i, j \in \{1, \ldots, k \}$ or, equivalently,
$\lfloor n / k \rfloor \leq |C_j| \leq \lceil n / k \rceil$
for each $j \in \{1, \ldots, k \}$. The \emph{equitable chromatic number} of $G$, $\chi_{eq}(G)$, is the
minimum $k$ for which $G$ admits a $k$-eqcol. The ECP consists of finding $\chi_{eq}(G)$.

The ECP was introduced in \cite{MEYER}, motivated by an application concerning \emph{garbage collection}
\cite{EXAMPLE2}. 
%
Other applications of the ECP concern \emph{load balancing problems} in multiprocessor machines \cite{EXAMPLE3}
and results in \emph{probability theory} \cite{EXAMPLE1}. An introduction to ECP and some basics results are provided in
\cite{KUBALE}.

Computing $\chi_{eq}(G)$ for arbitrary graphs is proved to be $NP$-hard and
just a few families of graphs are known to be easy such as complete $n$-partite, complete split, wheel and tree graphs \cite{KUBALE}.
In particular, if $G$ has a universal vertex $u$, the cardinality of the color classes of any equitable coloring in $G$ is at most two and the
color classes of exactly two vertices correspond to a matching in the complement of $G$.
In other words, the ECP is polynomial when $G$ has at least one universal vertex.

There exist some remarkable differences between GCP and ECP. Unlike GCP, a graph admiting a $k$-eqcol
may not admit a $(k+1)$-eqcol. This leads us to define the \emph{skip set} of $G$, $\S(G)$, as the set
of $k \in \{ \chi_{eq}(G), \ldots, n\}$ such that $G$ does not admit any $k$-eqcol. For instance, if $G = K_{3,3}$, \ie
the \emph{complete bipartite graph} with partitions of size 3, then $G$ admits a 2-eqcol but does not admit a 3-eqcol.
Here, $\S(K_{3,3}) = \{3\}$. Computing the skip set of a graph is as hard as computing the equitable chromatic number.
If $\S(G) = \varnothing$, we say that $G$ is \emph{monotone}. For instance, trees are monotone graphs \cite{EQTREE}.

Another drawback emerging from ECP is that the equitable chromatic number of a graph
can be smaller than the equitable chromatic number of one of its induced subgraphs.
In particular, in an unconnected graph, equitable chromatic numbers of each connected component are uncorrelated with the chromatic number of the whole graph. 

On the other hand, some useful properties of GCP also hold for ECP. For example, it is known that $G$ admits
$k$-eqcols for $k \geq \Delta(G)+1$, where $\Delta(G)$ is the maximum degree of vertices in $G$.
In \cite{KKALGORITHM} a polynomial time algorithm which produces a $(\Delta(G)+1)$-eqcol is presented.

\emph{Integer linear programming} (ILP) approach together with algorithms which exploit the
polyhedral structure proved to be the best tool for dealing with coloring problems.
Although many ILP formulations are known for GCP, as far as we know, just two of these models
were adapted for ECP. One of them, given in \cite{BYCBRA}, is based on the \emph{asymmetric representatives}
model for the GCP \cite{REPRESENTATIVES}. The other one, proposed by us in \cite{MACI},
is based on the classic \emph{color assignments to vertices} model \cite{AARDAL} with further
improvements stated in \cite{BCCOL}.

The goal of this paper is to study the last model from a polyhedral point of view and determine families
of valid inequalities which can be useful in the context of an efficient cutting-plane algorithm. 

The remainder of the paper is organized as follows. 

In sections \ref{SPOLYT}-\ref{SNEWINEQ}, we study the facial structure of the polytope associated
with the formulation given in \cite{MACI}. We introduce several families of valid inequalities which
always define high dimensional faces.
Section \ref{SCOMPU} is devoted to describe a cutting-plane algorithm for solving ECP. We expose
computational evidence for reflecting the improvement in the performance when the cutting-plane
algorithm uses the new inequalities as cuts. That algorithm is then used to reinforce bounds on a
Branch and Bound enumeration tree. At the end,
a conclusion is presented.\\

Some definitions and notations will be useful in the following.

Given a graph $G = (V, E)$ we consider $V= \{1,\ldots,n\}$. The complement of $G$ is denoted by $\overline{G}$.
We also denote by $K_n$ the \emph{complete graph} of $n$ vertices.
The percentage of density of $G$ is $\dfrac{100|E|}{|V|(|V| - 1)/2}$. For instance, the percentage of density of any complete graph is 100.
Given $u \in V$, the \emph{degree of $u$} is the number of vertices adjacent to $u$ and is denoted by $\delta(u)$.
For any $S \subset V$, $G[S]$ is the graph induced by $S$ and $G - S$ is the graph obtained by the deletion
of vertices in $S$, \ie $G - S = G[V \backslash S]$. In particular, if $S = \{u\}$ we just write $G - u$
instead of $G - \{u\}$. A \emph{stable set} is a set of vertices in $G$, no two of which are adjacent. We denote by $\alpha(G)$ the \emph{stability number} of $G$, \ie the maximum cardinality of a stable set of $G$. Given $S \subset V$, we also denote by $\alpha(S)$ the stability number of $G[S]$. We say that $S$ is $k$-\emph{maximal} if 
$\alpha(S)=k$ and for all $v \in V \backslash S$, $\alpha(S\cup\{v\}) = k + 1$. In particular, if $S$ is 1-maximal, we say that $S$ is a \emph{maximal clique}.
Given $u \in V$, the \emph{neighborhood} of $u$, $N(u)$, is the set of vertices adjacent to $u$, and the
\emph{closed neighborhood} of $u$, $N[u]$, is the set $N(u) \cup \{u\}$. A vertex $u \in V$ is a
\emph{universal vertex} if $N[u] = V$. A \emph{matching} of $G$ is a subset of edges such that no pair of them has a common extreme point.
Whenever it is clear from the context, we will write $\chi_{eq}$
rather than $\chi_{eq}(G)$. The same convention also applies for other operators that depend on $G$
such as $\S$ and $\Delta$.\\


Throughout the paper, we consider graphs with at least five vertices and one edge, and not containing universal vertices nor $K_{n-1}$
as an induced subgraph. Thus, for a given graph $G$ we assume that $2 \leq \chi_{eq}(G) \leq n-2$. The remaining cases can be solved in polynomial time.

\section{The polytope $\ECP$} \label{SPOLYT}

A straightforward ILP model for GCP can be obtained by modeling colorings with two sets of binary
variables: variables $x_{vj}$ for $v \in V$ and $j\in \{1,\ldots, n\}$ where $x_{vj}=1$ if and only if the coloring assigns color $j$ to vertex $v$, and
variables $w_j$ for $j\in \{1,\ldots, n\}$ where $w_j=1$ if and only if color $j$ is used in the coloring. 
The formulation is shown below:
\begin{align}
  & \sum_{j = 1}^n x_{vj} = 1, & &\forall~ v \in V \label{RASSIGN}\\
  & x_{uj} + x_{vj} \leq w_j,  & &\forall~ (u,v) \in E,~ 1 \leq j \leq n  \label{RADYAC}
\end{align}
Constraints (\ref{RASSIGN}) assert that each vertex has to be colored by a
unique color and constraints (\ref{RADYAC}) ensure that two adjacent vertices can not share the same color.
Hence, the chromatic number can be computed by minimizing $\sum_{j = 1}^n w_j$. 

This formulation presents a disadvantage: the number of integer solutions $(x,w)$ with the same value
$\sum_{j = 1}^n w_j$ is very large. A technique widely used in combinatorial optimization to deal
with this kind of problem is the concept of \emph{symmetry breaking} \cite{MARGOT}. This technique is
applied in \cite{BCCOL}, where the following constraints are added to the previous formulation in order
to remove (partially) symmetric solutions:
\begin{align}
  & w_{j + 1} \leq w_j,        & &\forall~ 1 \leq j \leq n-1  \label{RORDER}
\end{align}
which means that color $j + 1$ may be used only if color $j$ is also used.

Given a partition of $V$ into color classes, let us observe that permutations of colors between those sets
yield symmetric colorings. In \cite{BCCOL}, additional constraints are proposed in order to drop
most of these colorings by sorting the color classes by the minimum label of the vertices
belonging to each set and only considering the coloring that assigns color $j$ to the $j$th color class.
These constraints are:
\begin{align}
	&x_{vj} = 0, & &\forall~ 1 \leq v < j \leq n \label{RREPR1}\\
        &x_{vj} \leq \sum_{u = j-1}^{v-1} x_{u j-1}, & &\forall~ 2 \leq j \leq v \leq n \label{RREPR2}
\end{align}
Even though the formulation consisting of constraints (\ref{RASSIGN})-(\ref{RREPR2}) eliminates a greater amount of
symmetrical solutions, it is difficult to characterize the integer polyhedron associated to that formulation since it
depends on the labeling of vertices \cite{BCCOL}.

From now on, we represent colorings of $G$ as binary vectors $(x,w)$ satisfying constraints (\ref{RASSIGN})-(\ref{RORDER}) and
we call \emph{Coloring Polytope}, $\CP(G)$, to the convex hull of binary vectors $(x,w)$ that represent colorings of $G$.

In order to characterize equitable colorings, we add the following constraints to the model:
\begin{align}
  & x_{vj} \leq w_j, & &\forall~ v~ \textrm{isolated},~ 1 \leq j \leq n \label{RISOL} \\
  & \sum_{v \in V} x_{vj} \geq \sum_{k=j}^n \biggl \lfloor \dfrac{n}{k} \biggr \rfloor
          \bigl( w_k - w_{k+1} \bigr), & &\forall~ 1 \leq j \leq n - 1 \label{RLOWER} \\
  & \sum_{v \in V} x_{vj} \leq \sum_{k=j}^n \biggl \lceil \dfrac{n}{k} \biggr \rceil
          \bigl( w_k - w_{k+1} \bigr), & &\forall~ 1 \leq j \leq n - 1 \label{RUPPER}
\end{align}
where $w_{n+1}$ is a dummy variable set to 0. Constraints (\ref{RISOL}) ensure that isolated vertices
use enabled colors and (\ref{RLOWER})-(\ref{RUPPER}) are precisely the equity constraints. 
The \emph{Equitable Coloring Polytope} $\ECP(G)$ is defined as the convex hull of binary vectors
$(x,w)$ that represent equitable colorings of $G$, \ie they satisfy constraints (\ref{RASSIGN})-(\ref{RORDER})
and (\ref{RISOL})-(\ref{RUPPER}).

From now on, we present equitable colorings by using mappings, color classes or binary vectors, according to our convenience.

We also work with two useful operators over colorings. The first one is based on the fact that
\emph{swapping} colors in a $k$-eqcol produces a $k$-eqcol indeed.

\begin{tdef}
Let $c$ be a $k$-eqcol of $G$ with color classes $C_1$, $\ldots$, $C_k$ and $L = (j_1, j_2, \ldots, j_r)$ be an ordered
list of different colors in $\{1,\ldots,k\}$. We define $swap_L(c)$ as the $k$-eqcol with color classes $C'_1$, $\ldots$,
$C'_k$ which satisfies $C'_{j_t} = C_{j_{t+1}}~~\forall~1 \leq t \leq r-1$, $C'_{j_r} = C_{j_1}$ and
$C'_i = C_i~~\forall~i \in \{1, 2, \ldots, k \} \backslash \{j_1, j_2, \ldots, j_r\}$.
\end{tdef}

The other operator takes a $k$-eqcol whose color classes have at most 2 vertices and returns a $(k+1)$-eqcol. 

\begin{tdef}
Let $c$ be a $k$-eqcol of $G$ with $\lceil n/2 \rceil \leq k \leq n-1$ and $v\neq v'$ 
such that $c(v) = c(v')$. We define $intro(c,v)$ as a $(k+1)$-eqcol $c'$ which
satisfies $c'(v) = k + 1$ and $c'(i) = c(i)~~\forall~i \in V \backslash \{v\}$.
\end{tdef}


\begin{trem} \label{REMPROC}
Let us observe that colorings with $n-1$ and $n$ colors are always equitable. Then, we can use
Proposition 1 of \cite{BCCOL} to prove that the following $n^2 - \chi_{eq} - |\S|$ equitable colorings
are affinely independent:
\begin{enumerate}
\item A $(n-1)$-eqcol $c$ such that $C_{n-1}$ has two vertices, namely $u_1$ and $u_2$.
\item 
$swap_{n-1,j}(c)$ for each $j \in \{1,\ldots,n-2\}$.
\item The $n$-eqcol $c' = intro(c,u_1)$.
\item 
$swap_{n,j,j'}(c')$ for each $j,j' \in \{1,\ldots,n-1\}$ such that $j' \neq j$.
\item 
$swap_{n,j}(c')$ for each $j \in \{1,\ldots,n-1\}$.
\item An arbitrary $k$-eqcol of $G$ for each $k \in \{\chi_{eq}, \ldots, n - 2\} \backslash \S$.
\end{enumerate}
\end{trem}

\begin{tthm} \label{TDIM}
The dimension of $\ECP$ is $n^2 - (\chi_{eq} + |\S| + 1)$ and a minimal equation
system is defined by:
\begin{align}
	&\sum_{j = 1}^n x_{vj} = 1, &\forall~ v \in V, \label{TDIM1} \\
	&w_j = 1, &\forall~ 1 \leq j \leq \chi_{eq}, \label{TDIM2} \\
	&w_j = w_{j+1}, &\forall~ j \in \S, & \label{TDIM3} \\
	&\sum_{v \in V} x_{vn} = w_n. & \label{TDIM4}
\end{align}
\end{tthm}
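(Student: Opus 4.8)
The plan is to establish the dimension by the standard two-sided sandwich: a lower bound from a large family of affinely independent points in $\ECP$, and a matching upper bound from a system of valid, linearly independent equations. Remark~\ref{REMPROC} already supplies $n^2 - \chi_{eq} - |\S|$ affinely independent equitable colorings, so $\dim \ECP \geq n^2 - \chi_{eq} - |\S| - 1 = n^2 - (\chi_{eq} + |\S| + 1)$. The whole theorem therefore reduces to showing that (\ref{TDIM1})--(\ref{TDIM4}) are valid for every point of $\ECP$ and are linearly independent; since the ambient space has dimension $n^2 + n$ and these equations number exactly $n + \chi_{eq} + |\S| + 1$, their linear independence forces $\dim \ECP \leq (n^2 + n) - (n + \chi_{eq} + |\S| + 1) = n^2 - (\chi_{eq} + |\S| + 1)$, matching the lower bound and at the same time proving minimality.

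First I would check validity. Equations (\ref{TDIM1}) are just the assignment constraints (\ref{RASSIGN}). For (\ref{TDIM2}), every equitable coloring uses at least $\chi_{eq}$ colors, so by the ordering constraints (\ref{RORDER}) we have $w_1 = \cdots = w_{\chi_{eq}} = 1$. For (\ref{TDIM3}) I first record that $\S \subseteq \{\chi_{eq}+1, \ldots, n-2\}$, since $\chi_{eq} \notin \S$ by definition and $n-1,n \notin \S$ because every $(n-1)$- and $n$-coloring is equitable; then for $j \in \S$ no feasible point can satisfy $w_j = 1$ and $w_{j+1} = 0$, as that would mean exactly $j$ colors are used, forcing $w_j = w_{j+1}$. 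The only substantive case is (\ref{TDIM4}): when $w_n = 1$, the ordering constraints give $w_1 = \cdots = w_n = 1$, so all $n$ colors are used and the equity constraints force each class to contain exactly $\lfloor n/n\rfloor = \lceil n/n\rceil = 1$ vertex, whence $\sum_v x_{vn} = 1 = w_n$; when $w_n = 0$, each non-isolated vertex has $x_{vn}=0$ by (\ref{RADYAC}) and each isolated vertex has $x_{vn}=0$ by (\ref{RISOL}), so $\sum_v x_{vn} = 0 = w_n$.

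Next I would prove linear independence by reading off coefficients. Consider a vanishing linear combination with coefficients $\alpha_v$ on (\ref{TDIM1}), $\beta_j$ on (\ref{TDIM2}), $\gamma_j$ on (\ref{TDIM3}) and $\delta$ on (\ref{TDIM4}). Inspecting the coefficient of $x_{v1}$, a variable appearing only in the $v$-th copy of (\ref{TDIM1}), gives $\alpha_v = 0$ for every $v$; then the coefficient of $x_{vn}$ gives $\alpha_v + \delta = 0$, so $\delta = 0$. What remains is the identity $\sum_{j\le\chi_{eq}} \beta_j w_j + \sum_{j\in\S}\gamma_j(w_j - w_{j+1}) = 0$ of linear forms. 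Since $\S \subseteq \{\chi_{eq}+1,\ldots,n-2\}$, no $w_j$ with $j \le \chi_{eq}$ is touched by a (\ref{TDIM3}) term, so each $\beta_j$ vanishes; finally the difference vectors $w_j - w_{j+1}$ for $j \in \S$ are linearly independent, which one sees by peeling off the smallest index $s\in\S$, whose coordinate $w_s$ is hit by no other term (as $s-1\notin\S$) and so forces $\gamma_s = 0$, iterating until all $\gamma_j = 0$.

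The only genuinely delicate point is the validity of (\ref{TDIM4}); everything else is bookkeeping, and the heavy lifting behind the lower bound has been absorbed into Remark~\ref{REMPROC}. Once validity and independence are established, the equality of the two dimension bounds shows that the affine hull of $\ECP$ is contained in, and of equal dimension to, the solution set of (\ref{TDIM1})--(\ref{TDIM4}); hence the two affine subspaces coincide, so this system is a minimal equation system, completing the proof.
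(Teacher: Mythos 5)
Your proposal is correct and follows essentially the same route as the paper's proof: the lower bound comes from the $n^2 - \chi_{eq} - |\S|$ affinely independent colorings of Remark~\ref{REMPROC}, and the upper bound from noting that $\ECP \subset \mathbb{R}^{n^2+n}$ satisfies the $n + \chi_{eq} + |\S| + 1$ mutually independent equations (\ref{TDIM1})--(\ref{TDIM4}); you merely spell out the validity and linear-independence checks that the paper asserts without detail. One cosmetic point: the equity constraints (\ref{RLOWER})--(\ref{RUPPER}) are indexed only by $j \leq n-1$, so when $w_n = 1$ they force $\sum_{v \in V} x_{vj} = 1$ only for $j \leq n-1$; the count $\sum_{v \in V} x_{vn} = 1$ for the $n$th class then follows by subtraction using (\ref{RASSIGN}).
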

\begin{proof}
From Remark \ref{REMPROC}, $dim(\ECP) \geq n^2 - (\chi_{eq} + |\S| + 1)$. We only need to note that
$\ECP \subset \mathbb{R}^{n^2 + n}$ and that every equitable
coloring satisfies $n + \chi_{eq} + \S + 1$ mutually independent equalities given in (\ref{TDIM1})-(\ref{TDIM4}).
\end{proof}

Let us analyze the faces of $\ECP$ defined by restrictions of the formulation.
For non-negativity constraints and inequalities (\ref{RORDER}) we adapt the proofs given in \cite{BCCOL}
for $\CP$.

\begin{tthm} \label{TORIGINAL1}
Let $v \in V$ and $1 \leq j \leq n$. Constraint $x_{vj} \geq 0$ defines a facet of $\ECP$.
\end{tthm}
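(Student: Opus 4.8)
The plan is to use the indirect (equation) method. Set $F=\{(x,w)\in\ECP : x_{vj}=0\}$; since there plainly exist equitable colorings that do not assign color $j$ to $v$, $F$ is a nonempty proper face, so it suffices to show that every linear equation $\sum_{u,i}\pi_{ui}x_{ui}+\sum_i\rho_iw_i=\pi_0$ satisfied by all points of $F$ is a linear combination of the minimal equation system (\ref{TDIM1})--(\ref{TDIM4}) of Theorem \ref{TDIM} together with $x_{vj}=0$. Granting this, the affine hull of $F$ is cut out by that system plus the single extra independent equality $x_{vj}=0$, whence $\dim F=\dim\ECP-1$ and the inequality is facet-defining.

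The central step exploits that \emph{every} bijection $c\colon V\to\{1,\dots,n\}$ is an $n$-eqcol, since singleton color classes are automatically proper and equitable. Restricting attention to those $n$-eqcols with $c(v)\neq j$ (all of which lie in $F$), I would compare two such colorings differing by a single transposition of the colors of two vertices $p$ and $q$: invariance of the equation forces $\pi_{pa}-\pi_{pb}=\pi_{qa}-\pi_{qb}$ for the colors $a,b$ involved. Ranging over all admissible transpositions — which remain inside $F$ as long as $v$ never acquires color $j$ — this yields $\pi_{ui}=g(u)+h(i)$ for every cell $(u,i)\neq(v,j)$, the single pair $(v,j)$ being precisely the one never realized in $F$ and hence leaving $\pi_{vj}$ unconstrained. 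Subtracting $g(u)$ times each assignment equation (\ref{TDIM1}) normalizes this to $\pi_{ui}=h(i)$ for all $(u,i)\neq(v,j)$ and $\pi_{vj}=h(j)+\mu$, where the scalar $\mu$ will become the coefficient of $x_{vj}$.

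The remaining step is to pin down $h$ and $\rho$. Writing $S_i=\sum_{u}x_{ui}$ for the size of class $i$, the reduced equation reads $\sum_i h(i)S_i+\mu x_{vj}+\sum_i\rho_i w_i=\pi_0$. Comparing the base $(n-1)$-eqcol $c$ of Remark \ref{REMPROC} with its swaps $swap_{n-1,j'}(c)$, chosen so that both colorings keep $v$ off color $j$, shows $h(j')=h(n-1)$ for all $j'\le n-2$, i.e. $h$ is constant on $\{1,\dots,n-1\}$; the remaining relation tying $h(n)$, this constant and $\rho_n$ is absorbed by $\sum_v x_{vn}=w_n$, equation (\ref{TDIM4}). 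What then survives is a pure equation in $w_1,\dots,w_{n-1}$, and since every attainable number of colors can be realized by a coloring inside $F$ (apply a color swap to move $v$ off $j$), this equation holds on all of $\ECP$ and, by the minimality asserted in Theorem \ref{TDIM}, must be a combination of (\ref{TDIM2}) and (\ref{TDIM3}). This closes the argument.

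I expect the main obstacle to be the bookkeeping in the second step: one must verify that the transpositions and swaps used to compare coefficients can always be performed \emph{within} the face $x_{vj}=0$, so that exactly the cell $(v,j)$ — and no other — escapes the relations $\pi_{ui}=g(u)+h(i)$. Handling the boundary cases (when $v$ is the doubled vertex, or when $j\in\{n-1,n\}$) is the part requiring care, though each is routine given $n\ge5$ and $2\le\chi_{eq}\le n-2$.
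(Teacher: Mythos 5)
Your proposal is correct in its essentials, but it takes a genuinely different route from the paper. The paper proves this theorem by the \emph{direct} method: it exhibits $n^2 - \chi_{eq} - |\S| - 1$ affinely independent equitable colorings lying on the face, obtained by adapting the list of Remark \ref{REMPROC} (choosing the base $(n-1)$-eqcol so that $v$ avoids color $j$, and discarding exactly one coloring from the list — $swap_{n,j,c(v)}(intro(c,u_1))$ when $j \leq n-2$, or $intro(c,v)$ when $j=n$ — with a separate swap construction for $j=n-1$); the affine-independence burden is carried by the machinery already established in Remark \ref{REMPROC}. You instead use the \emph{indirect} (equation) method, which is in fact the technique the paper reserves for its harder facet proofs in the online appendix (Remark \ref{TECHNIQUE} and Theorems \ref{T2RANK1}--\ref{TONLYCOLORS}). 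Your key steps all check out: every bijection $V\to\{1,\dots,n\}$ is an $n$-eqcol, and the $2\times 2$ exchange relations among those avoiding $v\mapsto j$ do force $\pi_{ui}=g(u)+h(i)$ on all cells except $(v,j)$ (fix a reference row $u_0\neq v$ and column $i_0\neq j$); the $(n-1)$-eqcol swaps force $h$ constant on $\{1,\dots,n-1\}$ (for $j=n-1$ one needs two base colorings with different values of $c(v)$, which is the boundary care you flag — it works because $G-v$ is not complete, by the paper's standing assumption excluding induced $K_{n-1}$); the residual $w$-only equation is valid on all of $\ECP$ because any $k$-eqcol can be swapped into the face, and a $w$-only equation valid on $\ECP$ must have zero coefficients on (\ref{TDIM1}) and (\ref{TDIM4}), hence is a combination of (\ref{TDIM2})--(\ref{TDIM3}). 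What each approach buys: the paper's proof is shorter here because Remark \ref{REMPROC} already packages the independence argument, while yours avoids any affine-independence verification at the cost of the case bookkeeping you acknowledge, and has the side benefit of unifying this theorem with the proof technique used throughout the appendix.
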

\begin{proof}
We exhibit $n^2 - \chi_{eq} - |\S| - 1$ affinely independent colorings that lie on the face of $\ECP$ defined by inequality $x_{vj} \geq 0$.
Let us consider the following cases:\\
\textbf{Case $j \leq n - 2$}. Let $u_1, u_2 \in V \backslash \{v\}$ be non adjacent vertices and let $c$ be a $(n-1)$-eqcol such that $c(v) \neq j$
and $C_{n-1} = \{u_1, u_2\}$. We consider the set of colorings given by Remark \ref{REMPROC} starting with $c$ and choosing the arbitrary $k$-eqcols 
in item 6 satisfying that vertex $v$ is not painted with color $j$.
It is clear that all these colorings, except $swap_{n,j,c(v)}(c')$ where $c'=intro(c,u_1)$, lie in the face defined by the inequality.\\
\textbf{Case $j = n - 1$}. Let $S$ be the set of $n$-eqcols and $(n-1)$-eqcols presented in the previous case for
$j = n-2$. We consider the colorings $swap_{n-1,n-2}(\tilde{c})$ for each $\tilde{c} \in S$ and an arbitrary $k$-eqcol of $G$ for each
$k \in \{\chi_{eq}, \ldots, n - 2\} \backslash \S$.\\
\textbf{Case $j = n$}. Let $u_2$ be a vertex not adjacent to $v$. We consider the set of colorings given by
Remark \ref{REMPROC} starting with a $(n-1)$-eqcol $c$ such that $C_{n-1} = \{v, u_2\}$. It is clear that all these
colorings, except $intro(c,v)$, lie in the face defined by the inequality. 
\end{proof}

Let $1 \leq j \leq n-1$ and $\F$ be the face of $\ECP$ defined by constraint (\ref{RORDER}), \ie $w_{j+1} \leq w_j$.
Let us notice that, if $G$ does not admit a $j$-eqcol, \ie $j \in \{1,\ldots,\chi_{eq}-1\} \cup \S$, then (\ref{RORDER}) is
a linear combination of equations of the minimal system and, therefore, $\F = \ECP$. In addition, if $j = n-1$, the class of color $n-1$
of every coloring $(x,w)$ satisfying $w_n = w_{n-1}$ have at most one vertex and, therefore, $(x,w)$ verifies
$\sum_{v \in V} x_{vn-1} = w_{n-1}$. Then, $\F$ is not a facet of $\ECP$. For the remaining cases, we have the following result. 

\begin{tthm} \label{TORIGINAL2}
If $G$ admits a $j$-eqcol and $j \leq n-2$, constraint $w_{j+1} \leq w_j$ defines a facet of $\ECP$.
\end{tthm}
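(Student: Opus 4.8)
The plan is to exhibit $n^2 - \chi_{eq} - |\S| - 1 = \dim(\ECP)$ affinely independent equitable colorings lying on the face $\F$ defined by $w_{j+1} = w_j$; by Theorem \ref{TDIM} this forces $\dim(\F) = \dim(\ECP) - 1$, and together with the fact that $\F$ is a proper face it gives that the constraint is facet-defining. The guiding observation is that, under the ordering constraints (\ref{RORDER}), a coloring $(x,w)$ lies on $\F$ exactly when the number of colors it uses is different from $j$; equivalently, the colorings of $\ECP$ that violate $w_{j+1}=w_j$ are precisely the $j$-eqcols.

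First I would translate the hypothesis. Since $G$ admits a $j$-eqcol we have $\chi_{eq} \leq j$ and $j \notin \S$, and combined with $j \leq n-2$ this yields $j \in \{\chi_{eq},\ldots,n-2\}\backslash\S$. Hence the list of $n^2 - \chi_{eq} - |\S|$ affinely independent colorings furnished by Remark \ref{REMPROC} contains exactly one $j$-eqcol, namely the arbitrary $k$-eqcol selected in item 6 for $k = j$.

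Next I would verify that every other coloring of that list lies on $\F$. The colorings in items 1 and 2 use $n-1$ colors, and those in items 3, 4 and 5 use $n$ colors; since $j \leq n-2$, each of them uses strictly more than $j$ colors, so $w_j = w_{j+1} = 1$. The remaining colorings of item 6 use $k \neq j$ colors and therefore also satisfy $w_j = w_{j+1}$. Discarding the single $j$-eqcol thus leaves $n^2 - \chi_{eq} - |\S| - 1$ colorings, all contained in $\F$, and these remain affinely independent because any subset of an affinely independent set is affinely independent.

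Finally I would record that $\F$ is a proper face: the discarded $j$-eqcol belongs to $\ECP$ and satisfies $w_j = 1 > 0 = w_{j+1}$, so it lies in $\ECP \backslash \F$ and $\dim(\F) \leq \dim(\ECP) - 1$. With the $\dim(\ECP)$ affinely independent points on $\F$ this gives $\dim(\F) = \dim(\ECP) - 1$, as required. I do not anticipate a genuine obstacle here; the whole argument reduces to reusing Remark \ref{REMPROC} and removing one point. The only item demanding care is the case check that none of the retained colorings accidentally uses exactly $j$ colors, which is exactly what the bound $j \leq n-2$ secures.
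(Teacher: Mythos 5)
Your proposal is correct and follows essentially the same route as the paper: take the $n^2 - \chi_{eq} - |\S|$ affinely independent colorings of Remark \ref{REMPROC}, discard the unique $j$-eqcol appearing in item 6 (which exists precisely because $G$ admits a $j$-eqcol and $j \leq n-2$), and observe that the remaining $n^2 - \chi_{eq} - |\S| - 1$ colorings all satisfy $w_j = w_{j+1}$. The paper states this more tersely, while you additionally spell out the case check and the properness of the face, but there is no substantive difference.
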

\begin{proof}
Let us consider the set of colorings from Remark \ref{REMPROC} but excluding the $j$-eqcol from item 6.
Clearly, the remaining colorings lie on the face and (\ref{RORDER}) defines a facet of $\ECP$.
\end{proof}

The following theorems are related to the faces of $\ECP$ defined by the equity constraints.

\begin{tthm} \label{TORIGINAL3}
Let $1 \leq j \leq n-1$. Constraint
\[ \sum_{v \in V} x_{vj} \geq \sum_{k=j}^n \biggl\lfloor \frac{n}{k} \biggr\rfloor (w_k - w_{k+1}) \]
defines a facet of $\ECP$.
\end{tthm}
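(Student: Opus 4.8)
The plan is to establish the facet by the same affine-independence technique used for Theorems \ref{TORIGINAL1} and \ref{TORIGINAL2}: exhibit $dim(\ECP) = n^2 - \chi_{eq} - |\S| - 1$ affinely independent equitable colorings lying on the face $\F$ defined by the inequality, together with a single coloring that violates the corresponding equality, so that $\F$ is a proper face. First I would reinterpret the constraint combinatorially. For a coloring using exactly $k$ colors, every factor $w_{k'} - w_{k'+1}$ in the right-hand side vanishes except the one at $k' = k$, so the sum collapses to the single surviving term $\lfloor n/k \rfloor$; hence the inequality reads $|C_j| \geq \lfloor n/k \rfloor$ when $k \geq j$, and reduces to $0 \geq 0$ when $k < j$ (color $j$ is then unused and every in-range factor is $0$). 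Thus a coloring lies on $\F$ exactly when its color class $j$ attains the minimum admissible cardinality $\lfloor n/k \rfloor$, or when it uses fewer than $j$ colors.

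Next I would test the explicit colorings of Remark \ref{REMPROC} against this criterion. Every $n$-eqcol (items 3--5) has all classes of size $1 = \lfloor n/n \rfloor$, so all of them lie on $\F$. Among the $n-1$ colorings of items 1--2, exactly one has its unique doubled class at position $j$ --- namely $swap_{n-1,j}(c)$ when $j \leq n-2$, and $c$ itself when $j = n-1$ --- and this is the only one with $|C_j| = 2 \neq 1$; it serves as our violating point, while the remaining $n-2$ have $C_j$ a singleton and hence lie on $\F$. For the colorings of item 6, those using $k < j$ colors lie on $\F$ automatically, and for those with $k \geq j$ I would choose the arbitrary $k$-eqcol so that $C_j$ is a class of minimum cardinality $\lfloor n/k \rfloor$. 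This is always possible by swapping a minimum-size class into color $j$ (such a class always exists, and swapping preserves equitability); since the item-6 colorings are distinguished from one another and from the higher colorings by their $w$-vectors, this particular choice does not disturb the affine independence asserted in Remark \ref{REMPROC}.

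Removing the single violating coloring from the full Remark list leaves exactly $n^2 - \chi_{eq} - |\S| - 1$ colorings, all on $\F$; being a subset of an affinely independent family they remain affinely independent, and the removed coloring certifies $\F \neq \ECP$. This gives $dim(\F) = dim(\ECP) - 1$, i.e. a facet.

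I expect the only real subtlety to be the case split at $j = n-1$, where the base coloring $c$ rather than a swap of it is the one off the face, together with the verification that in item 6 a minimum-cardinality class can always be relabelled to color $j$ without breaking affine independence. The arithmetic of the right-hand side and the inheritance of affine independence from Remark \ref{REMPROC} are then routine, once the combinatorial reading $|C_j| = \lfloor n/k \rfloor$ is in hand.
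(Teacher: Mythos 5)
Your proposal is correct and follows essentially the same route as the paper: both take the Remark \ref{REMPROC} family built from a $(n-1)$-eqcol whose doubled class is $\{u_1,u_2\}$, choose the item-6 $k$-eqcols so that $|C_j| = \lfloor n/k \rfloor$ whenever $k \geq j$, and discard the single $(n-1)$-eqcol whose class $j$ is the doubled pair, leaving $n^2 - \chi_{eq} - |\S| - 1$ affinely independent colorings on the face. Your write-up merely makes explicit what the paper leaves implicit (the combinatorial reading of the right-hand side, the $j = n-1$ case split, and the properness certificate), so there is no substantive difference.
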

\begin{proof}
Let $u_1, u_2$ be non adjacent vertices and let $c$ be a $(n-1)$-eqcol $c$ such that $C_{n-1} = \{u_1, u_2\}$.
We consider the set of colorings given by Remark \ref{REMPROC} starting with $c$ and choosing $k$-eqcols in
item 6 satisfying $|C_j|=\lfloor n/k \rfloor$ when $k \geq j$. The proposed colorings, except the $(n-1)$-eqcol
that satisfies $C_j = \{u_1, u_2\}$, lie on the face and therefore (\ref{RLOWER}) defines a facet of $\ECP$.
\end{proof}

Let us observe that if $1 \leq j \leq n-2$, the face of $\ECP$ defined by (\ref{RUPPER}) is not a facet. Indeed, every coloring $(x,w)$ lying on the face satisfies $\sum_{v \in V} x_{vn-1} = w_{n-1}$. 
For the case $j = n-1$, the constraint (\ref{RUPPER}) is $\sum_{v \in V} x_{vn-1} \leq 2 w_{n-1} - w_n$ and we have: 

\begin{tthm} \label{TORIGINAL4}
The inequality $\sum_{v \in V} x_{vn-1} \leq 2 w_{n-1} - w_n$ defines a facet of $\ECP$.
\end{tthm}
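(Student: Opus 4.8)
The plan is to use the direct method. By Theorem~\ref{TDIM} we have $\dim(\ECP) = n^2 - \chi_{eq} - |\S| - 1$, and the inequality is precisely constraint (\ref{RUPPER}) for $j = n-1$, so it is valid; it is also a proper face, since e.g. $swap_{n-1,1}(c)$ has $|C_{n-1}| = 1 < 2$ and hence violates equality. It therefore suffices to exhibit $n^2 - \chi_{eq} - |\S| - 1$ affinely independent equitable colorings lying on the face $\F = \{(x,w) \in \ECP : \sum_{v \in V} x_{v,n-1} = 2w_{n-1} - w_n\}$.

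First I would characterize $\F$. Writing the constraint as $|C_{n-1}| = 2w_{n-1} - w_n$, one checks three cases: every coloring using at most $n-2$ colors lies on $\F$ (both sides vanish); every $n$-coloring lies on $\F$ (each side equals $1$); and an $(n-1)$-coloring lies on $\F$ exactly when its unique $2$-element class is $C_{n-1}$. Thus the only colorings off $\F$ are the $(n-1)$-colorings with $|C_{n-1}| = 1$.

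Next I would start from the affinely independent family of Remark~\ref{REMPROC} built on a base $(n-1)$-eqcol $c$ with $C_{n-1} = \{u_1,u_2\}$, and write $C_i = \{v_i\}$ for $1 \le i \le n-2$. By the characterization, items 1, 3, 4, 5 and 6 already lie on $\F$, whereas each $swap_{n-1,j}(c)$ of item 2 has $C_{n-1} = C_j = \{v_j\}$, hence $|C_{n-1}| = 1$ and is off $\F$. Removing these $n-2$ colorings leaves $n^2 - \chi_{eq} - |\S| - (n-2)$ affinely independent colorings on $\F$, so I need $n-3$ further ones. For these I would keep the $2$-class in position $n-1$ and only permute singleton classes, taking $\hat c_i := swap_{i,i+1}(c)$ for $1 \le i \le n-3$. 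Each $\hat c_i$ is an $(n-1)$-eqcol with $C_{n-1} = \{u_1,u_2\}$, hence lies on $\F$, and the total count is exactly $n^2 - \chi_{eq} - |\S| - (n-2) + (n-3) = n^2 - \chi_{eq} - |\S| - 1$.

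The main obstacle is verifying that the augmented family is affinely independent, which I would do through the difference vectors from the base $c$. The differences $\hat c_i - c$ lie entirely in the $x$-block, are supported on columns $i$ and $i+1$ among the singleton vertices, and each carries the coordinate $x_{v_{i+1},i}$ that no other $\hat c_{i'} - c$ touches; this gives their mutual linear independence. To separate the $(n-1)$-colorings $\{c,\hat c_i\}$ from the $n$-colorings (items 3, 4, 5) and the $(\le n-2)$-colorings (item 6), I would first use the $w$-coordinates, which distinguish the three groups by the number of colors used ($w_n$ for the $n$-colorings, $w_{n-1}$ for item 6). The delicate point is that all $n$-colorings share $w = \mathbf{1}$, so $w$ alone does not finish the separation; here I expect to finish with the same permutation/triangular argument that underlies the affine independence in Remark~\ref{REMPROC} (Proposition~1 of \cite{BCCOL}), now restricted to the retained colorings. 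Combining these observations yields the required $n^2 - \chi_{eq} - |\S| - 1$ affinely independent colorings on $\F$, proving that $\sum_{v \in V} x_{v,n-1} \le 2w_{n-1} - w_n$ defines a facet.
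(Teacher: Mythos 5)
Your face characterization, the validity remark, and the count are all correct, and you rightly observe that item 2 of Remark \ref{REMPROC} must be discarded. However, the proof collapses at exactly the step you label as delicate: the colorings $\hat{c}_i = swap_{i,i+1}(c)$ are \emph{not} new affine directions, because each one already lies in the affine hull of the colorings you keep from items 1, 4 and 5. Concretely, with $c' = intro(c,u_1)$, for every $i \in \{1,\ldots,n-3\}$ one has
\[
\hat{c}_i \;=\; c \;+\; swap_{n,i,i+1}(c') \;+\; swap_{n,i+1,i}(c') \;-\; swap_{n,i}(c') \;-\; swap_{n,i+1}(c'),
\]
an affine combination (coefficients sum to $1$) of five colorings of your family: on the right-hand side the moves of $u_1$ into colors $i$ and $i+1$, the moves of $v_i$ and $v_{i+1}$ into color $n$, and the increments of $w_n$ cancel in pairs, leaving exactly the transposition $v_i \mapsto i+1$, $v_{i+1} \mapsto i$. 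Your supporting observation---that $x_{v_{i+1},i}$ is touched by no other $\hat{c}_{i'} - c$---only separates the $\hat{c}_i$ from one another; that same coordinate is carried by $swap_{n,i,i+1}(c') - c$, which is in the family, and this is precisely where the dependency lives. Moreover the defect cannot be fixed by choosing other permutations of the singleton classes: the same cancellation shows that \emph{every} coloring obtained from $c$ by permuting its singleton classes while keeping $\{u_1,u_2\}$ in class $n-1$ is an affine combination of $c$ and the item 4--5 colorings (for a cycle $(a_1\,a_2\cdots a_m)$ of singleton classes, add the colorings $swap_{n,a_t,a_{t+1}}(c')$ and subtract the colorings $swap_{n,a_t}(c')$). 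Hence your entire family lies in the affine hull of items 1, 3, 4, 5, 6, a set of $n^2 - \chi_{eq} - |\S| - (n-2)$ points, whose affine dimension is at most $n^2 - \chi_{eq} - |\S| - (n-1)$, strictly smaller than the required $n^2 - \chi_{eq} - |\S| - 2$ because $n \geq 5$.

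This obstruction is exactly what the paper's proof works around, and why it looks different from yours. Instead of adding more $(n-1)$-eqcols with $C_{n-1}=\{u_1,u_2\}$, the paper uses $n \geq 5$ and $\chi_{eq} \leq n-2$ to obtain a second non-adjacent pair $\{u_3,u_4\}$, and replaces the discarded colorings by $n-2$ \emph{$(n-2)$-eqcols}: the coloring $\hat{c}$ whose two doubleton classes are $\{u_1,u_2\}$ and $\{u_3,u_4\}$, together with its swaps, while restricting the arbitrary $k$-eqcols of item 6 to $k \leq n-3$. Colorings with two doubleton classes (and $w_{n-1}=0$) cannot be generated affinely from $c$, $c'$ and their swaps, and this is what supplies the missing dimensions. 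Any correct repair of your argument needs additional colorings of that kind, not further permutations of the singleton classes of $c$.
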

\begin{proof}
Since $n \geq 5$ and $\chi_{eq}(G) \leq n-2$ there exist $u_1, u_2, u_3, u_4, u_5 \in V$ such that
$u_1$ is not adjacent to $u_2$ and $u_3$ is not adjacent to $u_4$. Let $c$ be a $(n-1)$-eqcol $c$
such that $c(u_1) = c(u_2) = n-1$.
We consider the colorings from items 1,3,4,5 in Remark \ref{REMPROC} together with the following ones:
\begin{itemize}
\item The $(n-2)$-eqcol $\hat{c}$ such that $\hat{c}(u_1) = \hat{c}(u_2) = c(u_3)$,
$\hat{c}(u_3) = c(u_4)$ and $\hat{c}(i) = c(i) ~~\forall~i \in V \backslash \{u_1, u_2, u_3\}$.
\item 
$swap_{j,c(u_3)}(\hat{c})$ for each $j \in \{1, \ldots, n-2\} \backslash \{c(u_3),c(u_4)\}$. 
\item 
$swap_{\hat{c}(u_5),c(u_4)}(\hat{c})$.
\item An arbitrary $k$-eqcol of $G$ for each $k \in \{\chi_{eq}, \ldots, n - 3\} \backslash \S$. 
\end{itemize}
The proof for the affine independence of the previous $n^2 - \chi_{eq} - |\S| - 1$ colorings is similar to the one for the colorings generated
in Remark \ref{REMPROC}.
\end{proof}

\subsection{Valid inequalities from $\CP$}

Taking into account that valid inequalities for $\CP$ are also valid for $\ECP$, 
in this section we analyze the faces of $\ECP$ defined by facet-defining inequalities of $\CP$. 


One of the families of valid inequalities presented in \cite{BCCOL} is the following. Given a vertex $v$ and a
color $j$, the \emph{$(v,j)$-block inequality} is $\sum_{k=j}^n x_{vk} \leq w_j$.

Let us observe that the $(v,1)$-block inequality is always satisfied by equality since every coloring $(x,w)$
verifies constraints (\ref{RASSIGN}) and $w_1 = 1$. Moreover, the $(v,2)$-block inequality defines the same
facet as inequality $x_{v1} \geq 0$. For the
remaining cases we have:

\begin{tthm} \label{TBLOCK}
Let $v \in V$ and $3 \leq j \leq n - 2$. The $(v,j)$-block inequality defines
a facet of $\ECP$ if and only if $G$ admits a $(j-1)$-eqcol.
\end{tthm}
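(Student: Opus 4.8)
The plan is to prove both directions of the equivalence.

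For the necessity direction, suppose the $(v,j)$-block inequality $\sum_{k=j}^n x_{vk} \leq w_j$ defines a facet. I would argue contrapositively: if $G$ does not admit a $(j-1)$-eqcol, then $j-1 \in \{1,\ldots,\chi_{eq}-1\} \cup \mathcal{S}$, and I would show that on the face defined by the block inequality, an additional equation forced by the minimal system (or derived from it) holds, so the face cannot be a facet. The key observation is that equality $\sum_{k=j}^n x_{vk} = w_j$ together with the assignment constraint \eqref{RASSIGN} forces $\sum_{k=1}^{j-1} x_{vk} = 1 - w_j$; when no $(j-1)$-eqcol exists one can relate $w_{j-1} = w_j$ (via \eqref{TDIM3} if $j-1 \in \mathcal{S}$, or $w_{j-1}=1=w_j$ reasoning if $j-1 < \chi_{eq}$) and deduce a second independent equality holds on the whole face, showing its dimension is too small.

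For the sufficiency direction, suppose $G$ admits a $(j-1)$-eqcol; I must exhibit $n^2 - \chi_{eq} - |\mathcal{S}| - 1$ affinely independent equitable colorings lying on the face $\sum_{k=j}^n x_{vk} = w_j$. The strategy mirrors the earlier facet proofs: I would start from the collection in Remark \ref{REMPROC} and tailor the choices so that every coloring satisfies the block equality with equality. A coloring satisfies $\sum_{k=j}^n x_{vk} = w_j$ precisely when either $v$ receives a color in $\{j,\ldots,n\}$ (and $w_j=1$), or $w_j = 0$ and $v$ receives a color in $\{1,\ldots,j-1\}$. So I would build the base $(n-1)$-eqcol $c$ so that $c(v) = j$ (which is feasible since $j \leq n-2$ and we have freedom in placing $v$), choose the two-vertex class $C_{n-1}=\{u_1,u_2\}$ disjoint from $v$, and verify that the swaps and $intro$ operators preserve $c(v) \in \{j,\ldots,n\}$ whenever $w_j = 1$. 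For the item-6 colorings with $k \leq n-2$, I would select arbitrary $k$-eqcols in which $v$ is colored with a color in $\{j,\ldots,k\}$ for $k \geq j$ (possible since such colorings exist by reassigning colors), and handle the $k \leq j-1$ cases—here $w_j=0$—using the assumed existence of a $(j-1)$-eqcol so that a valid coloring satisfying the equality with $w_j=0$ is available.

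The main obstacle I anticipate is the careful bookkeeping in the sufficiency direction: ensuring that every one of the $n^2 - \chi_{eq} - |\mathcal{S}| - 1$ colorings simultaneously (i) lies on the block face and (ii) is affinely independent of the others. In particular, the swap operators $swap_{n-1,j'}$, $swap_{n,j',j''}$ and $swap_{n,j'}$ applied to the base coloring may move vertex $v$ between color classes, and I must confirm that these moves either keep $c(v) \geq j$ (preserving the equality when $w_j=1$) or can be replaced by analogous colorings that do. The delicate point is the boundary between the regime $w_j=1$ (most colorings) and the few item-6 colorings with $k < j$ where $w_j = 0$; splicing these two families together while maintaining affine independence is exactly where the $(j-1)$-eqcol hypothesis is consumed, and I expect the verification of independence across this boundary to require the same Proposition~1 argument from \cite{BCCOL} invoked in Remark \ref{REMPROC}, adapted to track the extra coordinate $\sum_{k=j}^n x_{vk}$.
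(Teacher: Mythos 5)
Your overall two-direction structure matches the paper's, and your sufficiency sketch is in the same spirit as the paper's own proof (which simply takes the affinely independent colorings of Proposition 10 of \cite{BCCOL} and imposes equitability). The genuine problem is in your necessity direction. Every equation you manipulate there is already a linear consequence of equations you have in hand: $\sum_{k=1}^{j-1} x_{vk} = 1 - w_j$ is nothing but the block equality combined with (\ref{RASSIGN}), and $w_{j-1} = w_j$ lies in the span of the minimal system, via (\ref{TDIM2}) when $j-1 < \chi_{eq}$ or (\ref{TDIM3}) when $j-1 \in \S$. No amount of linear combination of equations in this span can ever produce a ``second independent equality,'' so the mechanism you describe cannot close the proof. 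What is needed is a combinatorial statement about the \emph{integer points} of the face, and this is exactly what the paper proves: since $G$ admits no $(j-1)$-eqcol, any $k$-eqcol on the face has either $k \leq j-2$, in which case $x_{vj-1} = 0$ trivially, or $k \geq j$, in which case $w_j = 1$ and the block equality forces $v$ to receive a color at least $j$, again giving $x_{vj-1} = 0$. The equation $x_{vj-1} = 0$ \emph{is} independent of the minimal system together with the block equality (no other equation involves the variables $x_{vk}$, $k \leq j-2$, individually), so the face is not a facet. Your ingredients can be reassembled into precisely this case analysis, but as written the target equation is never identified and the purely linear route you propose is impossible in principle.

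There is also a quantitative gap in the sufficiency direction that you sense but do not resolve. With the base $(n-1)$-eqcol chosen so that $c(v) = j$, the item-4 colorings $swap_{n,j_1,j}(intro(c,u_1))$ with $j_1 < j$ move $v$ to color $j_1 < j$ while $w_j = 1$, hence fall off the face; discarding them leaves you $j-2$ colorings short of the required $n^2 - \chi_{eq} - |\S| - 1$. The replacements must be colorings with $w_j = 0$ in which $v$ takes each color below $j$, and since a $k$-eqcol with $k \leq j-2$ can never assign $v$ the color $j-1$, you are forced to use $(j-1)$-eqcols, suitably swapped so that $v$ runs through the colors $1, \ldots, j-1$. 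This, rather than merely supplying ``a valid coloring with $w_j = 0$'' (the item-6 colorings with $k \leq j-2$ already do that), is where the hypothesis is consumed; without it the missing directions $x_{vj-1}$ cannot be generated, consistently with the necessity argument above.
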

\begin{proof}
Let $\F$ be the face of $\ECP$ defined by the $(v,j)$-block inequality.
To prove that $\F$ is a facet of $\ECP$ when $G$ admits a $(j-1)$-eqcol, we can use the same affinely independent
colorings proposed in the proof of Proposition 10 of \cite{BCCOL}, by imposing them to be equitable colorings. 
\\
Now, let us suppose that $G$ does not admit a $(j-1)$-eqcol. We will prove that every equitable coloring lying on
the face satisfies $x_{vj-1} = 0$. Let $(x,w)$ be a $k$-eqcol lying on $\F$.
If $k \leq j-2$, clearly $x_{vj-1} = 0$. Otherwise, $\sum_{k=j}^n x_{vk}=1$ since $k \neq j-1$, and then
$x_{vj-1} = 0$.
\end{proof}

Let us consider other family of inequalities studied in \cite{BCCOL}.
Given $S \subset V$ and a color $j$, $\sum_{v\in S} x_{vj} \leq \alpha(S) w_j$ is valid for $\CP$.
The authors of \cite{BCCOL} proved that, by applying a lifting procedure on this inequality for $j \leq n-\alpha(S)$,
we can get
\[ \sum_{v \in S} x_{vj} + \sum_{v \in V} \sum_{k=n-\alpha(S)+1}^{n-1} x_{vk} \leq \alpha(S) w_j
	+ w_{n-\alpha(S)+1} - w_n. \]
We will refer to it as the $(S,j)$-\emph{rank inequality}. 

Let us remark that, if $S$ is not $\alpha(S)$-maximal, \ie if there exists $v\in V\backslash S$ such that $\alpha(S \cup \{v\}) = \alpha(S)$, the $(S,j)$-rank inequality is dominated by the
$(S \cup \{v\})$-rank inequality. Then, from now on, we only consider $(S,j)$-rank inequalities where $S$ is $\alpha(S)$-maximal. 


When $\alpha(S) = 1$, 
the $(S,j)$-rank inequality takes the form
$\sum_{v \in S} x_{vj} \leq w_j$ and is called \emph{$(S,j)$-clique inequality}.
If $|S| = 1$, \ie $S = \{v\}$ for some $v$, the $(S,j)$-clique inequality is dominated by
the $(v,j)$-block inequality. If $|S| \geq 2$, Propositions 5 and 6 of \cite{BCCOL} state that the
$(S,j)$-clique inequality defines a facet of $\CP$. The proof of these propositions can be easily
adapted to the equitable case allowing us to prove the following result.

\begin{tthm} \label{TCLIQUE}
Let $Q$ be a maximal clique of $G$ with $|Q| \geq 2$ and $j \leq n-1$.
The $(Q,j)$-clique inequality defines a facet of $\ECP$.
\end{tthm}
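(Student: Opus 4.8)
The plan is to establish that the $(Q,j)$-clique inequality $\sum_{v \in Q} x_{vj} \leq w_j$ defines a facet of $\ECP$ by exhibiting $\dim(\ECP) = n^2 - (\chi_{eq} + |\S| + 1)$ affinely independent equitable colorings lying on the face $\F$ it defines, following the adaptation of Propositions 5 and 6 of \cite{BCCOL} to the equitable setting. As with the previous facet proofs in this section, the natural starting point is the structured family of affinely independent colorings from Remark \ref{REMPROC}, which already supplies $n^2 - \chi_{eq} - |\S|$ colorings; I would then remove or modify exactly one of them so that all surviving colorings satisfy $\sum_{v \in Q} x_{vj} = w_j$, leaving the required count of $n^2 - \chi_{eq} - |\S| - 1$.

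First I would fix the base $(n-1)$-eqcol $c$ used to seed Remark \ref{REMPROC}, choosing its color class $C_{n-1} = \{u_1,u_2\}$ to consist of two non-adjacent vertices, and arranging that the vertices of $Q$ are distributed among the color classes so that exactly one vertex of $Q$ receives color $j$. Since $Q$ is a clique, no two of its vertices can share a color, so in any proper coloring the sum $\sum_{v\in Q} x_{vj}$ is either $0$ or $1$; the equality $\sum_{v\in Q}x_{vj} = w_j$ therefore holds precisely when some vertex of $Q$ is colored $j$ and color $j$ is used (note $j \leq n-1$ guarantees $w_j$ can equal $1$). The key bookkeeping step is to verify that each of the six types of colorings in Remark \ref{REMPROC} — the base coloring, its swaps involving colors $n-1$ and $n$, the $intro$-derived $n$-eqcol and its swaps, and the arbitrary $k$-eqcols of item 6 — can be taken (or repaired by a colour swap) so as to keep a vertex of $Q$ painted with colour $j$, so that each lies on $\F$.

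The main obstacle, and the one I would spend the most care on, is item 6: for each $k \in \{\chi_{eq},\ldots,n-2\}\setminus\S$ I need an arbitrary $k$-eqcol in which some vertex of $Q$ actually receives colour $j$, and this must be compatible with $j \leq k$. For values $k < j$ colour $j$ is not even available, so those colorings cannot naively lie on the face; here the remedy is to apply a suitable $swap$ operator to relabel an existing colour into $j$, or equivalently to observe (as in the clique-facet argument of \cite{BCCOL}) that the affine independence is preserved under such relabelings because the swaps act invertibly on the coordinate blocks. I would also confirm that the single coloring I discard from Remark \ref{REMPROC} — the unique one in which no vertex of $Q$ gets colour $j$ — is the correct one to drop, and that its removal does not destroy affine independence of the remainder.

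Finally, I would verify the affine independence of the resulting $n^2 - \chi_{eq} - |\S| - 1$ colorings by the same block-triangular argument used throughout Remark \ref{REMPROC} and the preceding theorems: the swap and $intro$ operators were precisely designed so that successive colorings differ in controlled coordinates, yielding a matrix of full rank after removing the equations of the minimal system from Theorem \ref{TDIM}. Since this part is entirely parallel to the established computations, I would not reproduce it in detail but merely note that imposing the extra constraint ``some vertex of $Q$ is coloured $j$'' only restricts the choices within each item without affecting the independence structure, so the argument of \cite{BCCOL} carries over verbatim to the equitable case.
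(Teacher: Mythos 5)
Your construction breaks down at exactly the step you dismiss as routine bookkeeping. Every coloring in items 1--5 of Remark \ref{REMPROC} uses at least $n-1$ colors, and $j \leq n-1$, so each of them has $w_j = 1$ and lies on the face only if its class of color $j$ contains a vertex of $Q$. But the swaps in items 4--5 move arbitrary classes into position $j$: for instance $swap_{n,j,j_2}(c')$ makes the old class of color $j_2$ the new class of color $j$. Even with the most favorable seed coloring (take $u_1 \in Q$ and exactly one vertex of $Q$ colored $j$), the vertices of $Q$ occupy only $|Q|$ classes of $c'$, so at least $n - |Q| \geq 2$ of the item-4 colorings fail to lie on the face (recall $G$ contains no $K_{n-1}$, hence $|Q| \leq n-2$). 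Your claim that there is a \emph{unique} coloring to discard is therefore false: several colorings must be replaced, not one dropped, and since the replacements are new colorings outside the Remark \ref{REMPROC} family, their affine independence from the retained ones has to be argued afresh; the independence structure does not ``carry over verbatim.''

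The deeper omission is that you never invoke the maximality of $Q$, which is the crux of the theorem rather than a side hypothesis. If $Q$ were not maximal, some $v \notin Q$ would be adjacent to all of $Q$, and then every coloring on the face would satisfy $x_{vj} = 0$ (if $v$ took color $j$, no vertex of $Q$ could), so the face would be contained in the proper face defined by $x_{vj} \geq 0$ and could not be a facet --- consistently with the paper's observation that such inequalities are dominated. Any correct proof must therefore use maximality, and it enters precisely in building the replacement colorings above: for each $v \notin Q$ one needs colorings on the face in which $v$ receives color $j$, which forces $v$ to share the class of color $j$ with some vertex of $Q$, possible exactly because maximality guarantees $v$ has a non-neighbor in $Q$. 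This is what the adaptation of Propositions 5 and 6 of \cite{BCCOL}, which the paper relies on, actually provides. A smaller but telling slip: your treatment of item 6 for $k < j$ is backwards. A $k$-eqcol with $k < j$ lies on the face automatically, since both sides of the inequality vanish; and the remedy you propose --- swapping color $j$ into such a coloring --- is impossible, because $swap$ only permutes the colors $\{1,\ldots,k\}$ actually used, and assigning color $j > k$ would violate constraints (\ref{RORDER}).
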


In Theorem \ref{T2RANK1} of \cite{APPENDIX} we give sufficient conditions for the $(S,j)$-rank inequalities
to define facets of $\ECP$ when $\alpha(S) = 2$.

Other valid inequalities can arise when $\alpha(S) = 2$. Let $Q$ be the set of vertices of $S$ that are universal
in $G[S]$, \ie $Q = \{ q \in S ~:~ S \subset N[q] \}$. If $Q$ is not empty, we may apply a different
lifting procedure that one used in \cite{BCCOL}, obtaining new valid inequalities for $\CP$ and $\ECP$:

\begin{tdef}
The \emph{$(S,Q,j)$-2-rank inequality} is defined for a given $S \subset V$ such that $S$ is 2-maximal,
$Q = \{ q \in S ~:~ S \subset N[q] \} \neq \varnothing$ and $j \leq n - 1$, as
\begin{equation} \label{R2RANK2}
\sum_{v \in S\backslash Q} x_{vj} + 2 \sum_{v \in Q} x_{vj} \leq 2 w_j.
\end{equation}
\end{tdef}

\begin{tlem} \label{T2RANK2VALID}
The $(S,Q,j)$-2-rank inequality is valid for $\ECP$.
\end{tlem}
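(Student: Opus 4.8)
The plan is to verify the inequality directly on every binary vector $(x,w)$ representing an equitable coloring, since $\ECP$ is by definition the convex hull of such vectors, and validity on all of them implies validity on the whole polytope. So I would fix an arbitrary equitable coloring $(x,w)$ and a color $j \leq n-1$, and check that it satisfies (\ref{R2RANK2}). First I would dispose of the trivial case $w_j = 0$: then color $j$ is not used, so $x_{vj} = 0$ for every $v \in V$, and both sides of (\ref{R2RANK2}) vanish.

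The substantive case is $w_j = 1$, where I must show the weighted sum on the left is at most $2$. The key structural observation is that $Q$ is a clique: if $q, q' \in Q$ are distinct, then $q' \in S \subset N[q]$ forces $q'$ to be adjacent to $q$. Consequently at most one vertex of $Q$ can receive color $j$, which is exactly what keeps the doubled term under control.

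From here I would split into two sub-cases. If some $q \in Q$ has $x_{qj} = 1$, then because $S \subset N[q]$ the vertex $q$ is adjacent to every other vertex of $S$, so no other vertex of $S$ can share color $j$; hence $\sum_{v \in S \backslash Q} x_{vj} = 0$ and $\sum_{v \in Q} x_{vj} = 1$, giving a left-hand side of exactly $2$. If instead no vertex of $Q$ is colored $j$, then $\sum_{v \in Q} x_{vj} = 0$ and the left-hand side reduces to $\sum_{v \in S \backslash Q} x_{vj}$; the vertices of $S$ receiving color $j$ form a stable set of $G[S]$, so their number is at most $\alpha(S) = 2$. In either sub-case the left-hand side is at most $2 = 2 w_j$, as required.

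Since the whole argument is a finite case check on integer points, I do not anticipate a genuine obstacle. The only point requiring care is recognizing that the coefficient $2$ on the $Q$-vertices is precisely compensated by the fact that $Q$ is a clique whose vertices dominate $S$: a single $Q$-vertex of color $j$ already saturates the bound while simultaneously excluding every other vertex of $S$ from color $j$, so the two sub-cases cannot reinforce each other.
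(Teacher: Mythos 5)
Your proof is correct and follows essentially the same approach as the paper's: the case where some vertex of $Q$ receives color $j$ excludes the rest of $S$ (since $S \subset N[q]$), and otherwise the vertices of $S$ colored $j$ form a stable set of size at most $\alpha(S) = 2$. Your write-up simply makes explicit the trivial case $w_j = 0$ and the clique structure of $Q$, both of which the paper leaves implicit.
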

\begin{proof}
If some vertex of $Q$ uses color $j$, no one else in $S$ can be painted with $j$.
Therefore, the value of the \lhs in (\ref{R2RANK2}) is at most 2 when color $j$ is used.
\end{proof}

If $|Q| = 1$, the $(S,Q,j)$-2-rank inequality is dominated by another valid inequality presented in the
next section (see Remark \ref{RANKDOMINATED}).

In Theorem \ref{T2RANK2} and Corollary \ref{T2RANK2COL} of \cite{APPENDIX}, we give sufficient conditions for the $(S,Q,j)$-2-rank
inequalities to define facets of $\ECP$ when $|Q| \geq 2$.



\section{New valid inequalities for $\ECP$}  \label{SNEWINEQ}

In this section, we present new families of valid inequalities for $\ECP$ which are not valid for $\CP$.

\subsection{Subneighborhood inequalities}

The \emph{neighborhood inequalities} defined in \cite{BCCOL} for each $u \in V$, \ie
$\alpha(N(u)) x_{uj}$ $+ \sum_{v \in N(u)} x_{vj} \leq \alpha(N(u)) w_j$,
are valid inequalities for $\CP$. Indeed, if $S \subset N(u)$, $\alpha(S) x_{uj} + \sum_{v \in S} x_{vj} \leq \alpha(S) w_j$
is valid for $\CP$. We can reinforce the latter inequality in the context of $\ECP$ to obtain:

\begin{tdef} \label{NEIGHBOR1DEF}
The \emph{$(u,j,S)$-subneighborhood inequality} is defined for a given $u \in V$, $S \subset N(u)$ such that
$S$ is not a clique and $j \leq n - 1$, as
\begin{equation} \label{RNEIGHBOR1}
\gamma_{jS} x_{uj} + \sum_{v \in S} x_{vj} +
\sum_{k = j+1}^n (\gamma_{jS} - \gamma_{kS}) x_{uk} \leq \gamma_{jS} w_j,
\end{equation}
where $\gamma_{kS} = \min \{\lceil n/\chi_{eq} \rceil, \lceil n/k \rceil, \alpha(S)\}$.
\end{tdef}

\begin{tlem} \label{TNEIGHBOR1VALID}
The $(u,j,S)$-subneighborhood inequality is valid for $\ECP$.
\end{tlem}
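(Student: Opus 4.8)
The plan is to prove that the $(u,j,S)$-subneighborhood inequality \eqref{RNEIGHBOR1} is satisfied by every equitable coloring of $G$, by a case analysis on whether vertex $u$ uses color $j$, a higher color, or neither, together with a careful accounting of how many vertices of $S$ can share a color with $u$. The key structural fact is that $S \subset N(u)$, so no vertex of $S$ can receive the same color as $u$: whenever $u$ uses a color, that color is forbidden for all of $S$. Throughout I would work with a fixed $k$-eqcol $(x,w)$ and bound the left-hand side; since the inequality is trivial when color $j$ is unused (then $w_j = 0$ forces all relevant $x$-variables to $0$ by constraints \eqref{RASSIGN}--\eqref{RORDER}), I may assume $w_j = 1$ and must show the left-hand side is at most $\gamma_{jS}$.

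First I would dispose of the case where $u$ is \emph{not} colored with any color in $\{j, j+1, \ldots, n\}$. Then $x_{uk} = 0$ for all $k \geq j$, so the $u$-terms vanish and the left-hand side reduces to $\sum_{v \in S} x_{vj}$, the number of vertices of $S$ painted $j$. Since these vertices form a stable set (they are color-$j$ vertices) inside $S$, this count is at most $\alpha(S)$; and because $(x,w)$ is an \emph{equitable} coloring, any color class has at most $\lceil n/k \rceil \leq \lceil n/\chi_{eq} \rceil$ vertices, so the count is also bounded by $\min\{\lceil n/\chi_{eq}\rceil, \lceil n/k\rceil, \alpha(S)\}$. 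The subtle point is matching this against $\gamma_{jS}$, which uses $\lceil n/j \rceil$-type quantities rather than $\lceil n/k \rceil$; I expect that since $k \geq j$ in the relevant subcase (color $j$ is used, so $w_j=1$ implies at least $j$ colors are active by \eqref{RORDER}), we get $\lceil n/k \rceil \leq \lceil n/j\rceil$ and the bound $\gamma_{jS}$ follows.

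Next I would treat the case where $u$ uses color $j$ itself, i.e. $x_{uj} = 1$. Then the first term contributes $\gamma_{jS}$, every $x_{uk}=0$ for $k > j$, and—crucially—no vertex of $S$ can use color $j$ since $S \subset N(u)$, so $\sum_{v\in S} x_{vj} = 0$. The left-hand side is exactly $\gamma_{jS} = \gamma_{jS} w_j$, meeting the bound with equality. The genuinely delicate case is when $u$ uses some color $k_0 > j$, so $x_{uk_0} = 1$: the first term is $0$, the telescoping term contributes $\gamma_{jS} - \gamma_{k_0 S}$, and again $S$ may freely use color $j$. Here I would bound $\sum_{v \in S} x_{vj} \leq \gamma_{k_0 S}$ using the same stability-plus-equitability argument as before but now applied with the cardinality bound $\gamma_{k_0 S}$ tied to the \emph{actual} number of colors, so that the total is at most $(\gamma_{jS}-\gamma_{k_0 S}) + \gamma_{k_0 S} = \gamma_{jS}$.

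The main obstacle I anticipate is the bookkeeping in this last case: I must verify that $\gamma_{k_0 S}$ is truly a valid upper bound on the number of color-$j$ vertices of $S$ \emph{simultaneously} with $u$ sitting in color class $k_0$, which requires that the coloring uses at least $k_0$ colors (so $\lceil n/k \rceil$ with $k \geq k_0$ gives $\lceil n/k\rceil \leq \lceil n/k_0\rceil$) and that the monotonicity of $k \mapsto \gamma_{kS}$ makes the telescoping coefficients $\gamma_{jS}-\gamma_{kS}$ nonnegative so the inequality degrades gracefully. I would confirm $\gamma_{kS}$ is nonincreasing in $k$ (since $\lceil n/k\rceil$ is) to ensure every coefficient in the sum over $k=j+1,\ldots,n$ is nonnegative, which both validates the lifting and keeps the case analysis consistent. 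Once these monotonicity and cardinality facts are pinned down, the three cases together establish validity.
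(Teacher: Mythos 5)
Your proof is correct and takes essentially the same approach as the paper's: a case analysis on the color assigned to $u$, bounding $\sum_{v \in S} x_{vj}$ by the combined stability/class-size bound $\gamma_{rS}$ (where $r$ is the number of colors actually used, so $r \geq j$ and $r \geq k_0$) and exploiting that $k \mapsto \gamma_{kS}$ is nonincreasing. The only cosmetic difference is that you split the paper's single case $x_{uj}=0$ into two subcases according to whether $u$'s color lies below $j$ or equals some $k_0 > j$; the paper handles both at once by noting the telescoping term contributes at most $\gamma_{jS} - \gamma_{rS}$ regardless of $u$'s color.
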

\begin{proof}
Let $(x,w)$ be an $r$-eqcol of $G$. If $r < j$, both sides of (\ref{RNEIGHBOR1})
are equal to zero. If $r \geq j$ and $x_{uj} = 1$, the value of the \lhs of (\ref{RNEIGHBOR1}) is exactly $\gamma_{jS}$. On the other hand,
if $x_{uj} = 0$, the term $\sum_{v \in S} x_{vj}$ contributes up to $\gamma_{rS}$ and the term
$\sum_{k = j+1}^n (\gamma_{jS} - \gamma_{kS}) x_{uk}$ contributes up to $\gamma_{jS} - \gamma_{rS}$ regardless the
color assigned to $u$. Hence, the \lhs does not exceed $\gamma_{jS}$ and (\ref{RNEIGHBOR1}) is valid.
\end{proof}

Subneighborhood inequalities always define faces of high dimension:

\begin{tthm} \label{TNEIGHBOR1DIM}
Let $\F$ be the face defined by the $(u,j,S)$-subneighborhood inequality.
Then,
\[ dim(\F) \geq 
     dim(\ECP) - \bigl( \lceil n/2 \rceil -1 - |S| +  \delta(u) \bigr) = o(dim(\ECP)). \]
\end{tthm}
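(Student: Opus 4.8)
The statement bounds the dimension of the face $\F$ defined by the $(u,j,S)$-subneighborhood inequality from below by $dim(\ECP) - (\lceil n/2\rceil - 1 - |S| + \delta(u))$. Since $dim(\ECP) = n^2 - (\chi_{eq}+|\S|+1)$ is quadratic in $n$ while the subtracted quantity is linear, the equality with $o(dim(\ECP))$ follows immediately once the lower bound is established. So the real work is to produce enough affinely independent equitable colorings lying on $\F$, namely at least $dim(\ECP) - (\lceil n/2\rceil - 1 - |S| + \delta(u)) + 1$ of them.

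**Plan.** The plan is to mimic the counting strategy of Remark \ref{REMPROC} and the facet proofs of Theorems \ref{TORIGINAL1}--\ref{TORIGINAL4}, but now tracking how many of the $n^2 - (\chi_{eq}+|\S|)$ affinely independent colorings available in $\ECP$ can be forced to satisfy the subneighborhood inequality (\ref{RNEIGHBOR1}) with equality. First I would recall from the validity proof of Lemma \ref{TNEIGHBOR1VALID} exactly when equality holds: a coloring $(x,w)$ that is an $r$-eqcol with $r \geq j$ lies on $\F$ iff either $x_{uj}=1$ (left side $=\gamma_{jS}$), or $x_{uj}=0$ and the contributions from $S$ and from the higher colors of $u$ add up to $\gamma_{jS}$. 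The cleanest way to saturate the inequality is to put $u$ in color $j$, since then equality is automatic regardless of how $S$ is colored. So the first family of tight colorings I would build are those assigning color $j$ to $u$.

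**Main construction.** I would start, as in Remark \ref{REMPROC}, from an $(n-1)$-eqcol $c$ chosen so that $u$ receives color $j$ and the doubled class $C_{n-1}=\{u_1,u_2\}$ sits on a nonedge of $G$, then generate the cascade of $swap$ and $intro$ colorings from items 1--6. The subtlety is that the swaps and the $intro$ operator may move $u$ out of color $j$, so I would sort the generated colorings into those that keep $x_{uj}=1$ (automatically on $\F$) and those that do not, repairing the latter where possible by exploiting that $S\subset N(u)$ is not a clique: since $\alpha(S)\geq 2$, two nonadjacent vertices of $S$ can share the color that would otherwise only be partially filled, letting $\sum_{v\in S}x_{vj}$ reach $\gamma_{rS}$ and restoring equality. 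The loss of dimension is controlled by counting the colorings we are forced to discard: each vertex in $N(u)\setminus S$ (there are $\delta(u)-|S|$ of them) and each ``swap'' that unavoidably empties color $j$ at $u$ without a compensating nonedge in $S$ contributes to the deficit, and the $\lceil n/2\rceil - 1$ term accounts for the colors $k$ with $\gamma_{kS}<\gamma_{jS}$ where the lifted coefficients $(\gamma_{jS}-\gamma_{kS})$ are strictly positive and constrain $u$'s placement.

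**The main obstacle.** The hard part will be the bookkeeping that the excluded colorings number at most $\lceil n/2\rceil - 1 - |S| + \delta(u)$, not more. Verifying tightness requires arguing that for each color class with only a single vertex (the high colors $k$ where $\lceil n/k\rceil=1$ forces $\gamma_{kS}$ small) the corresponding $swap$ or $intro$ either already lands on $\F$ or can be replaced by a nearby equitable coloring that does, so that the genuinely irreparable colorings are exactly those where $u$ is isolated from $S$ in its color or where a non-$S$ neighbor blocks the construction. I would organize this by the value of $r$ (the number of colors used) and handle the regime $r\geq \lceil n/2\rceil$, where every class has at most two vertices and $intro$ is available, separately from the smaller regime covered by item 6 of Remark \ref{REMPROC}; in the latter I would simply choose each arbitrary $k$-eqcol to place $u$ in color $j$, which keeps all of item 6 on the face at no cost to the count.
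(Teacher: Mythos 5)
Your proposal follows essentially the same route as the paper's proof, which is precisely the explicit enumeration your plan describes: a family of $n^2 - \lceil n/2 \rceil - \chi_{eq} - |\S| + |S| - \delta(u) + 1$ affinely independent equitable colorings on $\F$, anchored at a coloring with $c(u) = j$, propagated by $swap$-type operations in the style of Remark \ref{REMPROC}, with separate colorings for each $v \in V \backslash N[u]$ and with the intermediate $k$-eqcols swapped so that $u$ receives color $j$. Your two tightness mechanisms (equality is automatic when $x_{uj}=1$, and otherwise is restored by letting vertices of $S$ occupy color $j$ while the lifted coefficients $\gamma_{jS}-\gamma_{kS}$ compensate for $u$'s color) are exactly the ones the paper exploits, and your deficit accounting $(\lceil n/2 \rceil - 1) + (\delta(u) - |S|)$ matches where the paper's construction loses directions.
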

\begin{proof}
Let $s_1, s_2 \in S$ be non adjacent vertices and let $1 \leq r \leq \lceil n/2 \rceil-1$ such that
$r \neq j$. We propose at least $n^2 - \lceil n/2 \rceil - \chi_{eq} - |\S| + |S| - \delta(u) + 1$ affinely independent
colorings lying on $\F$. 
\begin{itemize}
\item A $n$-eqcol $c$ such that $c(u) = j$, $c(s_1) = n$ and $c(s_2) = r$.
\item 
$swap_{n,j_1,j_2}(c)$ for each $j_1, j_2 \in \{1,\ldots,n-1\} \backslash \{j\}$ such that $j_1 \neq j_2$. 
\item 
$swap_{c(s),n,j}(c)$ for each $s \in S \backslash \{ s_1 \}$. 
\item 
$swap_{n,j'}(c)$ for each $j' \in \{1,\ldots,n-1\}$. 
\item The $(n-1)$-eqcol $c'$ such that $c'(s_1) = r$ and $c'(i) = c(i) ~~\forall~i \in V \backslash \{s_1\}$. 
\item 
$swap_{j',r}(c')$ for each $j' \in \{1,\ldots,n-1\} \backslash \{j, r\}$. 
\item 
$swap_{j,r,j'}(c')$ for each $j' \in \{1,\ldots,n-1\} \backslash \{j, r\}$ and,
if $j \leq \lceil n/2 \rceil-1$ then $j' \geq \lceil n/2 \rceil$. 
\item The $(n-1)$-eqcol $c''$ such that $c''(s_1) = c(v)$, $c''(v) = j$ and
$c''(i) = c(i) ~~\forall~i \in V \backslash \{s_1,v\}$, 
for each $v \in V \backslash N[u]$. 
\item If $j \geq \chi_{eq} + 1$, an arbitrary $k$-eqcol of $G$ for each $k \in \{\chi_{eq},\ldots,j-1\} \backslash \S$.
\item $swap_{j,\hat{c}(u)}(\hat{c})$ where $\hat{c}$ is a $k$-eqcol of $G$, for each $k \in \bigl\{\max\{j, \chi_{eq}\},\ldots,n-2\bigr\} \backslash \S$.
\end{itemize}
The proof for the affine independence of the previous colorings is similar to the one for the colorings generated in Remark \ref{REMPROC}.
\end{proof}

Sufficient conditions for a $(u,j,S)$-subneighborhood inequality to be a facet-defining inequality of $\ECP$
are presented in Theorem \ref{TNEIGHBOR1} of \cite{APPENDIX} for the case $\lceil n/j \rceil \leq \lceil n/\chi_{eq} \rceil$ whereas
the following result allows us to study the inequality for the case $\lceil n/j \rceil > \lceil n/\chi_{eq} \rceil$.


\begin{tthm} \label{TIGHT}
Let $j$ such that $\lceil n/j \rceil > \lceil n/\chi_{eq} \rceil$, $\F_j$ be the face defined by the
$(u,j,S)$-subneighborhood inequality and $\F_{\chi_{eq}}$ be the face defined by the
$(u,\chi_{eq},S)$-subneighborhood inequality. Then, $dim(\F_j) = dim(\F_{\chi_{eq}})$.
\end{tthm}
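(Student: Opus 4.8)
The plan is to exhibit an affine automorphism of $\ECP$ that carries $\F_j$ exactly onto $\F_{\chi_{eq}}$; since a bijective affine map preserves the dimension of every face, the equality $dim(\F_j)=dim(\F_{\chi_{eq}})$ follows immediately. The whole argument rests on the observation that, under the standing hypothesis, the two inequalities differ only by a relabeling of two colors that are used by every equitable coloring.

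First I would record the consequences of the hypothesis. Since $k \mapsto \lceil n/k \rceil$ is non-increasing, $\lceil n/j \rceil > \lceil n/\chi_{eq} \rceil$ forces $j < \chi_{eq}$. Hence for every $k$ with $j \leq k \leq \chi_{eq}$ we have $\lceil n/k \rceil \geq \lceil n/\chi_{eq} \rceil$, so
\[ \gamma_{kS} = \min\{\lceil n/\chi_{eq} \rceil,\, \lceil n/k \rceil,\, \alpha(S)\} = \min\{\lceil n/\chi_{eq} \rceil,\, \alpha(S)\} =: \gamma, \]
a constant on this range; in particular $\gamma_{jS}=\gamma_{\chi_{eq}S}=\gamma$. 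Therefore every coefficient $\gamma_{jS}-\gamma_{kS}$ with $j < k \leq \chi_{eq}$ vanishes, and the left-hand side of the $(u,j,S)$-subneighborhood inequality rewrites as
\[ \gamma\, x_{uj} + \sum_{v \in S} x_{vj} + \sum_{k=\chi_{eq}+1}^{n} (\gamma - \gamma_{kS})\, x_{uk} \leq \gamma\, w_j, \]
which coincides with the $(u,\chi_{eq},S)$-inequality except that the two color-$j$ terms are replaced by the analogous color-$\chi_{eq}$ terms. Moreover, by Theorem \ref{TDIM} every point of $\ECP$ satisfies $w_j = w_{\chi_{eq}} = 1$ (as $j,\chi_{eq} \leq \chi_{eq}$), so the right-hand side of both inequalities equals the constant $\gamma$ on all of $\ECP$.

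Next I would introduce the linear involution $T$ of $\mathbb{R}^{n^2+n}$ that interchanges the coordinates $x_{vj}$ and $x_{v\chi_{eq}}$ for every $v \in V$ and fixes all remaining coordinates; this is the coordinate realization of the operator $swap_{j,\chi_{eq}}$. Since $j < \chi_{eq}$ and every equitable coloring uses at least $\chi_{eq}$ colors, both colors $j$ and $\chi_{eq}$ are used in each equitable coloring, so $swap_{j,\chi_{eq}}$ maps equitable colorings bijectively to equitable colorings: it is an involution that merely relabels two used classes, leaving all class cardinalities and the entire $w$-vector unchanged, so it preserves constraints (\ref{RASSIGN})--(\ref{RORDER}) and (\ref{RISOL})--(\ref{RUPPER}). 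Consequently $T$ permutes the vertex set of $\ECP$, whence $T(\ECP)=\mathrm{conv}(T(\text{vertices}))=\ECP$; being a linear bijection, $T$ preserves the dimension of every face.

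Finally I would verify $T(\F_j)=\F_{\chi_{eq}}$. For any point $q$ we have $q \in T(\F_j)$ iff $T(q) \in \F_j$ (using $T^{-1}=T$). Substituting the swapped coordinates into the rewritten left-hand side above sends $x_{uj}\mapsto q_{u\chi_{eq}}$ and $x_{vj}\mapsto q_{v\chi_{eq}}$ while leaving the tail $\sum_{k>\chi_{eq}}$ untouched (as $T$ fixes all indices $k>\chi_{eq}$), so the $(u,j,S)$-inequality evaluated at $T(q)$ equals the $(u,\chi_{eq},S)$-inequality evaluated at $q$; since both right-hand sides equal $\gamma$, this gives $T(q) \in \F_j \iff q \in \F_{\chi_{eq}}$. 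Hence $dim(\F_j)=dim(T(\F_j))=dim(\F_{\chi_{eq}})$. The only delicate point is the coefficient computation of the second paragraph, which collapses the summation over $\{j+1,\dots,n\}$ to the one over $\{\chi_{eq}+1,\dots,n\}$; once that collapse is in hand, the swap symmetry identifies the two faces and the dimension equality is forced.
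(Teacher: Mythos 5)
Your proof is correct and follows essentially the same route as the paper: both rest on the observation that, since $j < \chi_{eq}$ and $w_j = w_{\chi_{eq}} = 1$ on $\ECP$, the two inequalities differ only by exchanging the roles of colors $j$ and $\chi_{eq}$, so that $swap_{j,\chi_{eq}}$ carries each face bijectively onto the other. The paper applies this swap pointwise to a maximal family of affinely independent colorings of $\F_{\chi_{eq}}$ (and symmetrically for $\F_j$), whereas you package the very same swap as a linear involution of the ambient space that fixes $\ECP$; this is a cosmetic rather than a substantive difference.
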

\begin{proof}
Clearly, if $\alpha(S) < \lceil n/\chi_{eq} \rceil$, both inequalities coincide. So, let us assume that
$\alpha(S) \geq \lceil n/\chi_{eq} \rceil$. Since $\lceil n/j \rceil > \lceil n/\chi_{eq} \rceil$,
$j < \chi_{eq}$ and $w_j = w_{\chi_{eq}} = 1$.
Then, both inequalities only differ in the coefficients of $x_{vj}$ and $x_{v\chi_{eq}}$ for all $v \in V$. Moreover,
the coefficient of $x_{vj}$ in the $(u,j,S)$-subneighborhood is the same as the one of $x_{v\chi_{eq}}$ in
the $(u,\chi_{eq},S)$-subneighborhood, and conversely.\\
Let $d = dim(\F_{\chi_{eq}})$ and $d' = dim(\F_j)$. If $c^1, c^2, \ldots, c^{d+1}$ are affinely independent equitable colorings
in $\F_{\chi_{eq}}$, colorings $swap_{j,\chi_{eq}}(c^i)$ for $1 \leq i \leq d+1$ are well defined and they
are affinely independent too. Moreover, they lie on $\F_j$. Therefore, $d \leq d'$.\\
To prove that $d' \leq d$, we follow the same reasoning. 
\end{proof}

\begin{trem} \label{RANKDOMINATED}
Let $j \leq n-1$, $S \subset V$ such that $\alpha(S) = 2$ and $Q = \{ v \in S ~:~ S \subset N[v] \} = \{q\}$.
The $(q,j,S\backslash\{q\})$-subneighborhood inequality is
\[ \sum_{v \in S\backslash\{q\}} x_{vj} + 2 x_{qj} + x_{qn} \leq 2 w_j, \]
and dominates the $(S,Q,j)$-2-rank inequality. 
In Corollary \ref{T2RANK2CASEQ1} of \cite{APPENDIX} we give sufficient conditions for it to be a facet-defining inequality of $\ECP$.
\end{trem}

\subsection{Outside-neighborhood inequalities}

\begin{tdef}
The \emph{$(u,j)$-outside-neighborhood inequality} is defined for a given $u \in V$ such that $N(u)$ is not a clique and $j \leq \lfloor n/2 \rfloor$, as
\begin{equation} \label{RNEIGHBOR2}
\biggl(\biggl\lfloor \dfrac{n}{t_j} \biggr\rfloor - 1 \biggr) x_{uj} - \sum_{v \in V \backslash N[u]} x_{vj}
+ \sum_{k = t_j+1}^n b_{jk} x_{uk} \leq \sum_{k = t_j+1}^n b_{jk} (w_k - w_{k+1}),
\end{equation}
where $t_j = \max \{j,\chi_{eq}\}$ and $b_{jk} = \lfloor n/t_j \rfloor - \lfloor n/k \rfloor$.
\end{tdef}

\begin{tlem}
The $(u,j)$-outside-neighborhood inequality is valid for $\ECP$.
\end{tlem}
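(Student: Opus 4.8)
The plan is to verify validity by a direct case analysis on an arbitrary $r$-eqcol $(x,w)$ of $G$, exactly as in the validity proofs for the previous inequalities (Lemmas \ref{TNEIGHBOR1VALID} and \ref{T2RANK2VALID}). The key observation is that the inequality is built around $t_j = \max\{j,\chi_{eq}\}$, which is the smallest number of colors an equitable coloring can actually use once color $j$ is enabled. First I would note that the telescoping \rhs $\sum_{k=t_j+1}^n b_{jk}(w_k - w_{k+1})$ evaluates, on an $r$-eqcol, to $b_{jr} = \lfloor n/t_j \rfloor - \lfloor n/r \rfloor$ when $r \geq t_j$ and to $0$ when $r \leq t_j$, since $w_k - w_{k+1}$ is the indicator that exactly $k$ colors are used. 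Likewise the \lhs sum $\sum_{k=t_j+1}^n b_{jk} x_{uk}$ contributes $b_{jr}$ precisely when $u$ receives a color exceeding $t_j$ (i.e.\ $c(u) = r$ with $r > t_j$), and $0$ otherwise.

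The heart of the argument is to bound the term $(\lfloor n/t_j\rfloor - 1)x_{uj} - \sum_{v \in V\backslash N[u]} x_{vj}$. I would split into the case $c(u) = j$ and the case $c(u) \neq j$. When $c(u) = j$, the color class $C_j$ contains $u$ and, because every vertex of $V\backslash N[u]$ is non-adjacent to $u$, such vertices may legitimately share color $j$ with $u$; since $|C_j| \geq \lfloor n/r\rfloor$ in an $r$-eqcol, the subtracted sum counts at least $\lfloor n/r\rfloor - 1$ of them (all of $C_j$ except $u$ itself lies in $V\backslash N[u]$, as $N(u)$ vertices cannot take color $j$). This makes the first two terms at most $(\lfloor n/t_j\rfloor - 1) - (\lfloor n/r\rfloor - 1) = \lfloor n/t_j\rfloor - \lfloor n/r\rfloor = b_{jr}$, while the third term vanishes (since $c(u)=j \leq t_j$), matching the \rhs value $b_{jr}$ when $r \geq t_j$; when $r < t_j$ one has $r < \chi_{eq}$, impossible, so $r \geq t_j$ whenever color $j$ is used and the class is nonempty. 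When $c(u) \neq j$, the $x_{uj}$ term drops out and the subtracted nonnegative sum only helps, so the \lhs is at most the third term's contribution, which is $b_{jr}$ if $c(u) > t_j$ and $0$ otherwise, again not exceeding the \rhs.

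The main obstacle I anticipate is handling the interaction between the floor-function bounds and the equity constraints cleanly, particularly confirming that $\lfloor n/r \rfloor \geq \lfloor n/t_j\rfloor$ fails in the right direction, i.e.\ that $b_{jr} \geq 0$ and that the lower bound $|C_j| \geq \lfloor n/r\rfloor$ is the correct quantity to subtract. The subtlety is that the coefficient $\lfloor n/t_j\rfloor - 1$ on $x_{uj}$ is calibrated to $t_j$, not to the actual $r$, so I must track carefully that the slack introduced by using $t_j$ in place of $r$ is exactly absorbed by the $b_{jk}$ terms via the telescoping identity. Once the bookkeeping is arranged so that both sides reduce to $b_{jr}$ in the tight case $c(u)=j$ and the \lhs is dominated in all other cases, validity follows. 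I would conclude by remarking that the restriction $j \leq \lfloor n/2\rfloor$ guarantees $\lfloor n/t_j\rfloor \geq 2$, so the coefficient $\lfloor n/t_j\rfloor - 1$ is a genuine positive quantity and the inequality is nontrivial.
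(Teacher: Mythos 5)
Your proof is correct and takes essentially the same route as the paper's: a case analysis on an arbitrary $r$-eqcol, splitting on whether $u$ receives color $j$, using $C_j \backslash \{u\} \subseteq V \backslash N[u]$ together with the equity bound $|C_j| \geq \lfloor n/r \rfloor$ to absorb the coefficient $\lfloor n/t_j \rfloor - 1$, and bounding the tail sum by $b_{jr}$. Two harmless slips worth fixing: when $c(u) > t_j$ the tail sum contributes $b_{j,c(u)}$, which is only at most $b_{jr}$ (since $b_{jk}$ is nondecreasing in $k$), not equal to it; and your closing remark that $j \leq \lfloor n/2 \rfloor$ forces $\lfloor n/t_j \rfloor \geq 2$ fails when $\chi_{eq} > \lfloor n/2 \rfloor$, though neither point affects the validity argument.
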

\begin{proof}
Let $(x,w)$ be an $r$-eqcol of $G$. If $r < j$, both sides of (\ref{RNEIGHBOR2}) are equal to zero.
Let us assume that $r \geq j$ and $C_j$ denotes the color class $j$ of $(x,w)$. We divide the
proof in two cases:\\
\textbf{Case $r = t_j$}. The terms $\sum_{k = t_j+1}^n b_{jk} x_{uk}$ and $\sum_{k = t_j+1}^n b_{jk} (w_k - w_{k+1})$ vanish from the inequality so we
only need to check that $(\lfloor n/t_j \rfloor - 1) x_{uj} - \sum_{v \in V \backslash N[u]} x_{vj}$ is a non positive value.
If $x_{uj} = 0$, the inequality holds. If $x_{uj} = 1$,
$$\sum_{v \in V \backslash N[u]} x_{vj} = |C_j \backslash N[u]| \geq \lfloor n/t_j \rfloor - 1$$ and (\ref{RNEIGHBOR2}) holds.\\
\textbf{Case $r > t_j$}. We need to check that the \lhs of (\ref{RNEIGHBOR2}) is at most $b_{jr}$.
If $x_{uj} = 0$, then $\sum_{k = t_j+1}^n b_{jk} x_{uk} \leq \max \{b_{jk}: t_j+1\leq k \leq r\} = b_{jr}$ and the inequality holds.
If $x_{uj} = 1$, $\sum_{k = t_j+1}^n b_{jk} x_{uk} = 0$ and
$$\sum_{v \in V \backslash N[u]} x_{vj} = |C_j \backslash N[u]| \geq \lfloor n/r \rfloor - 1$$ and (\ref{RNEIGHBOR2}) holds.
\end{proof}

In order to study the faces of $\ECP$ defined by outside-neighborhood inequalities, let us characterize the equitable colorings that belong to those faces.

\begin{trem} \label{NEIGHBOR2POINTS}
Let $\F$ the face of $\ECP$ defined by the $(u,j)$-outside-neighborhood inequality and $c$ be an $r$-eqcol. Let us observe that if $r<j$, $c$ always lies on $\F$. For the case $r\geq j$, let $C_j$ be the color class $j$ of $c$. Then, $c$ lies on $\F$ if and only if the following conditions hold:
\begin{itemize}
\item If $c(u) = j$ then $|C_j| = \lfloor n/r \rfloor$.
\item If $c(u) \neq j$ then
\begin{itemize}
\item $C_j\subset N(u)$ and
\item if $\biggl\lfloor \dfrac{n}{r} \biggr\rfloor < \biggl\lfloor \dfrac{n}{\max \{j,\chi_{eq}\}} \biggr\rfloor$ then 
$c(u)\geq \biggl\lfloor \dfrac{n}{\lfloor n/r \rfloor + 1} \biggr\rfloor + 1$.
\end{itemize}
\end{itemize}
\end{trem} 

Like the subneighborhood inequalities, outside-neighborhood inequalities define faces of high dimension:

\begin{tthm} \label{TNEIGHBOR2FACE}
Let $\F$ be the face defined by the $(u,j)$-outside-neighborhood inequality. Then,
\[ dim(\F) \geq 
dim(\ECP) - \bigl( 3n - \lceil n/2 \rceil - |\S| - \chi_{eq} - 4 - \delta(u) \bigr) = o(dim(\ECP)). \]
\end{tthm}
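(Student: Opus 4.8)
The plan is to produce a large family of affinely independent equitable colorings lying on $\F$ and to read off the dimension bound from Theorem~\ref{TDIM}. Since $dim(\ECP) = n^2 - (\chi_{eq} + |\S| + 1)$, the asserted inequality is equivalent to exhibiting at least $n^2 - 3n + \lceil n/2 \rceil + 4 + \delta(u)$ affinely independent colorings on $\F$; because the deficiency $3n - \lceil n/2 \rceil - |\S| - \chi_{eq} - 4 - \delta(u)$ is $O(n)$ while $dim(\ECP) = \Theta(n^2)$, the identity with $o(dim(\ECP))$ follows at once. Throughout, membership in $\F$ is decided by Remark~\ref{NEIGHBOR2POINTS}: an $r$-eqcol with $r < j$ is always on $\F$, and one with $r \geq j$ is on $\F$ exactly when either $c(u)=j$ and $|C_j| = \lfloor n/r \rfloor$, or $c(u)\neq j$, $C_j \subset N(u)$, and---when the threshold $\lfloor n/r \rfloor < \lfloor n/\max\{j,\chi_{eq}\}\rfloor$ is active---$c(u) \geq \lfloor n/(\lfloor n/r \rfloor + 1)\rfloor + 1$.

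First I would fix two non-adjacent vertices $s_1,s_2 \in N(u)$, which exist because $N(u)$ is not a clique, and take as seed an $n$-eqcol $c$ with $c(u)=j$. Every color class of an $n$-eqcol is a singleton, so $|C_j| = 1 = \lfloor n/n \rfloor$ and $c$ lies on $\F$ automatically. From this seed I would build colorings exactly in the spirit of Remark~\ref{REMPROC} and of the proof of Theorem~\ref{TNEIGHBOR1DIM}, separated into two regimes according to the two membership cases of Remark~\ref{NEIGHBOR2POINTS}.

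In the regime where $u$ keeps color $j$, every swap $swap_L(c)$ with $j \notin L$ leaves $C_j = \{u\}$ untouched, hence stays on $\F$; together with the descent to $(n-1)$-eqcols and lower $k$-eqcols in which $u$ remains colored $j$ and $C_j$ is kept at its minimal size $\lfloor n/r \rfloor$ (by moving non-neighbors of $u$ into $C_j$), these contribute the bulk of the family. In the regime where $u$ abandons color $j$, I would swap color $j$ with a sufficiently high color held by a neighbor of $u$, so that the resulting singleton class $C_j$ lies in $N(u)$ and the recolored $u$ clears the threshold inequality; the vertices of $V \backslash N[u]$ are absorbed exactly as the $c''$ family of Theorem~\ref{TNEIGHBOR1DIM}, each adding one further coloring to $\F$. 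The configurations the face rules out---$n$-eqcols with $c(u)\neq j$ whose $j$-colored vertex escapes $N(u)$, or with $u$ recolored below the threshold---are what the $3n$, $\delta(u)$ and $\lceil n/2 \rceil$ terms of the deficiency measure. Affine independence of the assembled collection is then argued by the same triangular reduction used in Remark~\ref{REMPROC} (Proposition~1 of \cite{BCCOL}).

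The main obstacle is the bookkeeping at the interface of the two regimes and along the descent in $r$: one must enforce simultaneously the tightness $|C_j| = \lfloor n/r \rfloor$ whenever $u$ carries color $j$ and both $C_j \subset N(u)$ and the $c(u)$-threshold whenever it does not, while retaining enough freedom to reach the target count and to keep the colorings affinely independent. Here the hypothesis $j \leq \lfloor n/2 \rfloor$ is what makes room to grow $C_j$ up to $\lfloor n/r \rfloor$ during the descent, and the non-clique hypothesis on $N(u)$ is what lets the pair $s_1,s_2$ realize the configurations with $C_j \subset N(u)$.
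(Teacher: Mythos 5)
Your overall strategy coincides with the paper's: seed with an $n$-eqcol $c$ satisfying $c(u)=j$ (so $C_j=\{u\}$ has size $\lfloor n/n\rfloor=1$), generate swap families, decide membership in $\F$ via Remark \ref{NEIGHBOR2POINTS}, and argue affine independence as in Remark \ref{REMPROC}. Most of your families are sound, in particular the swaps $swap_L(c)$ with $j\notin L$ and the swaps that send $u$ to a color above $\lfloor n/2\rfloor$ while turning $C_j$ into a singleton neighbor of $u$. However, one step fails concretely: the family you import ``exactly as the $c''$ family of Theorem \ref{TNEIGHBOR1DIM}'' does not lie on $\F$. Those colorings are $(n-1)$-eqcols in which a vertex $v\in V\backslash N[u]$ receives color $j$ while $u$ keeps color $j$; by Remark \ref{NEIGHBOR2POINTS}, a coloring on $\F$ with $c(u)=j$ must have $|C_j|=\lfloor n/(n-1)\rfloor=1$, whereas here $|C_j|=2$. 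Equivalently, the outside-neighborhood inequality carries the term $-\sum_{v\in V\backslash N[u]}x_{vj}$ with \emph{negative} coefficients, so putting a non-neighbor of $u$ into $u$'s class makes the inequality strict. That trick is valid for the subneighborhood inequality of Theorem \ref{TNEIGHBOR1DIM} only because non-neighbors of $u$ do not appear in its left-hand side at all; this sign difference between the two families is exactly what your plan overlooks. The same confusion shows in the count: the target $n^2-3n+\lceil n/2\rceil+4+\delta(u)$ carries $+\delta(u)$, i.e.\ the extra colorings must be indexed by $N(u)$ --- the paper uses $swap_{n,j,c(v)}(c)$ for $v\in N(u)$, which makes $C_j=\{v\}$ and recolors $u$ with $n$ --- not by $V\backslash N[u]$, which would contribute $n-1-\delta(u)$ colorings.

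A second, smaller gap is your ``descent to lower $k$-eqcols in which $u$ remains colored $j$ and $|C_j|=\lfloor n/r\rfloor$.'' The theorem has no hypotheses (no monotonicity, no stable-set structure around $u$) guaranteeing such equitable colorings exist: the relevant $k$ may belong to $\S$, and $C_j$ would have to be a stable set of prescribed size consisting of $u$ and non-neighbors of $u$. The paper sidesteps this entirely: its whole family uses only colorings with $n$, $n-1$ and $n-2$ colors, which exist for every graph under the paper's standing assumptions, finishing with a single $(n-2)$-eqcol in which $C_j=\{v_2,v_3\}\subset N(u)$, $c(u)=c(v_1)=n-2$ with $v_1\notin N[u]$, so that $u$ clears the threshold of Remark \ref{NEIGHBOR2POINTS}. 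If you replace your $V\backslash N[u]$-indexed family by the $N(u)$-indexed swaps above, drop the unneeded descent, and verify the resulting count, your argument becomes the paper's proof.
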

\begin{proof}
Let $v_1 \in V \backslash N[u]$, $v_2, v_3 \in N(u)$ such that $v_2$ is not adjacent to $v_3$ and
$1 \leq r \leq \lfloor n/2 \rfloor$ such that $r \neq j$. We propose
$n^2 + \lceil n/2 \rceil - 3n + 4 + \delta(u)$ affinely independent solutions lying on $\F$:
\begin{itemize}
\item A $n$-eqcol $c$ such that $c(u) = j$, $c(v_1) = n$, $c(v_2) = n-1$ and $c(v_3) = r$.
\item 
$swap_{n,j_1,j_2}(c)$ for each $j_1, j_2 \in \{1,\ldots,n-1\} \backslash \{j\}$ such that $j_1 \neq j_2$.
\item 
$swap_{n,j,c(v)}(c)$ for each $v \in N(u)$. 
\item 
$swap_{j,r,j'}(c)$ for each $j' \in \{\lfloor n/2 \rfloor + 1, \ldots, n-1\}$. 
\item 
$swap_{n,j'}(c)$ for each $j' \in \{1,\ldots,n-1\} \backslash \{j\}$. 
\item The $(n-1)$-eqcol $c'$ such that $c'(v_1) = r$, $c'(v_3) = n-1$ and $c'(i) = c(i) ~~\forall~i \in V \backslash \{v_1, v_3\}$.
\item 
$swap_{j',n-1}(c')$ for each $j' \in \{1, \ldots, n-2\}$. 
\item A $(n-2)$-eqcol $c''$ such that $c''(v_1) = c''(u) = n-2$ and $c''(v_2) = c''(v_3) = j$.
\end{itemize}
The proof for the affine independence of the previous colorings is similar to the one for the colorings generated in Remark \ref{REMPROC}.
\end{proof}

The following necessary condition for an outside-neighborhood inequality to define a facet of $\ECP$ will be helpful in the design of the separation routine.

\begin{tthm} \label{TNEIGHBOR2NEC}
If the $(u,j)$-outside-neighborhood inequality defines a facet of $\ECP$ then
$\alpha(N(u)) \geq \biggl\lfloor \dfrac{n}{\max \{j,\chi_{eq}\}} \biggr\rfloor$.
\end{tthm}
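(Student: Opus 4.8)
The plan is to prove the contrapositive: assuming $\alpha(N(u)) < \lfloor n/t_j\rfloor$, where $t_j = \max\{j,\chi_{eq}\}$, I will show that the face $\F$ defined by the $(u,j)$-outside-neighborhood inequality (\ref{RNEIGHBOR2}) is not a facet, which I establish by proving $dim(\F) \le dim(\ECP) - 2$. Throughout I write $m = \lfloor n/t_j\rfloor$ and $a = \alpha(N(u))$, so the hypothesis reads $a \le m-1$. The starting point is the explicit description of the colorings lying on $\F$ in Remark \ref{NEIGHBOR2POINTS}, which I will exploit to locate a second linear equation, independent of the minimal system of Theorem \ref{TDIM} and of the outside-neighborhood equality itself, that every coloring of $\F$ satisfies.

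The key structural observation is the following. Let $c$ be an $r$-eqcol on $\F$ with $r \ge j$ and $c(u) \ne j$; by Remark \ref{NEIGHBOR2POINTS} its color class $C_j$ is a stable set contained in $N(u)$, hence $|C_j| \le a \le m-1$. On the other hand equity forces $|C_j| \ge \lfloor n/r\rfloor$, so no such coloring can have $\lfloor n/r\rfloor \ge m$; in particular every coloring of $\F$ with $r = t_j$ (those whose minimum class size equals $m$) is forced to satisfy $c(u) = j$ and $|C_j| = m$. Thus, when $a \le m-1$, the whole family of face colorings that have $c(u) \ne j$ together with $C_j$ a stable set of size $m$ inside $N(u)$ — the family that is available precisely when $\alpha(N(u)) \ge m$ — is absent from $\F$, and I will argue that it is exactly this family that realizes the last unit of dimension needed for a facet.

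To convert this into a dimension statement I would compare $\F$ with the construction behind Theorem \ref{TNEIGHBOR2FACE} and its sufficiency counterpart. Partitioning the colorings of $\F$ into three regimes — those using fewer than $j$ colors, those with $c(u) = j$ (where $C_j$ is a minimum-size class), and those with $c(u) \ne j$ (where $C_j \subseteq N(u)$) — I would exhibit the extra equation as a nonnegative-combination certificate built from the rank inequality $\sum_{v\in N[u]} x_{vj} \le a\, w_j$, the equity bound (\ref{RLOWER}) at color $j$, and the block-type relation $\sum_{k=t_j+1}^{n} b_{jk}\, x_{uk} \le \sum_{k=t_j+1}^{n} b_{jk}(w_k-w_{k+1})$, each of which becomes simultaneously tight on $\F$ once $a \le m-1$. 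Concretely, I expect the second equation to be that the equity lower bound (\ref{RLOWER}) at color $j$ holds with equality on all of $\F$, i.e. that $C_j$ is always a minimum-size class, which together with the outside-neighborhood equality yields $dim(\F) \le dim(\ECP) - 2$. For the case $j < \chi_{eq}$ (so $t_j = \chi_{eq}$) I expect to reduce to the case $t_j = j$ by the swap argument of Theorem \ref{TIGHT}.

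The hard part will be pinning down that extra equation rigorously. The difficulty is that the colorings on $\F$ split into two structurally very different regimes — $c(u)=j$ with a minimum-size class, versus $c(u)\ne j$ with $C_j$ a small stable set inside $N(u)$ — and these do not obviously share a single linear relation, since the contribution $\sum_{v\in V\setminus N[u]} x_{vj}$ equals $\lfloor n/r\rfloor - 1$ in the first regime and $0$ in the second. Hence the most delicate step is to verify that under $a \le m-1$ the second-regime face colorings are forced to have a minimum-size class $C_j$ (so that (\ref{RLOWER}) is tight on all of $\F$), which requires treating separately the borderline situation $\lceil n/r\rceil \le a$; this is precisely the point at which the full strength of the hypothesis $\alpha(N(u)) < \lfloor n/t_j\rfloor$ must be invoked.
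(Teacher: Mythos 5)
Your contrapositive setup and your first structural observation coincide with the paper's opening step: under $a=\alpha(N(u))\le m-1$ (with $m=\lfloor n/t_j\rfloor$, $t_j=\max\{j,\chi_{eq}\}$), every face coloring with $c(u)\neq j$ has $C_j\subset N(u)$, hence $\lfloor n/r\rfloor\le|C_j|\le a<m$, and in particular colorings with $r=t_j$ must have $c(u)=j$. The gap is in the second equation you choose as certificate. You propose that the equity lower bound (\ref{RLOWER}) at color $j$ is tight on all of $\F$, i.e.\ $|C_j|=\lfloor n/r\rfloor$ for every face coloring, and you flag the case $\lceil n/r\rceil\le a$ as the delicate step. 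That step is not delicate; it is false. By Remark \ref{NEIGHBOR2POINTS}, a coloring with $c(u)\neq j$ lies on $\F$ as soon as $C_j\subset N(u)$ and $c(u)\ge\lfloor n/(\lfloor n/r\rfloor+1)\rfloor+1$; nothing in these conditions bounds $|C_j|$ by $\lfloor n/r\rfloor$, so $|C_j|=\lceil n/r\rceil$ is permitted whenever $\lceil n/r\rceil\le a$. Concretely, take $n=12$, $j=t_j=3$ (so $m=4$) and $\alpha(N(u))=3$: a $5$-eqcol with $c(u)=5$ and $C_3$ a stable triple inside $N(u)$ satisfies the face conditions (one checks directly that both sides of (\ref{RNEIGHBOR2}) equal $b_{3,5}=2$), yet $|C_3|=3>2=\lfloor 12/5\rfloor$. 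So for any graph admitting such a coloring your candidate equation does not contain $\F$, and the hypothesis $a\le m-1$ cannot rescue it --- the proof as proposed does not go through.

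The paper's certificate is different in kind: it constrains the color of $u$, not the size of $C_j$. Every coloring of $\F$ satisfies $\sum_{l=1}^{j-1}x_{ul}+w_j=1$, and this holds in all three of your regimes: if $r<j$ then $w_j=0$ and $c(u)<j$; if $c(u)=j$ then no $x_{ul}$ with $l<j$ is $1$; and if $c(u)\neq j$ with $w_j=1$, your own first observation gives $\lfloor n/r\rfloor<\lfloor n/t_j\rfloor$, whence Remark \ref{NEIGHBOR2POINTS} forces $c(u)\ge\lfloor n/(\lfloor n/r\rfloor+1)\rfloor+1>t_j\ge j$. This equation is insensitive to whether $|C_j|$ equals $\lfloor n/r\rfloor$ or $\lceil n/r\rceil$, which is exactly the robustness your candidate lacks; being independent of the minimal system of Theorem \ref{TDIM} and of the outside-neighborhood equality, it shows $\F$ is not a facet. (Your planned reduction of $j<\chi_{eq}$ to $j=\chi_{eq}$ via the swap argument is unnecessary under this certificate, since $t_j$ handles both cases uniformly.)
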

\begin{proof}
Let $t_j = \max \{j,\chi_{eq}\}$ and $\F$ be the face of $\ECP$ defined by the $(u,j)$-outside-neighborhood inequality.
Let us suppose that $\alpha(N(u)) < \lfloor n/t_j \rfloor$. We will prove that every equitable coloring lying on $\F$ also satisfies
the equality 
\begin{equation} \label{TNEIGHBOR2NECEQ}
\sum_{l=1}^{j-1} x_{ul} + w_j = 1.
\end{equation}
Since this equality can not be obtained as a linear combination of the minimal equation system for $\ECP$ and the $(u,j)$-outside-neighborhood \emph{equality},
$\F$ is not a facet of $\ECP$.\\
Let $c$ be an $r$-eqcol that lies on $\F$. Clearly, if $r < j$, $w_j = 0$ and $c(u) = l$ for some $1 \leq l \leq j-1$ and, consecuently, the
equality (\ref{TNEIGHBOR2NECEQ}) holds. If $r \geq j$, $w_j = 1$ and we have to prove that $\sum_{l=1}^{j-1} x_{ul}=0$, or equivalently, $c(u) \geq j$.
According to Remark \ref{NEIGHBOR2POINTS}, if $c(u) \neq j$ then $C_j \subset N(u)$ and thus $\alpha(N(u)) \geq |C_j|$. 
Observe that this fact implies that $\lfloor n/r \rfloor < \lfloor n/t_j \rfloor$. Indeed, if $\lfloor n/r \rfloor = \lfloor n/t_j \rfloor$,
$|C_j|\geq \lfloor n/t_j \rfloor$ and it contradicts the assumption $\alpha(N(u)) < \lfloor n/t_j \rfloor$.
Then, by Remark \ref{NEIGHBOR2POINTS}, $c(u) \geq \biggl\lfloor \dfrac{n}{\lfloor n/r \rfloor + 1} \biggr\rfloor + 1 > j$ and (\ref{TNEIGHBOR2NECEQ}) holds.
\end{proof}

For the case $j \geq \chi_{eq}$, we present sufficient conditions for the $(u,j)$-outside-neighborhood inequality to define a facet of $\ECP$ in Theorem \ref{TNEIGHBOR2} of \cite{APPENDIX}. For the other case, we have the following result whose proof follows the same ideas than in Theorem \ref{TIGHT}.

\begin{tthm} \label{TIGHT2}
Let $j < \chi_{eq}$, $\F_j$ be the face defined by the $(u,j)$-outside-neighborhood inequality and
$\F_{\chi_{eq}}$ be the face defined by the $(u,\chi_{eq})$-outside-neighborhood inequality. Then,
$dim(\F_j) = dim(\F_{\chi_{eq}})$.
\end{tthm}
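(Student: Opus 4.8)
The plan is to mimic the swap-based dimension-equality argument from Theorem~\ref{TIGHT}, which is explicitly invoked by the phrase ``whose proof follows the same ideas.'' The key observation to establish first is that when $j < \chi_{eq}$, every color $k$ with $1 \le k \le \chi_{eq}$ satisfies $w_k = 1$ on all of $\ECP$ (this is equation (\ref{TDIM2}) of the minimal system). In particular both $w_j = 1$ and $w_{\chi_{eq}} = 1$ hold identically, so the two outside-neighborhood inequalities differ only in the coefficients attached to the $x_{v\ell}$ variables and the right-hand side terms $(w_k - w_{k+1})$ indexed by $k$. First I would write out both inequalities side by side and check that, because $t_j = \max\{j,\chi_{eq}\} = \chi_{eq}$ exactly when $j < \chi_{eq}$, we have $t_j = t_{\chi_{eq}} = \chi_{eq}$; consequently the parameters $b_{jk} = \lfloor n/\chi_{eq}\rfloor - \lfloor n/k\rfloor$ and $b_{\chi_{eq}k}$ coincide, and the two inequalities are \emph{identical} except that the roles of color $j$ and color $\chi_{eq}$ are interchanged in the indices of the leading $x_{uj}$, $x_{u\chi_{eq}}$ terms and the $\sum_{v \in V\backslash N[u]} x_{v\cdot}$ terms.

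With that structural symmetry in hand, the core of the proof is the bijection induced by the swap operator. Let $d = \dim(\F_{\chi_{eq}})$ and $d' = \dim(\F_j)$. Given $d+1$ affinely independent equitable colorings $c^1,\ldots,c^{d+1}$ lying on $\F_{\chi_{eq}}$, I would argue that $swap_{j,\chi_{eq}}(c^i)$ is well defined for each $i$ (here I must confirm these swaps make sense: since $j<\chi_{eq}$ and $w_{\chi_{eq}}=1$ on $\ECP$, every coloring on $\F_{\chi_{eq}}$ uses at least $\chi_{eq}$ colors, so both color classes $C_j$ and $C_{\chi_{eq}}$ exist and exchanging them yields a valid equitable coloring). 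Because interchanging colors $j$ and $\chi_{eq}$ is precisely the transformation that maps the $(u,\chi_{eq},\cdot)$-type inequality to the $(u,j)$-type inequality, each $swap_{j,\chi_{eq}}(c^i)$ lands on $\F_j$; and since swapping two fixed colors is an affine (indeed linear, coordinate-permuting) bijection on $\mathbb{R}^{n^2+n}$, it preserves affine independence. This produces $d+1$ affinely independent points on $\F_j$, giving $d \le d'$. The reverse inequality $d' \le d$ follows by applying the same swap in the other direction, since $swap_{j,\chi_{eq}}$ is its own inverse on the relevant color pair.

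The main obstacle I anticipate is the well-definedness check for the swap, which is more delicate here than in Theorem~\ref{TIGHT}. There the two colors involved were $j$ and $\chi_{eq}$ both bounded below by $\chi_{eq}$ via $w_j = w_{\chi_{eq}} = 1$, but one must still verify that when a coloring $c$ on $\F_{\chi_{eq}}$ satisfies the face-defining equality, its swapped image genuinely satisfies the $\F_j$ equality rather than merely the inequality. Concretely, I would trace through Remark~\ref{NEIGHBOR2POINTS}: the membership conditions for $\F_{\chi_{eq}}$ are phrased in terms of $C_{\chi_{eq}} \subset N(u)$, cardinality constraints $|C_{\chi_{eq}}| = \lfloor n/r\rfloor$, and a lower bound on $c(u)$; after the swap these become the corresponding conditions on $C_j$, which are exactly the membership conditions for $\F_j$. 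Here I must be careful that the threshold condition ``$\lfloor n/r\rfloor < \lfloor n/\max\{j,\chi_{eq}\}\rfloor$'' in Remark~\ref{NEIGHBOR2POINTS} uses $\max\{j,\chi_{eq}\} = \chi_{eq}$ identically for both faces when $j<\chi_{eq}$, so the swap respects this branch. Once this correspondence is confirmed symmetrically in both directions, the two dimension inequalities combine to give $\dim(\F_j) = \dim(\F_{\chi_{eq}})$, completing the proof.
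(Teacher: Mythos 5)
Your proposal is correct and takes essentially the same approach as the paper: the paper proves Theorem \ref{TIGHT2} exactly as it proves Theorem \ref{TIGHT}, observing that for $j < \chi_{eq}$ both inequalities share $t_j = t_{\chi_{eq}} = \chi_{eq}$ (hence identical coefficients and right-hand sides up to interchanging colors $j$ and $\chi_{eq}$), so that $swap_{j,\chi_{eq}}$ is a coordinate-permuting bijection carrying affinely independent colorings of $\F_{\chi_{eq}}$ onto $\F_j$ and conversely. Your additional membership check via Remark \ref{NEIGHBOR2POINTS} is sound but already implied by the coefficient symmetry.
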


\subsection{Clique-neighborhood inequalities}

\begin{tdef} \label{NEIGHBOR3DEF}
The \emph{$(u,j,k,Q)$-clique-neighborhood inequality} is defined for a given $u \in V$ ,a clique $Q$ of $G$ such that
$Q \cap N[u]=\varnothing$ and numbers $j,k$ verifying $3 \leq k \leq \alpha(N(u)) + 1$ and
$1 \leq j \leq \biggl\lceil \dfrac{n}{k-1} \biggr\rceil - 1$, as

\begin{multline} \label{RNEIGHBOR3}
(k - 1) x_{uj} +
\sum_{l = \lceil \frac{n}{k-1} \rceil}^{n-2} \biggl(k - \biggl\lceil \dfrac{n}{l} \biggr\rceil \biggr) x_{ul} +
(k - 1) \bigl(x_{u n-1} + x_{un} \bigr) +
\!\!\!\sum_{v \in N(u)\cup Q}\!\!\!x_{vj} \\ + \sum_{v \in V \backslash \{u\}}\!\!\!(x_{v n-1} + x_{vn}) \leq
\sum_{l = j}^n b_{ul} (w_l - w_{l+1}),
\end{multline}
where
\[ b_{ul} = \begin{cases}
 \min\{\lceil n / l \rceil, \alpha(N(u)) + 1\}, &\textrm{if $j \leq l \leq \lceil n/k \rceil - 1$} \\
 k, &\textrm{if $\lceil n/k \rceil \leq l \leq n - 2$} \\
 k + 1, &\textrm{if $l \geq n - 1$}
\end{cases} \]
\end{tdef}


\begin{tlem}
The $(u,j,k,Q)$-clique-neighborhood inequality is valid for $\ECP$.
\end{tlem}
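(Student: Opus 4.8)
The plan is to fix an arbitrary $r$-eqcol $(x,w)$ of $G$ and check that it satisfies (\ref{RNEIGHBOR3}). First I would simplify the \rhs: since $(x,w)$ uses exactly the colors $1,\dots,r$, the telescoping factor $w_l-w_{l+1}$ equals $1$ precisely when $l=r$ and $0$ otherwise, so the \rhs equals $b_{ur}$ when $r\geq j$ and equals $0$ when $r<j$. The case $r<j$ is immediate, because then no color $\geq j$ is used and every term of the \lhs vanishes. Hence the whole content reduces to proving $\text{\lhs}\leq b_{ur}$ for $r\geq j$.

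Next I would decompose the \lhs into three pieces: the contribution $T_u$ of the variables attached to $u$ (which equals $k-1$ when $c(u)\in\{j,n-1,n\}$, equals $k-\lceil n/l\rceil$ when $c(u)=l$ with $\lceil n/(k-1)\rceil\leq l\leq n-2$, and equals $0$ for every other value of $c(u)$), the color-$j$ term $A=|C_j\cap(N(u)\cup Q)|$, and the top-color term $B=|(C_{n-1}\cup C_n)\setminus\{u\}|$. The three standard tools are the equity bound $|C_j|\leq\lceil n/r\rceil$, the fact that $C_j$ is stable, giving $|C_j\cap N(u)|\leq\alpha(N(u))$, and the fact that $Q$ is a clique, giving $|C_j\cap Q|\leq1$; together these yield $A\leq\min\{|C_j|,\alpha(N(u))+1\}$.

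The core of the argument is the range $j\leq r\leq n-2$, where colors $n-1,n$ are unused, so $B=0$ and I only need $T_u+A\leq b_{ur}$; I would sub-case on $c(u)$. When $c(u)=j$ the neighbours of $u$ avoid color $j$, so $A\leq|C_j\cap Q|\leq1$ and $T_u+A\leq k\leq b_{ur}$. When $c(u)=l\geq\lceil n/(k-1)\rceil$ the key observation is that $l\leq r$ forces $\lceil n/l\rceil\geq\lceil n/r\rceil\geq|C_j|\geq A$, so the negative coefficient absorbs $A$, giving $T_u+A\leq(k-\lceil n/l\rceil)+\lceil n/l\rceil=k$, and $r\geq l\geq\lceil n/(k-1)\rceil\geq\lceil n/k\rceil$ guarantees $b_{ur}=k$. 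For the remaining values of $c(u)$ we have $T_u=0$ and $A\leq\min\{\lceil n/r\rceil,\alpha(N(u))+1\}\leq b_{ur}$ directly, using $\lceil n/r\rceil\leq k$ whenever $r\geq\lceil n/k\rceil$.

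The step I expect to be the main obstacle is the boundary $r\in\{n-1,n\}$, where $B$ is positive and the \rhs is $b_{ur}=k+1$: term-by-term bounds on $A$ and $B$ are too weak (they give only $k+2$), so the fix is the global equity constraint. For $r=n$ every class is a singleton, whence $A\leq1$, $B\leq2$, and a short check per value of $c(u)$ (using $\lceil n/l\rceil\geq2$ for $l\leq n-2$ together with $k\geq3$) yields $T_u+A+B\leq k+1$. For $r=n-1$ exactly one class has two vertices and the rest are singletons, so $A$ and $B$ must compete for that single oversized slot; this rules out the configuration in which both are maximal and forces $A+B\leq2$ when $c(u)\in\{j,n-1\}$ (where $T_u=k-1$) and $A+B\leq3$ otherwise (where $T_u\leq k-2$ or $T_u=0$ with $k\geq3$). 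In each case the combination gives $T_u+A+B\leq k+1$, completing the verification.
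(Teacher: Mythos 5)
Your proof is correct and follows essentially the same route as the paper's: both reduce the right-hand side to $b_{ur}$ via the telescoping sum, then bound the left-hand side case by case using the equity bound $|C_j|\leq\lceil n/r\rceil$, the stability of $C_j$ against $N(u)$, the clique bound $|C_j\cap Q|\leq 1$, and, for $r\geq n-1$, the fact that $C_j$, $C_{n-1}$, $C_n$ together contain at most three vertices (one of which may be $u$). Your reorganization --- sub-casing on the color of $u$ rather than on the value of $x_{uj}$, and treating $r=n-1$ and $r=n$ separately instead of through the paper's unified count $|C_j|+|C_{n-1}|+|C_n|\leq 3$ --- is only cosmetic.
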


\begin{proof}
Let $(x,w)$ be an $r$-eqcol of $G$. If $r < j$, both sides of (\ref{RNEIGHBOR3}) are zero.
Let us assume that $r \geq j$ and observe that the \rhs of (\ref{RNEIGHBOR3}) is $b_{ur}$.
Let $C_j$, $C_{n-1}$ and $C_n$ be the color class $j$, $n-1$ and $n$ of $(x,w)$ respectively.
We divide the proof in the following cases:\\
\textbf{Case $r\leq \lceil n / k \rceil -1$}. We have to prove that $(x,w)$ verifies
\[ (k - 1) x_{uj} + \!\!\!\sum_{v \in N(u)\cup Q}\!\!\! x_{vj} \leq b_{ur} = \min \biggl\{ \biggl\lceil \dfrac{n}{r} \biggr\rceil, \alpha(N(u)) + 1 \biggr\}. \]
If $x_{uj}=1$, $\sum_{v \in N(u)} x_{vj}=0$ and $\sum_{v \in Q} x_{vj} \leq 1$. Since $b_{ur} \geq k$, the inequality holds.
If instead $x_{uj}=0$, $\sum_{v \in N(u)\cup Q} x_{vj} = |C_j \cap (N(u)\cup Q)| \leq \min \bigl\{\lceil n/r \rceil, \alpha(N(u) \cup Q)\bigr\} \leq
                \min \bigl\{\lceil n/r \rceil, \alpha(N(u)) + 1 \bigr\}$.\\
\textbf{Case $\lceil n / k \rceil \leq r \leq n-2$}. We have to prove that $(x,w)$ verifies
\[ (k - 1) x_{uj} +
\sum_{l = \lceil \frac{n}{k-1} \rceil}^{n-2} \biggl(k - \biggl\lceil \dfrac{n}{l} \biggr\rceil \biggr) x_{ul} +
\!\!\!\sum_{v \in N(u)\cup Q}\!\!\! x_{vj} \leq k. \]
If $x_{uj}=1$, $\sum_{l = \lceil \frac{n}{k-1} \rceil}^{n-2} (k - \lceil n/l \rceil) x_{ul} = 0$ and $\sum_{v \in N(u)\cup Q} x_{vj} \leq 1$. Therefore, the inequality holds.\\
If instead $x_{uj}=0$, $\sum_{l = \lceil \frac{n}{k-1} \rceil}^{n-2} (k - \lceil n/l \rceil) x_{ul} \leq k - \lceil n / r \rceil$ and
$\sum_{v \in N(u)\cup Q} x_{vj} \leq |C_j| \leq \lceil n / r \rceil$ and the inequality holds.\\
\textbf{Case $r \geq n-1$}. Let us first notice that $|C_j| + |C_{n-1}| + |C_n|\leq 3$. We have to prove that $(x,w)$ satisfies
\[ L(x) + \!\!\!\sum_{v \in N(u)\cup Q}\!\!\! x_{vj} + \sum_{v \in V \backslash \{u\}}\!\!\!(x_{v n-1} + x_{vn}) \leq k+1. \]
where
\[ L(x) = (k - 1) x_{uj} + \sum_{l = \lceil \frac{n}{k-1} \rceil}^{n-2} \biggl(k - \biggl\lceil \dfrac{n}{l} \biggr\rceil \biggr) x_{ul} + (k - 1) \bigl(x_{u n-1} + x_{un} \bigr). \]
Let us observe that $L(x) \leq k-1$ and $L(x) = k-1$ if and only if $u \in C_j \cup C_{n-1} \cup C_n$. 
Then, if $L(x) = k-1$, since $u \in C_j \cup C_{n-1} \cup C_n$ we have
$\sum_{v \in N(u)\cup Q} x_{vj} + \sum_{v \in V \backslash \{u\}} (x_{v n-1} + x_{vn}) \leq |C_j|+|C_{n-1}|+|C_n|-1\leq 2$,
and the inequality holds.\\
If $L(x) \leq k-2$, the inequality holds since $\sum_{v \in N(u)\cup Q} x_{vj} + \sum_{v \in V \backslash \{u\}} (x_{v n-1} + x_{vn}) \leq |C_j|+|C_{n-1}|+|C_n|\leq 3$.
\end{proof}

Let us remark that, if $Q$ is not maximal in $G - N[u]$, the $(u,j,k,Q)$-clique-neighborhood inequality is dominated by a
$(u,j,k,Q')$-clique-neighborhood, with $Q'$ a clique such that $Q \subsetneqq Q' \subset G-N[u]$.


In order to analyze the faces of $\ECP$ defined by clique-neighborhood inequalities, we first explore the colorings
that belong to those faces.

\begin{trem} \label{NEIGHBOR3POINTS}
Let $\F$ be the face of $\ECP$ defined by the $(u,j,k,Q)$-clique-neighborhood inequality and $c$ be an $r$-eqcol.
Let us observe that, if $r<j$, $c$ always lies on $\F$. For the case $r\geq j$, let $C_j$, $C_{n-1}$ and $C_n$ be the color class $j$, $n-1$ and $n$ of $c$ respectively. Then, $c$ lies on $\F$ if and only if the following conditions hold:
\begin{itemize}
\item If $r\leq \lceil n / k \rceil -1$ then:
\begin{itemize}
\item If $c(u) = j$ then $|C_j \cap Q| = 1$ and $k = \alpha(N(u)) + 1$.\\
Otherwise, $|C_j \cap (N(u)\cup Q)| = \min\{\lceil n / r \rceil, \alpha(N(u)) + 1\}$.
\end{itemize}
\item If $\lceil n / k \rceil \leq r \leq n-2$ then:
\begin{itemize}
\item If $c(u) = j$ then $|C_j \cap Q| = 1$. Otherwise,
\begin{itemize}
\item $|C_j \cap (N(u) \cup Q)| = \lceil n/r \rceil$ and
\item if $r \geq \biggl\lceil \dfrac{n}{k-1} \biggr\rceil$ then $c(u) \geq \biggl\lceil \dfrac{n}{\lceil n/r \rceil} \biggr\rceil$.
\end{itemize}
\end{itemize}
\item If $r \geq n-1$ then:
\begin{itemize}
\item If $c(u) \in \{ j, n-1, n \}$ then $|C_j \cap Q| + |C_{n-1} \backslash \{u\}| + |C_n \backslash \{u\}| = 2$.\\
Otherwise, $c(u) \geq \lceil n/2 \rceil$ and $|C_j \cap (N(u) \cup Q)| + |C_{n-1}| + |C_n| = 3$.
\end{itemize}
\end{itemize}
\end{trem}

Clique-neighborhood inequalities also define high dimensional faces in $\ECP$.

\begin{tthm}
Let $\F$ be the face defined by the $(u,j,k,Q)$-clique-neighborhood inequality. Then,
\[ dim(\F) \geq 
   dim(\ECP) - \bigl( 3n - |\S| - \chi_{eq} - \lfloor n/2 \rfloor - \delta(u) - |Q| - 4 \bigr) = o(dim(\ECP)). \]
\end{tthm}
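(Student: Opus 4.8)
The statement is a lower bound on $\dim(\F)$ for the face defined by the $(u,j,k,Q)$-clique-neighborhood inequality, asserting it is $o(\dim(\ECP))$ by exhibiting enough affinely independent equitable colorings lying on $\F$. Following the pattern of Theorems~\ref{TNEIGHBOR1DIM} and \ref{TNEIGHBOR2FACE}, the plan is to produce a list of roughly $n^2 - (3n - |\S| - \chi_{eq} - \lfloor n/2\rfloor - \delta(u) - |Q| - 4)$ explicit colorings, all certified to lie on $\F$ via the membership characterization of Remark~\ref{NEIGHBOR3POINTS}, and then argue affine independence exactly as in Remark~\ref{REMPROC}.

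I would begin by fixing a convenient base coloring and the few special vertices the construction needs. Concretely, let me pick a vertex $w_1 \in V\backslash N[u]$, vertices $w_2,w_3\in N(u)$ with $w_2,w_3$ non-adjacent (possible since $Q$ is a clique disjoint from $N[u]$ and $N(u)$ is not a clique in the relevant regime), a vertex $q\in Q$, and a color $r$ in a suitable range with $r\neq j$. The base coloring $c$ should be an $n$-eqcol (or $(n-1)$-eqcol) placing $u$ on color $j$ and seating $q$, $w_1$, $w_2$, $w_3$ on colors chosen so that the equalities in the $r\geq n-1$ branch of Remark~\ref{NEIGHBOR3POINTS} hold tightly, i.e. $|C_j\cap Q| + |C_{n-1}\backslash\{u\}| + |C_n\backslash\{u\}| = 2$. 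From this base I would generate families by the two operators: $swap$ permutations of colors that keep the tightness condition intact, and occasional $intro$/declassification moves to descend to $(n-1)$- and $(n-2)$-eqcols, plus one arbitrary $k$-eqcol for each usable $k\in\{\chi_{eq},\dots,n-2\}\backslash\S$. Each family contributes new affinely independent directions in the $x_{vl}$ coordinates, and the count of omitted directions ($3n$ minus the correction terms) measures exactly the colorings forbidden by the tightness of the inequality near colors $j$, $n-1$, $n$ together with the neighborhood of $u$.

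The membership bookkeeping is where the three-case structure of Remark~\ref{NEIGHBOR3POINTS} must be respected verbatim: for small $r$ ($r\le\lceil n/k\rceil-1$) I need $|C_j\cap(N(u)\cup Q)|$ to equal the stated minimum; for the middle range the bound on $c(u)$ must be enforced whenever $r\geq\lceil n/(k-1)\rceil$; and for large $r$ the three top color classes must be saturated. I would organize the swaps so that, within each $r$-regime, the base coloring already satisfies the equality and the permitted swaps preserve it, exactly mirroring how the corresponding lists are built in the proofs of Theorems~\ref{TNEIGHBOR1DIM} and \ref{TNEIGHBOR2FACE}.

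\textbf{Main obstacle.} The delicate part is not affine independence — that is routine once the list is fixed, by the standard argument of Remark~\ref{REMPROC} — but rather getting the membership conditions of Remark~\ref{NEIGHBOR3POINTS} to hold simultaneously across all three $r$-regimes while keeping the swap families large enough to attain the claimed bound. In particular, the constraint linking $c(u)$ to $\lceil n/\lceil n/r\rceil\rceil$ in the middle regime, and the requirement $c(u)\geq\lceil n/2\rceil$ in the large-$r$ regime when $u$ is off the three distinguished classes, restrict which swaps involving $u$ are admissible; I would need to verify that the colors excluded for $u$ are precisely accounted for by the $-\delta(u)$ and $-\lfloor n/2\rfloor$ terms in the bound, so that no double-counting or shortfall occurs. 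I expect this reconciliation of the omitted-direction count with the three tightness conditions to be the crux of the argument.
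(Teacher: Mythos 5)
Your skeleton (a base $n$-eqcol with $c(u)=j$, swap families whose membership in $\F$ is certified through Remark \ref{NEIGHBOR3POINTS}, affine independence as in Remark \ref{REMPROC}) is the same as the paper's, but the execution has a genuine gap, and it is precisely at the point you yourself flag as ``the crux.'' First, your target count is off: the number of affinely independent colorings required is $dim(\F)+1 \geq dim(\ECP)+1-\bigl(3n - |\S| - \chi_{eq} - \lfloor n/2\rfloor - \delta(u) - |Q| - 4\bigr)$, and since $dim(\ECP)=n^2-(\chi_{eq}+|\S|+1)$, the terms $\chi_{eq}$ and $|\S|$ cancel, leaving exactly $n^2 - 3n + \lfloor n/2\rfloor + \delta(u) + |Q| + 4$. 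You instead measured the deficiency against $n^2$, inflating the requirement by $\chi_{eq}+|\S|$. Second, to cover that inflated count you include ``one arbitrary $k$-eqcol for each usable $k\in\{\chi_{eq},\ldots,n-2\}\backslash\S$,'' and this is the step that fails: for $k\geq j$ an arbitrary $k$-eqcol does not in general lie on $\F$. By Remark \ref{NEIGHBOR3POINTS}, a $k$-eqcol in the middle regime lies on the face only if either $u\in C_j$ and $|C_j\cap Q|=1$, or $C_j$ is a class of maximum size $\lceil n/k\rceil$ contained in $N(u)\cup Q$ (with an additional lower bound on $c(u)$). Color swaps merely relabel classes, so they can place a coloring on the face only if it already possesses a class with one of these properties; an arbitrary equitable coloring need not have any. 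Guaranteeing the existence of such colorings for intermediate $k$ is exactly the content of hypotheses (i) and (ii) of the facet theorem (Theorem \ref{TNEIGHBOR3} of the appendix), and the dimension bound you are proving carries no such hypotheses.

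What you missed is that the statement's deficiency term is engineered so that no intermediate-$k$ colorings are needed at all: after the cancellation above, the required count is met using only $n$-, $(n-1)$- and $(n-2)$-eqcols, each of which can always be placed on $\F$ using nothing beyond $Q\cap N[u]=\varnothing$ and $\alpha(N(u))\geq 2$ (which follows from $k\geq 3$). Concretely, the paper takes the $n$-eqcol $c$ with $c(u)=j$, $c(q)=n$ for some $q\in Q$, and builds swap families from it (with swaps moving $u$ off color $j$ restricted to target colors $j'\geq\lceil n/2\rceil$, which is where the $\lfloor n/2\rfloor$ term comes from); then a $(n-1)$-eqcol with $c'(u)=j$ and two non-adjacent $v_1,v_2\in N(u)$ sharing color $n-1$; then a $(n-2)$-eqcol with $u$ and $q$ sharing color $j$ (satisfying the middle-regime requirement $|C_j\cap Q|=1$) and $v_1,v_2$ sharing color $n-2$, plus its swaps. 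These items total exactly $n^2 - 3n + \lfloor n/2\rfloor + \delta(u) + |Q| + 4$. Without this observation, your argument cannot be completed as planned, since the colorings you would need to add are not available on the face.
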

\begin{proof}
Let $v_1, v_2 \in N(u)$ be non adjacent vertices, and $q \in Q$.
We propose $n^2 + \lfloor n/2 \rfloor + 4 - 3n + \delta(u) + |Q|$ affinely independent solutions lying on $\F$:
\begin{itemize}
\item A $n$-eqcol $c$ such that $c(u) = j$ and $c(q) = n$.
\item 
$swap_{n,j_1,j_2}(c)$ for each $j_1, j_2 \in \{1,\ldots,n-1\} \backslash \{j\}$ such that $j_1 \neq j_2$. 
\item 
$swap_{n,j,c(v)}(c)$ for each $v \in (N(u) \cup Q) \backslash \{q\}$. 
\item 
$swap_{j',j,n}(c)$ for each $j' \in \{\lceil n/2 \rceil, \ldots, n-1\}$. 
\item 
$swap_{j',n}(c)$ for each $j' \in \{1, \ldots, n-1\}$. 
\item A $(n-1)$-eqcol $c'$ such that $c'(u) = j$, $c'(v_1) = c'(v_2) = n - 1$.
\item 
$swap_{j,n-1}(c')$.
\item A $(n-2)$-eqcol $c''$ such that $c''(u) = c''(q) = j$ and $c''(v_1) = c''(v_2) = n - 2$.
\item 
$swap_{j',n-2}(c'')$ for each $j' \in \{1, \ldots, n-3\} \backslash \{j\}$.
\end{itemize}
The proof for the affine independence of the previous colorings is similar to the one for the colorings generated in Remark \ref{REMPROC}.
\end{proof}


Sufficient conditions for the clique-neighborhood inequalities to define facets of $\ECP$ are presented
in Theorem \ref{TNEIGHBOR3} and Corollary \ref{TNEIGHBOR3COR} of \cite{APPENDIX}.
%

\subsection{$S$-color inequalities}

Given a set of colors $S$, let us analyze how many vertices can be painted with colors from $S$. 
Let $(x,w)$ be a $k$-eqcol and $d_{Sk}$ be the number of colors in $S$ with non-empty color class in $(x,w)$, \ie $d_{Sk} = |S \cap \{1,\ldots,k\}|$. 
It is straighforward to see that $(x,w)$ has $n - k \lfloor \frac{n}{k}\rfloor$ classes of size $\lfloor \frac{n}{k}\rfloor+1$ and $k - (n - k \lfloor \frac{n}{k}\rfloor)$ classes of size $\lfloor \frac{n}{k}\rfloor$. Then, the number of classes of color in $S$ having size $\lfloor \frac{n}{k}\rfloor+1$ is at most
$\min\{ d_{Sk}, n - k \lfloor \frac{n}{k}\rfloor \}$.
Denoting by $b_{Sk} = d_{Sk} \lfloor \frac{n}{k} \rfloor + \min \{ d_{Sk}, n - k \lfloor \frac{n}{k}\rfloor \}$ we have that $\sum_{j \in S} |C_j| \leq b_{Sk}$, which motivates the following definition.
\begin{tdef}
Let $S \subset \{1,\ldots,n\}$. The \emph{$S$-color inequality} is defined as
\begin{equation} \label{RONLYCOLORS}
\sum_{j \in S} \sum_{v \in V} x_{vj} \leq \sum_{k=1}^{n} b_{Sk} (w_k - w_{k+1}),
\end{equation}
where $d_{Sk} = |S \cap \{1,\ldots,k\}|$ and
$b_{Sk} = d_{Sk} \lfloor \frac{n}{k} \rfloor + \min \{ d_{Sk}, n - k \lfloor \frac{n}{k}\rfloor \}$.
\end{tdef}

\begin{tlem}
The $S$-color inequality is valid for $\ECP$.
\end{tlem}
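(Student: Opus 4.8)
The plan is to validate the $S$-color inequality by taking an arbitrary $r$-eqcol $(x,w)$ and showing it satisfies (\ref{RONLYCOLORS}). The right-hand side is a telescoping-style expression where, because of constraints (\ref{RORDER}), the differences $w_k - w_{k+1}$ act as an indicator that singles out exactly one value of $k$: namely $k = r$, the number of colors actually used. Indeed, $w_k = 1$ for $k \leq r$ and $w_k = 0$ for $k > r$, so $w_k - w_{k+1}$ equals $1$ if $k = r$ and $0$ otherwise. First I would record this observation, which collapses the \rhs of (\ref{RONLYCOLORS}) to the single term $b_{Sr}$.

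With the right-hand side reduced to $b_{Sr}$, the remaining task is purely combinatorial: show that $\sum_{j \in S} \sum_{v \in V} x_{vj} = \sum_{j \in S} |C_j| \leq b_{Sr}$ for the coloring $(x,w)$ using $r$ colors. The left-hand side counts the total number of vertices receiving a color belonging to $S$. Since only colors $1,\ldots,r$ are used, only the $d_{Sr} = |S \cap \{1,\ldots,r\}|$ colors of $S$ that are actually present contribute, and each such color class $C_j$ satisfies $|C_j| \leq \lceil n/r \rceil = \lfloor n/r \rfloor + 1$ when $r \nmid n$, with the equitable constraint forcing at most $n - r\lfloor n/r\rfloor$ classes (among all $r$ classes, hence among those indexed by $S$) to attain the larger size $\lfloor n/r\rfloor + 1$. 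I would bound $\sum_{j \in S}|C_j|$ by assigning the maximal size $\lfloor n/r\rfloor+1$ to as many $S$-colored classes as possible, which is $\min\{d_{Sr}, n - r\lfloor n/r\rfloor\}$ of them, and the base size $\lfloor n/r\rfloor$ to the rest. This yields exactly $d_{Sr}\lfloor n/r\rfloor + \min\{d_{Sr}, n - r\lfloor n/r\rfloor\} = b_{Sr}$, as desired. This counting argument is precisely the reasoning already laid out in the text preceding the definition, so it can be invoked rather than rederived.

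I would also dispose of the degenerate cases quickly: if $r < \min S$ (or more generally if no color of $S$ is used, i.e. $d_{Sr} = 0$) then the left-hand side is $0$ and $b_{Sr} = 0$, so the inequality holds trivially. The main — and really the only — obstacle is being careful that the coefficient $b_{Sk}$ was defined so that its value at $k = r$ coincides with the tight bound on $\sum_{j\in S}|C_j|$ derived from the equity constraints; once the telescoping reduction pins the \rhs to $b_{Sr}$, validity is immediate from the definition of $b_{Sr}$ itself. In short, the proof is essentially a one-line consequence of (\ref{RORDER}) together with the combinatorial bound motivating the definition, and I anticipate no genuine difficulty beyond stating these two facts cleanly.
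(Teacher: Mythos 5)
Your proposal is correct and follows essentially the same route as the paper: the monotonicity constraints (\ref{RORDER}) collapse the right-hand side of (\ref{RONLYCOLORS}) to the single coefficient $b_{Sr}$ for an $r$-eqcol, and the counting argument preceding the definition (at most $\min\{d_{Sr}, n - r\lfloor n/r\rfloor\}$ of the $S$-indexed classes can have size $\lfloor n/r\rfloor+1$) gives $\sum_{j\in S}|C_j| \leq b_{Sr}$. The paper's proof is exactly this two-step observation, stated more tersely, so no further work is needed.
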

\begin{proof}
Let $(x,w)$ be a $k$-eqcol. If $k < j$, both sides of (\ref{RONLYCOLORS}) are zero.
If instead $k \geq j$, the \rhs of (\ref{RONLYCOLORS}) is $b_{Sk}$ which is an upper bound of
$\sum_{j \in S} |C_j| = \sum_{j \in S} \sum_{v \in V} x_{vj}$.
\end{proof}

\begin{trem} \label{REMARKONLY}
Let us present some useful facts about $S$-color inequalities.
\begin{enumerate}
\item Given $S \subset \{1,\ldots,n-1\}$, the $(S \cup \{n\})$-color inequality can be obtained by adding the $S$-color
inequality and equation (\ref{TDIM4}) from the minimal system. Then, both inequalities define the same face of $\ECP$.
\item Constraints (\ref{RLOWER}) and (\ref{RUPPER}) are both $S$-color inequalities with
$S = \{1,\ldots,n-1\}\backslash\{j\}$ and $S = \{j\}$ respectively.
\item It is not hard to see that the $(S,j)$-rank inequality with $\alpha(S) = 2$ and $j \geq \lceil n/2 \rceil$,
and (\ref{RNEIGHBOR3}) with $k = 2$ are both dominated by the $\{j,n-1\}$-color inequality.
\item If for every $k$ such that $G$ admits a $k$-eqcol, we have that either $k$ divides $n$ or
$n - k \lfloor \frac{n}{k}\rfloor \geq d_{Sk}$, then the $S$-color inequality is obtained by
adding constraints (\ref{RUPPER}), \ie $\sum_{v \in V} x_{vj} \leq \sum_{k=j}^n \lceil n / k \rceil (w_k - w_{k+1})$, for $j \in S$.
Thus, an $S$-color inequality can cut off a fractional solution of the linear relaxation of the formulation only if $2 \leq |S \backslash \{n\}| \leq n-3$
and there exists $k \in \{ \chi_{eq}, \ldots, n-1 \} \backslash \S$ such that $1 \leq n - k \lfloor \frac{n}{k}\rfloor \leq d_{Sk} - 1$.
\end{enumerate}
\end{trem}

The following result shows that $S$-color inequalities define faces of high dimension.

\begin{tthm} \label{TONLYCOLORSFACE}
Let $S \subset \{1,\ldots,n\}$ such that $|S \backslash \{n\}| \geq 1$ and let $\F$ be the face defined by the $S$-color inequality. Then,
\[ dim(\F) \geq 
             dim(\ECP) - (n - |S \backslash \{n\}| - 1) = o(dim(\ECP)).\]
\end{tthm}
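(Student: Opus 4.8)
The plan is to prove the lower bound on $dim(\F)$ in the standard way, by exhibiting a sufficiently large set of affinely independent equitable colorings lying on $\F$. By Theorem \ref{TDIM} we have $dim(\ECP) = n^2 - (\chi_{eq} + |\S| + 1)$, so, writing $s := |S\backslash\{n\}| = d_{S,n-1}$, it suffices to produce $n^2 - \chi_{eq} - |\S| - n + s + 1$ affinely independent colorings on $\F$. This is exactly $n-1-s$ fewer than the $n^2 - \chi_{eq} - |\S|$ colorings furnished by Remark \ref{REMPROC}, so the whole argument amounts to discarding $n-1-s$ of those colorings and showing the rest can be made to lie on the face.

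First I would record the tightness condition for (\ref{RONLYCOLORS}). For an $r$-eqcol with $r \geq j$, exactly $d_{Sr}$ colors of $S$ are used and $\sum_{j\in S}|C_j| = d_{Sr}\lfloor n/r\rfloor + (\#\,S\textrm{-colored large classes})$, so the coloring lies on $\F$ if and only if $\min\{d_{Sr}, n - r\lfloor n/r\rfloor\}$ of its size-$\lfloor n/r\rfloor + 1$ classes carry a color of $S$. Two special cases drive everything. For $r = n$ every class has size $1$, hence $b_{Sn} = |S| = \sum_{j\in S}|C_j|$ and \emph{every} $n$-eqcol lies on $\F$. For $r = n-1$ there is a single class of size $2$, and since $s \geq 1$ one gets $b_{S,n-1} = s+1$; thus an $(n-1)$-eqcol lies on $\F$ if and only if its unique size-$2$ class is painted with a color of $S \cap \{1,\ldots,n-1\}$.

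Then I would run through Remark \ref{REMPROC}. Items 3, 4 and 5 are $n$-eqcols, so they lie on $\F$ unchanged. The $(n-1)$-eqcols of items 1 and 2 place the size-$2$ class successively at every color of $\{1,\ldots,n-1\}$; by the previous paragraph precisely the $s$ of them whose size-$2$ class lands on a color of $S$ survive, and the other $n-1-s$ must be dropped. For each $k \in \{\chi_{eq},\ldots,n-2\}\backslash\S$ the arbitrary $k$-eqcol of item 6 can be \emph{chosen} on $\F$: a $k$-eqcol exists since $k \notin \S$, and permuting its color labels (a product of $swap$ operations, which preserves validity and equity) moves $\min\{d_{Sk}, n - k\lfloor n/k\rfloor\}$ of its large classes onto colors of $S$, making (\ref{RONLYCOLORS}) tight. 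Keeping items 3--6 together with the $s$ surviving colorings of items 1--2 yields exactly $n^2 - \chi_{eq} - |\S| - n + s + 1$ colorings on $\F$.

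Affine independence comes for free: a subset of an affinely independent family is affinely independent, and because the independence argument of Remark \ref{REMPROC} is valid for \emph{any} choice of its item-6 colorings, re-selecting them on $\F$ changes nothing. This gives $dim(\F) \geq n^2 - \chi_{eq} - |\S| - n + s = dim(\ECP) - (n - s - 1)$, and since $n - s - 1 = O(n) = o(dim(\ECP))$ the asymptotic claim follows. There is no deep obstacle here; the only steps requiring genuine care are the bookkeeping of the tightness condition for the $(n-1)$- and $n$-eqcols and the verification that the item-6 relabeling can always be performed, both of which are routine.
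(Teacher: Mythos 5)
Your proposal is correct and follows essentially the same route as the paper's proof: both take the affinely independent family of Remark \ref{REMPROC}, keep the $n$-eqcols, discard the $n-1-|S\backslash\{n\}|$ $(n-1)$-eqcols whose size-$2$ class carries a color outside $S$, and choose the item-6 $k$-eqcols so that classes colored from $S$ are as large as possible, yielding the same count $n^2 - \chi_{eq} - |\S| - n + 1 + |S\backslash\{n\}|$. The only cosmetic difference is that the paper first invokes Remark \ref{REMARKONLY}.1 to assume $S \subset \{1,\ldots,n-1\}$, whereas you absorb the case $n \in S$ directly into your tightness analysis; both are fine.
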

\begin{proof}
From Remark \ref{REMARKONLY}.1 we can assume w.l.o.g.\! that $S \subset \{1,\ldots,n-1\}$.
Let $u_1, u_2$ be non adjacent vertices and $c$ be a $(n-1)$-eqcol such that $c(u_1) = c(u_2) = n-1$.
We consider colorings from Remark \ref{REMPROC} starting from $c$ and choosing those ones that lie in the
face defined by (\ref{RONLYCOLORS}).
That is, by excluding the $(n-1)$-eqcols that assign colors from $\{1,\ldots,n-1\} \backslash S$ to $u_1$ and
$u_2$ simultaneously, and by choosing $k$-eqcols where color classes from $S$ should have as many vertices
as possible, for each $k \in \{\chi_{eq},\ldots,n-2\} \backslash \S$.
Hence, we get $n^2 - \chi_{eq} - |\S| - n + 1 + |S|$ affinely independent colorings.
\end{proof}



Finally, sufficient conditions for the $S$-color inequalities to define facets of $\ECP$ are presented
in Theorem \ref{TONLYCOLORS} of \cite{APPENDIX}.

\section{Implementation and computational experience} \label{SCOMPU}

We present computational results concerning the efficiency of valid
inequalities studied in the previous sections when they are used as cuts in
a cutting-plane algorithm for solving ECP.

The main elements of our implementation are described below.


\subsection{Initialization}


According to our computational experience reported in \cite{MACI}, the ILP
formulation of ECP consisting of constraints (\ref{RASSIGN})-(\ref{RUPPER}) performs
much better than the one defining $\ECP$, \ie without (\ref{RREPR1})-(\ref{RREPR2}).
Since every valid inequality of $\ECP$ is also valid for equitable colorings satisfying
constraints (\ref{RASSIGN})-(\ref{RUPPER}), we use this tighter formulation for
computational experiments, with inequalities (\ref{RREPR2}) handled as lazy
constraints in the implementation. This means they are not part of the initial relaxation,
but they are added later as cuts whenever necessary.


We tested several criteria for labeling vertices and the one which has
proved to be the best in practice is the following. We first find a maximal
clique $Q$. Denoting by $q$ the size of $Q$, we assign the first $q$ natural numbers to
vertices of $Q$. The labels of remaining vertices are assigned in decreasing order of
degree, \ie satisfying $\delta(v) \geq \delta(v+1)$ for all $v \in \{ q+1, \ldots, n \}$.


To find an initial upper bound of $\overline{\chi_{eq}}$, we use the
heuristic \emph{Naive} presented in \cite{KUBALE}. This allows us to eliminate variables
$x_{vj}$ and $w_j$ with $j > \overline{\chi_{eq}}$ from the model.


In addition, a lower bound $\underline{\chi_{eq}}$ is obtained by considering the maximum between the
size of the maximal clique $Q$ and the value
\[ \max \biggl\{ \biggl\lceil \dfrac{n + 1}{\theta(G - N[v]) + 2}
        \biggr\rceil : v \in V \biggr\}, \]
also proposed in \cite{KUBALE}, where $\theta(G)$ is the cardinal of a clique partition of
$G$ found greedily.

We compute bounds of the stability number of $N(u)$ for all $u \in V$ (via heuristic procedures),
which will be useful for the separation routines. We denote the upper bound as $\overline{\alpha}(N(u))$
and the lower bound as $\underline{\alpha}(N(u))$.





\subsection{Description of the cutting-plane algorithm} \label{SSEPAR}

The design of the separation routines for each family of valid inequalities is described below.
Given a fractional solution $(x^*, w^*)$ of the linear relaxation, we look for violated inequalities as follows:
\begin{itemize}
\item \emph{Clique and Block inequalities}. They are handled in the same way as in \cite{BCCOL}.
\item \emph{Clique-neighborhood inequalities}. For each maximal clique $Q$ we found during the clique separation procedure and for each
$u \in V \backslash \bigl(\cup_{q \in Q} N[q]\bigr)$, $j \in \{1, \ldots, \overline{\chi_{eq}}\}$ and $k$ such that
$$\max \{3,\lceil n/\overline{\chi_{eq}} \rceil\} \leq k \leq
    \min \bigl\{\lceil n/j \rceil, \lceil n/\underline{\chi_{eq}} \rceil, \overline{\alpha}(N(u)) + 1 \bigr\},$$
we verify whether $(x^*,w^*)$ violates a weaker version of the $(u,j,k,Q)$-clique-neighborhood which consists of
 replacing $\alpha(N(u))$ by $\overline{\alpha}(N(u))$ to compute $b_{ul}$ in Definition \ref{NEIGHBOR3DEF}.
\item \emph{2-rank inequalities}. For each $j \in \{1,\ldots,\overline{\chi_{eq}}\}$, we find a pair of vertices $v_1$ and $v_2$ such that $x^*_{v_1 j} + x^*_{v_2 j}$
has the highest value, but less than 1, and we initialize $S = \{v_1, v_2\}$ and $Q = \varnothing$. Then,
we fill sets $S$ and $Q$ by adding vertices, one by one, with the following rule.
Let $v$ be a vertex with largest fractional value of $x^*_{vj}$, adjacent to every vertex of $Q$ and such that
$S \cup \{v\}$ is 2-maximal. If $S \subset N[v]$ we add $v$ to the set $Q$. Otherwise, we add it to $S$.
When it is not possible to add more vertices to $S$ or $Q$, we check whether the $(S,Q,j)$-2-rank inequality
cuts off $(x^*,w^*)$.\\
We also implement an additional mechanism that prevents from generating violated cuts with similar support.
Each time a $(S,Q,j)$-2-rank inquality is found (not necessarily violated by the fractional solution), we mark every
vertex of $S$ as \emph{forbidden}, to mean that those vertices can not take part of upcoming $(S,Q,j)$-2-inequalities.
The procedure is performed over and over, until not more than 5 vertices are not forbidden. Then, we unmark all the
forbidden vertices and start over with the next value of $j$.
\item \emph{$S$-color inequalities}. We first find $t$ such that $0 < w_t < 1$ and $w_{t+1} = 0$. If $t$
does not exist (meaning that $w^* \in \mathbb{Z}^n$), we do not generate any cut. Otherwise, we order in
decreasing way the color classes $j \in \{ 1, \ldots, t \}$ according to the number of fractional
variables $x^*$, \ie $|\{ v : x^*_{vj} \notin \mathbb{Z}^n ~\forall~v \in V \}|$. Then, for each $s \in \{ 2, \ldots, t-2 \}$ such that
$$s \geq 1 + \min \{ n - k \lfloor n/k \rfloor : k \in \{ 1,\ldots, t\} \land \textrm{$k$ does not divide $n$} \}$$
(see Remark \ref{REMARKONLY}.4), we scan $S$-color inequalities with $|S| = s$ and $S$ having the most fractional classes,
looking for the inequality that maximizes violation. Once the best $S$-color inequality is identified we check whether
it cuts off $(x^*,w^*)$.\\
The procedure given before allows us to produce only one inequality. In order to generate more inequalities we do the
following. Each time a $S$-color inequality is identified (regardless of the inequality is violated or not), we
mark one color class belonging to $S$ as \emph{forbidden}, to mean that it can not take part of upcoming $S$-color
inequalities. Then, we repeat the procedure until only two color classes are not forbidden.
\item \emph{Subneighborhood and Outside-neighborhood inequalities}.
They are handled by enumeration: for each $j \in \{1,\ldots,\overline{\chi_{eq}}\}$ and $u$
such that $\underline{\alpha}(N(u))$ $\geq 3$ (because vertices $u$ with $\alpha(N(u)) \leq 2$
lead to clique and 2-rank cuts), we check whether $(x^*,w^*)$ violates a weaker version of these inequalities,
defined as follows. For the subneighborhood inequalities, we compute
$\xi_{k} = \min \{ \lceil n/\underline{\chi_{eq}} \rceil, \lceil n/k \rceil, \overline{\alpha}(N(u))\}$,
and then we consider inequalities of the form:
\[ \xi_j x_{uj} + \sum_{v \in N(u)} x_{vj} + \sum_{k = j+1}^n (\xi_j - \xi_k) x_{uk} \leq \xi_j w_j. \]
For the outside-neighborhood inequalities, we first check the condition of Theorem \ref{TNEIGHBOR2NEC}, \ie
$\underline{\alpha}(N(u)) \geq \lfloor n/\max \{j,\underline{\chi_{eq}}\} \rfloor$ and then we use the inequality that
results from replacing $t_j$ with $\max \{ j, \underline{\chi_{eq}} \}$ in (\ref{RNEIGHBOR2}).
\end{itemize}

\subsection{Performance of cuts at root node} \label{SCOMPU2}

In order to evaluate the quality of a cutting-plane algorithm, we analyze the increase of the lower bound
when cuts are added progressively to the LP-relaxation.


In this experiment, we compare the performance of seven strategies given in Table \ref{strategies}, where each one is a
combination of separation routines that determine the behaviour of the cutting-plane algorithm.

\begin{table}[h]
\begin{center} \footnotesize
\begin{tabular}{|@{\hspace{3pt}}c@{\hspace{3pt}}|@{\hspace{3pt}}c@{\hspace{3pt}}|@{\hspace{3pt}}c@{\hspace{3pt}}|@{\hspace{3pt}}c@{\hspace{3pt}}|@{\hspace{3pt}}c@{\hspace{3pt}}|@{\hspace{3pt}}c@{\hspace{3pt}}|@{\hspace{3pt}}c@{\hspace{3pt}}|@{\hspace{3pt}}c@{\hspace{3pt}}|}
\hline
Strategy & Clique & 2-rank & Block & S-color & Sub- & Outside- & Clique- \\
Name & & & & & neighbor. & neighbor. & neighbor. \\
\hline
S1 & $\bullet$ & & & & & & \\
S2 & $\bullet$ & $\bullet$ & & & & & \\
S3 & $\bullet$ & $\bullet$ & $\bullet$ & & & & \\
S4 & $\bullet$ & $\bullet$ & $\bullet$ & $\bullet$ & & & \\
S5 & $\bullet$ & $\bullet$ & $\bullet$ & $\bullet$ & $\bullet$ & & \\
S6 & $\bullet$ & $\bullet$ & $\bullet$ & $\bullet$ & $\bullet$ & $\bullet$ & \\
S7 & $\bullet$ & $\bullet$ & $\bullet$ & $\bullet$ & $\bullet$ & $\bullet$ & $\bullet$ \\
\hline
\end{tabular}
\end{center}
  \caption{Strategies}
  \label{strategies}
\end{table}

The experiment was carried out on a server equipped with an Intel i5 2.67 Ghz over Linux Operating
System.
The server also has the well known general-purpose IP-solver CPLEX 12.2 which is used for solving linear
relaxations. We consider 50 randomly generated graphs with 150 vertices and different densities of
edges.
For each graph and each strategy, we ran 30 iterations of the cutting-plane algorithm.

In order to compare the strategies involved, we call $LB_i$ to the objective value of the linear relaxation after
the $i$th iteration and we compute:
\begin{itemize}
\item Improvement in the lower bound, \ie the difference between the lower bound of $\chi_{eq}$ obtained after and
before the execution of the cutting-plane algorithm: $Impr = \lceil LB_{30} \rceil - \lceil LB_0 \rceil$.
\item Time elapsed up to reach the best lower bound, \ie at iteration $\min \{ i : \lceil LB_i \rceil = \lceil LB_{30} \rceil \}$. We denote it as $Time$.
\item Number of cuts generated up to reach the best lower bound. We denote it as $Cuts$.
\end{itemize}

For graphs having 10\% of density, all the strategies showed no improvement in the lower bound. For graphs having
at least 30\% of density, all the strategies except S1 reaches the same bound in every instance, while S1 attains worse bounds.
In Figure \ref{IMPRFIG}, we display the average of $Impr$ over instances having the
same density. 

\begin{figure}[h]
  \begin{center}
    \includegraphics[width=9cm]{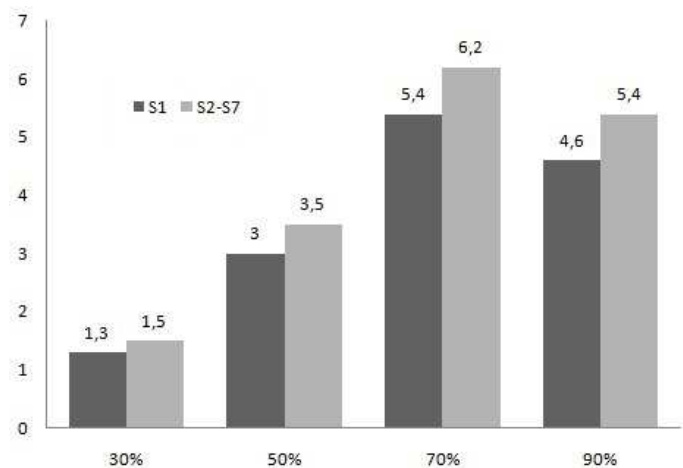}
  \end{center}
  \caption{Average of $Impr$ for strategies S1 and S2-S7}
  \label{IMPRFIG}
\end{figure}

As we have mentioned, strategies S2-S7 reached the same bound in every instance. One way to
tie them is by inspecting the average of $Time$, \ie the time elapsed, and $Cuts$, \ie
the number of cuts generated. The smaller $Time$ is, the sooner the algorithm reaches the best bound.
On the other hand, the less $Cuts$ is, the better the quality of the cuts involved are.
Table \ref{TABLECUTS} resumes these results. Best values are emphasized with bold fond.

\begin{table}[h]
\begin{center}
\begin{tabular}{c|c@{\hspace{4pt}}c@{\hspace{4pt}}c@{\hspace{4pt}}c@{\hspace{4pt}}c@{\hspace{4pt}}c|c@{\hspace{4pt}}c@{\hspace{4pt}}c@{\hspace{4pt}}c@{\hspace{4pt}}c@{\hspace{4pt}}c}
\%Density & \multicolumn{6}{c|}{Time} & \multicolumn{6}{c}{Cuts} \\
Graph & S2 & S3 & S4 & S5 & S6 & S7 & S2 & S3 & S4 & S5 & S6 & S7 \\
\hline
30 & 77 & \textbf{75} & \textbf{74} & 82 & 82 & 98 & \textbf{2034} & 2053 & 2053 & 2093 & 2093 & 3203 \\
50 & \textbf{241} & 248 & 248 & 267 & 267 & 252 & \textbf{3694} & 3796 & 3796 & 4065 & 4065 & 3944 \\
70 & 648 & \textbf{601} & 632 & 700 & 738 & 735 & 6182 & 5805 & \textbf{5670} & 6306 & 6405 & 6377 \\
90 & 720 & 763 & \textbf{612} & 658 & \textbf{610} & \textbf{610} & 5443 & 5493 & \textbf{5065} & 5187 & 5143 & 5143
\end{tabular}
\end{center}
  \caption{Average of $Time$ and $Cuts$ for strategies S2-S7}
  \label{TABLECUTS}
\end{table}

As we can see from Table \ref{TABLECUTS}, strategy S4 reaches the best lower bound with fewer cuts for graphs having
at least 70\% of density and the amount of cuts generated is relatively acceptable for graphs having at most 50\% of
density. Strategy S4 also has the best balance between number of cuts generated and time consumed. Therefore, this strategy is a good candidate for our cutting-plane algorithm.


From the previous results we conclude that the cuts obtained from the polyhedral study are indeed effective. They
appear to be strong in practice, increasing significantly the initial lower bound.


Nevertheless, the long-term efficiency of cuts can not be appreciated here and require further
experimentation. This topic is covered in the next section.

\subsection{Long-term efficiency of cuts} \label{SCOMPU3}

The purpose of the following experiment is to compare the Branch and
Bound (B\&B) algorithm of CPLEX with a Cut and Branch.
The algorithm consists of applying 30 iterations of the cutting-plane algorithm to the initial
relaxation. Then, we run a Branch and Bound enumeration until the
optimal solution is found or a time limit of 2 hours is reached.

In order to do that, we apply both algorithms to 40 randomly generated graphs with different number of vertices
and densities of edges. Since instances having 10\% and 90\% of density are easier to solve, we increased the
number of vertices of them. 

Preliminary experiments showed that strategies S2-S6 have a similar behaviour each other, although S4 presents the best performance among them.
This led us to deepen the analysis of strategies S1, S4 and S7. Table \ref{TABLERANDOM} reports:
\begin{itemize}
\item Percentage of solved instances within 2 hours of execution.
\item Average of nodes evaluated over solved instances.
\item Average of total CPU time in seconds over solved instances.
\end{itemize}

\begin{table} \footnotesize
\begin{center}
\begin{tabular}{c@{\hspace{3pt}}c|c@{\hspace{3pt}}c@{\hspace{3pt}}c@{\hspace{3pt}}c|c@{\hspace{3pt}}c@{\hspace{3pt}}c@{\hspace{3pt}}c|c@{\hspace{3pt}}c@{\hspace{3pt}}c@{\hspace{3pt}}c}
Num. of & \%Density & \multicolumn{4}{c|}{\% Solved} & \multicolumn{4}{c|}{Nodes} & \multicolumn{4}{c}{Time} \\
Vertices & Graph & B\&B & S1 & S4 & S7 & B\&B & S1 & S4 & S7 & B\&B & S1 & S4 & S7 \\
\hline
90 & 10 & 100 & 100 & 100 & 100 & 2933 & 3050 & \textbf{1718} & \textbf{1718} & 33 & 33 & \textbf{21} & \textbf{21} \\
60 & 30 & 100 & 100 & 100 & 100 & 7515 & 2976 & \textbf{1050} & 6567 & 129 & 52 & \textbf{35} & 130 \\
60 & 50 & 100 & 100 & 100 & 100 & 29490 & 20639 & 21232 & \textbf{15786} & 974 & 1065 & \textbf{812} & \textbf{812} \\
60 & 70 & 87.5 & \textbf{100} & \textbf{100} & \textbf{100} & 19811 & 12891 & \textbf{5330} & 6454 &	734 & 508 & \textbf{327} & 340 \\
90 & 90 & 62.5 & 62.5 & \textbf{100} & \textbf{100} & 52545 & 35538 & \textbf{12645} & 15536 & 2332 & 2404 & \textbf{689} & 1088
\end{tabular}
\end{center}
  \caption{Performance of different strategies}
  \label{TABLERANDOM}
\end{table}


The new inequalities show again a substantial improvement and, in particular, strategy S4 is established as the best
one. It is worth mentioning that strategy S7 evaluated fewer nodes than S4 when solving instances of 50\% of density, but this reduction on the number of nodes
was not enough to counteract the CPU time elapsed.

From the latter results we conclude that the new inequalities used as cuts are good enough to
be considered as part of the implementation of a further competitive Branch and Cut algorithm that solves the ECP.

%

%



\newpage

\appendix

\begin{center} {\Large
\textbf{A polyhedral approach for the Equitable Coloring Problem}}\\[12pt]
Isabel M\'endez-D\'iaz,~~Graciela Nasini,~~Daniel Sever\'in\\[12pt]
\textsc{Online Appendix}
\end{center}

\section{Introduction} 

In this appendix we present sufficient conditions for some valid inequalities related to the \emph{Equitable Coloring Problem}
to be facet-defining inequalities.


All the proofs are based in the same technique, frequently used in the
literature for this kind of results, which is described in the following
remark.

\begin{trem} \label{TECHNIQUE}
Let $\pi^X x + \pi^W w \leq \pi_0$ be a valid inequality for $\ECP$ defining a proper face $\F'$.
In order to prove that $\F'$ is a facet of $\ECP$ we have to show that, given any face $\F = \{ (x,w) \in \ECP ~:~ \lambda^X x + \lambda^W w = \lambda_0\}$ such that $\F' \subset \F$, $\lambda^X x + \lambda^W w = \lambda_0$ can be written as a linear combination of $\pi^X x + \pi^W w = \pi_0$ and the minimal equation system for $\ECP$ given in Theorem (\ref{TDIM}). 
This last condition becomes equivalent to prove that $(\lambda^X, \lambda^W)$ verifies an equation system of $dim(\ECP)-1$ equalities.
The validity of each equality in the system is derived from the condition 
$\lambda^X x^1 + \lambda^W w^1 = \lambda_0 = \lambda^X x^2 + \lambda^W w^2$ applied on a suitable pair of equitable colorings $(x^1, w^1), (x^2,w^2)$ lying on $\F'$.
\end{trem}

For the sake of simplicity, we directly present the corresponding equation system on $(\lambda^X, \lambda^W)$ and the proposed equitable colorings used to derive each equation, bypassing how to get that equation system from the minimal equation system given in Theorem (\ref{TDIM}) and the inequality at hand.

As we have mentioned in Section \ref{SPOLYT}, we present equitable colorings by using mappings, color classes or binary vectors, according to our convenience.




\subsection{2-rank inequalities}

\begin{tthm} \label{T2RANK1}
Let $G$ be a monotone graph, $S \subset V$ such that $\alpha(S) = 2$ and $j \leq \lceil n/2 \rceil - 1$. If
\begin{enumerate}
\item[(i)] there exists a stable set $H$ of size 3 in $G$ such that:
\begin{itemize}
\item if $n$ is odd, the complement of $G-H$ has a perfect matching $M$ and both endpoints of some edge
of $M$ belong to $S$,
\item if $n$ is even, there exists another stable set $H'$ of size 3 in $G$ such that
$H \cap H' = \varnothing$, the complement of $G - (H \cup H')$ has a perfect matching $M$, both
endpoints of some edge of $M$ belong to $S$ and there exist vertices $h \in H$, $h' \in H'$ not
adjacent each other, 
\end{itemize}
\item[(ii)] for all $v \in V \backslash S$, there exist different vertices $s, s' \in S$
and a stable set $H_v = \{v, s, s'\}$ in $G$ such that:
\begin{itemize}
\item if $n$ is odd, the complement of $G-H_v$ has a perfect matching,
\item if $n$ is even, there exists another stable set $H'_v$ of size 3 in $G$ such that
$H_v \cap H'_v = \varnothing$ and the complement of $G - (H_v \cup H'_v)$ has a perfect matching,
\end{itemize}
\item[(iii)] for all $k$ such that $\max \{\chi_{eq}, j\} \leq k \leq \lceil n/2 \rceil - 2$, there exists a $k$-eqcol
where two vertices of $S$ share the same color,
\end{enumerate}
then the $(S,j)$-rank inequality, \ie
\begin{equation} \label{R2RANK1}
\sum_{v \in S} x_{vj} + \sum_{v \in V} x_{vn-1} \leq 2 w_j + w_{n-1} - w_n,
\end{equation}
defines a facet of $\ECP$.
\end{tthm}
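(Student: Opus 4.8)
The plan is to use the facet-proving technique of Remark~\ref{TECHNIQUE}. Before setting up the generic equation, I would first determine which equitable colorings lie on the face $\F'$ defined by (\ref{R2RANK1}), just as in Remarks~\ref{NEIGHBOR2POINTS} and~\ref{NEIGHBOR3POINTS}. Evaluating the equality case of (\ref{R2RANK1}) on an $r$-eqcol shows: every coloring with $r<j$ lies on $\F'$; for $j\le r\le n-2$ the class $C_j$ must contain exactly two vertices of $S$ (which is possible precisely because $\alpha(S)=2$); for $r=n-1$ either the unique size-$2$ class is $C_{n-1}$ and $|C_j\cap S|=1$, or $|C_{n-1}|=1$ and $|C_j\cap S|=2$; and for $r=n$ one needs $|C_j\cap S|=1$. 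This description explains the shape of hypotheses (i)--(iii): the mid-range eqcols with two $S$-vertices sharing color $j$ are supplied by~(iii), while the perfect matchings in~(i) produce colorings near $\lceil n/2\rceil$ colors in which $C_j$ is a matched pair of $S$ and the remaining classes are the other matched pairs.

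Following Remark~\ref{TECHNIQUE}, I would then take an arbitrary equation $\lambda^X x+\lambda^W w=\lambda_0$ valid on $\F'$ and show it is a linear combination of (\ref{R2RANK1}) and the minimal equation system of Theorem~\ref{TDIM}. Since $G$ is monotone, $\S=\varnothing$, so this system consists only of the assignment equations $\sum_k x_{vk}=1$, the equalities $w_i=1$ for $i\le\chi_{eq}$, and $\sum_v x_{vn}=w_n$. The target is to prove the existence of scalars $\mu_v$ $(v\in V)$, $\rho$ and $\gamma$ for which the coefficient of each $x_{vk}$ equals $\mu_v$, \emph{except} for an extra $+\rho$ on $x_{vj}$ with $v\in S$, an extra $+\rho$ on every $x_{v,n-1}$, and the contribution of $\gamma$ on the coordinates governed by $\sum_v x_{vn}=w_n$; the $w$-coefficients are then forced. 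This is exactly the assertion that $\F'$ has dimension $dim(\ECP)-1$.

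The coefficients are pinned down by comparing pairs of colorings of $\F'$ that differ by a single $swap$ or $intro$, in the spirit of Remark~\ref{REMPROC}. Exchanging two non-special colors inside an $n$-eqcol lying on $\F'$ gives $\lambda^X_{v,k}-\lambda^X_{v,k'}$ independent of $v$; using monotonicity together with the mid-range colorings of~(iii) I would propagate this to conclude $\lambda^X_{vk}=\mu_v$ off the special coordinates. Moving a matched $S$-pair in and out of color $j$, using the dense colorings built from the matchings in~(i), isolates the common surplus $\rho$ on the coordinates $x_{vj}$, $v\in S$. The $intro$ operator and swaps involving colors $n-1$ and $n$ on the $(n-1)$-eqcols then fix the coefficient $\rho$ on the $x_{v,n-1}$ and the coefficient $\gamma$ attached to $\sum_v x_{vn}=w_n$. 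Finally, hypothesis~(ii) furnishes, for each $v\notin S$, a stable triple $H_v=\{v,s,s'\}$ whose complement admits the required perfect matching, and the colorings placing $H_v$ appropriately let me tie $v$'s coefficients to those of the vertices of $S$, ruling out any residual vertex-dependent term.

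The main obstacle is the simultaneous bookkeeping of the three special colors $j$, $n-1$ and $n$ while keeping every auxiliary coloring on $\F'$: each comparison must preserve the face conditions derived above, and this is only possible when the complement of the deleted stable set (or pair of stable sets) has a perfect matching, which is exactly what the parity split in~(i) and~(ii) guarantees --- removing a single triple leaves an even number of vertices when $n$ is odd, whereas for even $n$ two disjoint triples, joined through the non-adjacent pair $h,h'$, are needed. Assembling enough of these comparisons to cover every vertex both inside and outside $S$, and verifying that the resulting linear relations indeed force the target form, is where the bulk of the technical work lies.
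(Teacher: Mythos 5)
Your proposal is correct and follows essentially the same route as the paper: the technique of Remark~\ref{TECHNIQUE}, your (accurate) characterization of the equitable colorings lying on $\F'$, and a target form for $(\lambda^X,\lambda^W)$ --- multipliers $\mu_v$ on the assignment equations, $\rho$ on the rank inequality, $\gamma$ on $\sum_{v} x_{vn}=w_n$ --- that is precisely equivalent to the equation system (a)--(f) which the paper verifies through pairs of face colorings related by $swap$ and $intro$, built from hypotheses (i)--(iii). The only substantive difference is one of bookkeeping: in the paper, hypothesis (iii) (together with monotonicity and $intro$) serves to exhibit face colorings with \emph{every} number of colors, which is what pins the $w$-coefficients, while the $x$-coefficient relations come from the stable sets and matchings of (i)--(ii); your sketch slightly misattributes (iii) to the $x$-coefficients and defers exactly this assembly of concrete coloring pairs, which is where the paper's detailed work lies.
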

\begin{proof}
Let $\F'$ be the face of $\ECP$ defined by (\ref{R2RANK1}) and
$\F = \{ (x,w) \in \ECP ~:~ \lambda^X x + \lambda^W w = \lambda_0\}$ be a face such that $\F' \subset \F$.
According to Remark \ref{TECHNIQUE}, we have to prove that $(\lambda^X, \lambda^W)$ verifies the following equation system: 
\begin{enumerate}
\item[(a)] $\lambda^X_{vj} = \lambda^X_{vn} + \lambda^W_n,~~~ \forall~ v \in S$.
\item[(b)] $\lambda^X_{vn-1} = \lambda^X_{vn} + \lambda^W_n,~~~ \forall~ v \in V$.
\item[(c)] $\lambda^X_{vk} = \lambda^X_{vn-1} + \lambda^W_{n-1},~~~
                   \forall~ v \in V,~ 1 \leq k \leq n-2,~ k \neq j$.
\item[(d)] $\lambda^X_{vj} = \lambda^X_{vn-1} + \lambda^W_{n-1},~~~
                                            \forall~ v \in V \backslash S$.
\item[(e)] $\lambda^W_k = 0,~~~ \forall~ \chi_{eq} + 1 \leq k \leq n-2,~ k \neq j$.
\item[(f)] If $j \geq \chi_{eq} + 1$ then $\lambda^W_j = - 2 \lambda^W_{n-1}$.
\end{enumerate}
We present pairs of equitable colorings lying on $\F'$ that allow us to
prove the validity of each equation in the previous system.
\begin{enumerate}
\item[(a)] Let $s, s' \in S$ be non adjacent vertices.\\
\textbf{Case $v = s$}. Let $c^1$ be a $(n-1)$-eqcol such that $c^1(s) = c^1(s') = j$ and
$c^2 = intro(c^1,s)$. Then, $\lambda^X_{sj} = \lambda^X_{sn} + \lambda^W_n$.\\
\textbf{Case $v \neq s$}. 
Let $c^1$ be a $n$-eqcol such that $c^1(v) = j$, $c^1(s) = n$ and $c^2 = swap_{j,n}(c^1)$. Then, 
$\lambda^X_{vj} + \lambda^X_{sn} = \lambda^X_{vn} + \lambda^X_{sj}$. Since $\lambda^X_{sj} = \lambda^X_{sn} + \lambda^W_n$, we obtain $\lambda^X_{vj} = \lambda^X_{vn} + \lambda^W_n$.
\item[(b)] \textbf{Case $v \notin S$}. 
By hypothesis (ii), there exist $s, s' \in S$ such that $\{v, s, s'\}$ is a stable set.
Let $c^1$ be a $(n-1)$-eqcol such that $c^1(v) = c^1(s) = n-1$, $c^1(s') = j$ and
$c^2 = intro(c^1,v)$. Therefore, $\lambda^X_{vn-1} = \lambda^X_{vn} + \lambda^W_n$.\\
\textbf{Case $v \in S$ and $|S| = 2$}. By hypothesis (ii), there exist $u \in V \backslash S$ and $v' \in S$ such
that  $\{u, v, v'\}$ is a stable set. Let $c^1$ be a $(n-1)$-eqcol such that $c^1(u) = c^1(v) = n-1$, $c^1(v') = j$
and $c^2 = intro(c^1,v)$. Therefore, $\lambda^X_{vn-1} = \lambda^X_{vn} + \lambda^W_n$.\\
\textbf{Case $v \in S$ and $|S| \geq 3$}. Let $s, s' \in S$ be non adjacent vertices,
$c^1$ be a $(n-1)$-eqcol such that $c^1(s) = c^1(s') = n-1$ and other vertex
of $S$ is painted with color $j$, and $c^2 = intro(c^1,s)$.
Then, $\lambda^X_{sn-1} = \lambda^X_{sn} + \lambda^W_n$ and the condition is proved for the case $v = s$.
If instead $v \neq s$, let $c^1$ be a $n$-eqcol such that $c^1(v) = n-1$, $c^1(s) = n$, other vertex of $S$ is
painted with color $j$ and $c^2 = swap_{n,n-1}(c^1)$. Then, 
$\lambda^X_{vn-1} + \lambda^X_{sn} = \lambda^X_{vn} + \lambda^X_{sn-1}$.
Since $\lambda^X_{sn-1} = \lambda^X_{sn} + \lambda^W_n$, we conclude that $\lambda^X_{vn-1} = \lambda^X_{vn} + \lambda^W_n$.
\item[(c)] Let $H$ and $M$ be the stable set and the matching given by hypothesis (i). Let $s, s' \in S$ be
the endpoints of an edge of $M$ and let $u, u' \in H$.\\
\textbf{Case $v = u$}. Let $c^1$ be a $(n-2)$-eqcol such that $c^1(u) = c^1(u') = k$, $c^1(s) = c^1(s') = j$ and
$c^2 = intro(c^1,u)$. We conclude that $\lambda^X_{uk} = \lambda^X_{un-1} + \lambda^W_{n-1}$.\\
\textbf{Case $v \neq u$}. Let $c^1$ be a $n$-eqcol such that $c^1(u) = k$, $c^1(v) = n-1$, a vertex of $S$ is
painted with color $j$ and $c^2 = swap_{k,n-1}(c^1)$. Then,
$\lambda^X_{u k} + \lambda^X_{v n-1} = \lambda^X_{u n-1} + \lambda^X_{vk}$.
Since $\lambda^X_{uk} = \lambda^X_{un-1} + \lambda^W_{n-1}$, we conclude that
$\lambda^X_{vk} = \lambda^X_{vn-1} + \lambda^W_{n-1}$.
\item[(d)] Let $H_v = \{v, s, s'\}$, $H'_v$ (if $n$ is even) and $M_v$ be the stable sets and the matching given
by hypothesis (ii), and let $H$, $H'$ (if $n$ is even) and $M$ be the stable sets and the matching given by hypothesis
(i). Let $c^1$ be a $(\lceil n/2 \rceil - 1)$-eqcol such that the color class $j$ is $H_v$ and the
remaining color classes are $H'_v$ (if $n$ is even) and the endpoints of edges of $M_v$.
Let $\hat{s}, \hat{s}' \in S$ be the endpoints of an edge of $M$ and
let $c^2$ be a $(\lceil n/2 \rceil - 1)$-eqcol such that the color class $j$ is $\{\hat{s}, \hat{s}'\}$ and the
remaining color classes are $H$, $H'$ (if $n$ is even) and the endpoints of edges of $M$ except
$(\hat{s}, \hat{s}')$. These colorings imply
\[ \lambda^X_{vj} + \lambda^X_{sj} + \lambda^X_{s'j} +
   \sum_{w \in V \backslash \{v, s, s'\}} \lambda^X_{w c^1(w)}
	= \lambda^X_{\hat{s}j} + \lambda^X_{\hat{s}'j} +
   \sum_{w \in V \backslash \{\hat{s}, \hat{s}'\}} \lambda^X_{w c^2(w)}. \]
Applying conditions (a)-(c), this last equality becomes
\[ \lambda^X_{vj} + \sum_{w \in V \backslash \{v\}} \lambda^X_{w n} + (n-3) \lambda^W_{n-1}
+ (n-1) \lambda^W_n = \sum_{w \in V} \lambda^X_{w n} + (n-2) \lambda^W_{n-1} + n \lambda^W_n,\]
giving rise to the desired result.
\item[(e)] Let us observe that from any $k$-eqcol $(x^1,w^1)$ and any $(k-1)$-eqcol
$(x^2,w^2)$ lying on $\F'$ we get $\lambda^X x^1 + \lambda^W_k = \lambda^X x^2$.
Then, applying conditions (a)-(d) yields $\lambda^W_k = 0$.\\
Thus we only need to prove that, for any $r$ such that $\chi_{eq} \leq r \leq n-2$, there exists an $r$-eqcol $c$
lying on $\F'$.\\
\textbf{Case $r < j$}. The existence of $c$ is guaranteed by the monotonicity of $G$.\\
\textbf{Case $j \leq r \leq \lceil n/2 \rceil-2$}. Hypothesis (iii) guarantees the existence of an $r$-eqcol $c'$
where two vertices $s, s' \in S$ satisfy $c'(s) = c'(s')$. Then, $c = swap_{c'(s),j}(c')$ is an $r$-eqcol that lies
on $\F'$.\\
\textbf{Case $r = \lceil n/2 \rceil-1$}. $c$ may be one of the colorings given in condition (d).\\
\textbf{Case $r = \lceil n/2 \rceil$}. Let $H$, $H'$ (if $n$ is even) and $M$ be the stable sets and the matching
given by hypothesis (i). Let $s, s' \in S$ be the endpoints of an edge of $M$ and let $h \in H$, $h' \in H'$ (if $n$
is even) be non adjacent vertices.\\
If $n$ is odd, color classes of $c$ are $\{h\}$, $H \backslash \{h\}$ and the endpoints of edges of $M$ where
$\{s,s'\}$ is the class $j$. If instead $n$ is even, color classes of $c$ are $\{h,h'\}$, $H \backslash \{h\}$,
$H' \backslash \{h'\}$ and the endpoints of edges of $M$ where $\{s,s'\}$ is the class $j$.\\
\textbf{Case $r \geq \lceil n/2 \rceil+1$}. Let us consider the $\lceil n/2 \rceil$-eqcol yielded in the
previous case and let $v_1, v_2$ be vertices sharing a color different from $j$.
In order to generate a $(\lceil n/2 \rceil + 1)$-eqcol $c$, we introduce a new color on $v_1$, \ie $c = intro(c', v_1)$
where $c'$ is the $\lceil n/2 \rceil$-eqcol. By repeating this procedure, we can generate a
$(\lceil n/2 \rceil + 2)$-eqcol and so on.
\item[(f)] Let $c^1$ be a $j$-eqcol such that $c^1(s) = c^1(s') = j$ for some $s, s' \in S$ and
$c^2$ be a $(j-1)$-eqcol (the existence of these colorings is proved above). Hence,
\[ \lambda^X_{sj} + \lambda^X_{s'j} +
   \sum_{v \in V \backslash \{s, s'\}} \lambda^X_{v c^1(v)} + \lambda^W_j
	= \sum_{v \in V} \lambda^X_{v c^2(v)}. \]
Application of conditions (a)-(d) yields $\lambda^W_j = - 2 \lambda^W_{n-1}$.
\end{enumerate}
\end{proof}

Let us present an example where the previous theorem is applied.\\

\noindent \textbf{Example.}
Let $G$ be the graph presented in Figure \ref{minigraph1}. We have that $G$ is monotone and $\chi_{eq}(G) = 5$. 
If $S = \{1,2,\ldots,7\}$, $\alpha(S) = 2$ and $H = \{4, 7, 8\}$ is a stable set such that $\overline{G - H}$ has the perfect matching $\{(1,10), (2,11), (3,5), (6,9)\}$ with $\{3,5\}\subset S$. Moreover, it is not hard to verify that for all $v \in \{8,9,10,11\}$ there exists a stable set $H_v = \{ 4, 7, v\}$ such that $\overline{G - H_v}$ has a perfect matching.
Then, if $1 \leq j \leq 5=\lceil 11/2 \rceil-1$, the $(S,j)$-rank inequality is a facet-defining inequality of $\ECP(G)$.  
\begin{figure}[h]
  \centering
\begin{graph}(4, 2)(-2, -1)
	\roundnode{v1}(-2.3,-0.1)
	\autonodetext{v1}[w]{1}
	\roundnode{v2}(-1.7,0.1)
	\autonodetext{v2}[se]{2}
	\roundnode{v3}(-1.2,0)
	\autonodetext{v3}[n]{3}
	\roundnode{v4}(-1.7528,0.7608)
	\autonodetext{v4}[n]{4}
	\roundnode{v5}(-2.6472,0.4702)
	\autonodetext{v5}[nw]{5}
	\roundnode{v6}(-2.6472,-0.4702)
	\autonodetext{v6}[sw]{6}
	\roundnode{v7}(-1.7528,-0.7608)
	\autonodetext{v7}[s]{7}
	\roundnode{v8}(-0.4,0)
	\autonodetext{v8}[n]{8}
	\roundnode{v9}(0.4,0)
	\autonodetext{v9}[n]{9}
	\roundnode{v10}(1.2,0)
	\autonodetext{v10}[n]{10}
	\roundnode{v11}(2,0)
	\autonodetext{v11}[n]{11}
	\edge{v1}{v2}
	\edge{v1}{v3}
	\edge{v1}{v4}
	\edge{v1}{v5}
	\edge{v1}{v6}
	\edge{v1}{v7}
	\edge{v2}{v3}
	\edge{v2}{v4}
	\edge{v2}{v5}
	\edge{v2}{v6}
	\edge{v2}{v7}
	\edge{v3}{v7}
	\edge{v3}{v4}
	\edge{v4}{v5}
	\edge{v5}{v6}
	\edge{v6}{v7}
	\edge{v3}{v8}
	\edge{v8}{v9}
	\edge{v9}{v10}
	\edge{v10}{v11}
\end{graph}
  \caption{}
  \label{minigraph1}
\end{figure}

\begin{tthm} \label{T2RANK2}
Let $G$ be a monotone graph, $S \subset V$ such that $\alpha(S) = 2$ and $Q = \{ q \in S ~:~ S \subset N[q] \}$.
If $|Q| \geq 2$ and
\begin{enumerate}
\item[(i)] no connected component of the complement of $G[S \backslash Q]$ is bipartite,
\item[(ii)] for all $v \in V \backslash S$ verifying $Q \subset N(v)$, there exist two
vertices $s, s' \in S \backslash Q$ and a stable set $H_v = \{v, s, s'\}$ in $G$ such that:
\begin{itemize}
\item if $n$ is odd, the complement of $G-H_v$ has a perfect matching,
\item if $n$ is even, there exists another stable set $H'_v$ of size 3 in $G$ such that
$H_v \cap H'_v = \varnothing$ and the complement of $G - (H_v \cup H'_v)$ has a perfect matching,
\end{itemize}
\end{enumerate}
then, for all $j \leq \lceil n/2 \rceil - 1$, the $(S,Q,j)$-2-rank inequality, \ie
\begin{equation} \label{R2RANK2AGAIN}
\sum_{v \in S\backslash Q} x_{vj} + 2 \sum_{v \in Q} x_{vj} \leq 2 w_j,
\end{equation}
defines a facet of $\ECP$.
\end{tthm}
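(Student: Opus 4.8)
The plan is to follow the standard technique of Remark \ref{TECHNIQUE}: let $\F'$ be the face defined by (\ref{R2RANK2AGAIN}) and let $\F = \{(x,w)\in\ECP : \lambda^X x + \lambda^W w = \lambda_0\}$ be any face with $\F'\subset\F$; I would show that $(\lambda^X,\lambda^W)$ is forced to be a linear combination of the coefficient vector of (\ref{R2RANK2AGAIN}) and the minimal equation system of Theorem \ref{TDIM}. Since $G$ is monotone, $\S=\varnothing$, so the minimal system reduces to (\ref{TDIM1}), (\ref{TDIM2}) and (\ref{TDIM4}), and $G$ admits a $k$-eqcol for every $\chi_{eq}\le k\le n$.

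First I would characterize the colorings on $\F'$. Because $\alpha(S)=2$ and every $q\in Q$ dominates $S$, a coloring with $w_j=1$ lies on $\F'$ exactly when its class $C_j$ is either a singleton $\{q\}$ with $q\in Q$ or a non-adjacent pair $\{s,s'\}\subset S\backslash Q$; every coloring with $w_j=0$ lies on $\F'$ trivially. Then I would write down the explicit system on $(\lambda^X,\lambda^W)$ to be verified, in the spirit of the system (a)--(f) in the proof of Theorem \ref{T2RANK1}: column-independence $\lambda^X_{vk}=\lambda^X_{vk'}$ off column $j$ for every $v$, the vanishing $\lambda^W_k=0$ for $\chi_{eq}<k\le n$ off the tied columns, the fact that the column-$j$ excess $e_v:=\lambda^X_{vj}-\lambda^X_{vk}$ (with $k\ne j$) is $0$ for $v\notin S$, a common value $\rho$ for $v\in S\backslash Q$ and $2\rho$ for $v\in Q$, and finally $\lambda^W_j=-2\rho$ when $j>\chi_{eq}$.

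Each equation would be obtained by exhibiting a pair of colorings on $\F'$ related by a swap or an intro. The column-independence relations come from swaps that leave $C_j$ untouched; these are available because I can freeze a valid $j$-class (a pair from $S\backslash Q$, or a singleton from $Q$) while permuting the other colors, using high-colour-count equitable colorings whose existence follows from monotonicity. The heart of the argument is the determination of the excesses $e_v$. For $v\in S\backslash Q$, comparing colorings whose $j$-class runs along edges of $\overline{G[S\backslash Q]}$ yields $e_s=e_{s'}$ for vertices at even complement-distance; hypothesis (i), \ie the absence of a bipartite component, supplies the odd closed walks needed to collapse the two parity classes and force $e$ to be constant on each component. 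The relation $e_s+e_{s'}=e_q$, obtained by moving a pair $\{s,s'\}\subset S\backslash Q$ out of color $j$ while moving a vertex $q\in Q$ into it, then pins $e_q=2\rho$ and simultaneously ties all components to a single value $\rho$. To handle $v\in V\backslash S$ I would use the sparse $(\lceil n/2\rceil-1)$-eqcols in which $C_j$ is a stable triple $\{v,s,s'\}$ with $s,s'\in S\backslash Q$; such a coloring lies on $\F'$ precisely because only the two $S$-vertices contribute to the left-hand side, and hypothesis (ii), with its perfect-matching condition split into the odd and even $n$ cases, is exactly what guarantees the rest of $G$ can be completed into an equitable coloring with the right number of colors, giving $e_v=0$.

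The main obstacle I expect is the interplay between hypotheses (i) and (ii) in producing the excess relations: establishing $e$ constant across all of $S\backslash Q$ requires the non-bipartiteness of every component together with the $Q$-bridging relation, and making each comparison legitimate requires realizing the prescribed small color classes inside genuine equitable colorings, forcing a careful parity (odd/even $n$) case analysis exactly as in Theorem \ref{T2RANK1}. Verifying $\lambda^W_k=0$ for the intermediate colors reduces, via monotonicity, to producing an $r$-eqcol on $\F'$ for every $\chi_{eq}\le r\le n-2$, which I would build from the matching hypotheses and then thin out by repeated intro operations.
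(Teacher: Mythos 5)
Your skeleton is the same as the paper's: the same appeal to Remark \ref{TECHNIQUE}, the same equation system written in excess form (your $e_v=0$ off $S$, $e_v=\rho$ on $S\backslash Q$, $e_q=2\rho$ on $Q$, and $\lambda^W_j=-2\rho$ correspond exactly to the paper's equations (a)--(g)), and the same core mechanism: pairing relations $e_s+e_{s'}=e_q$ along edges of $\overline{G[S\backslash Q]}$ combined with hypothesis (i). Your parity-class argument is algebraically equivalent to the paper's induction on the distance to an odd cycle, so that part is sound.

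The genuine gap is in your treatment of $v\in V\backslash S$. You propose to handle \emph{every} such $v$ by a stable triple $\{v,s,s'\}$ supplied by hypothesis (ii), but hypothesis (ii) is only assumed for vertices $v$ with $Q\subset N(v)$; for a vertex $v$ with some $\hat q\in Q\backslash N(v)$ no such triple or matching completion is guaranteed to exist ($v$ may be adjacent to all of $S\backslash Q$), so your argument cannot be carried out for those vertices. They need a separate argument, and the paper gives the cheap one: take an $(n-1)$-eqcol whose color class $j$ is $\{\hat q,v\}$ --- this lies on the face because $\hat q$ contributes $2$ to the left-hand side while $v$, being outside $S$, contributes $0$ --- and compare it with $intro(\cdot,v)$ to get $e_v=0$ directly. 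This also exposes the error in your ``exactly when'' characterization of the face: a coloring with $w_j=1$ lies on $\F'$ if and only if $C_j\cap S$ is a singleton of $Q$ or a non-adjacent pair of $S\backslash Q$, but $C_j$ itself may contain further vertices outside $S$; your own triples $\{v,s,s'\}$, and the pairs $\{\hat q,v\}$ above, rely on exactly this and contradict your characterization as you stated it. A smaller slip: obtaining $r$-eqcols on $\F'$ for all $\chi_{eq}\le r\le n-2$ by ``thinning out'' matching-based colorings with $intro$ operations only yields $r\ge\lceil n/2\rceil-1$, since $intro$ increases the number of colors; for smaller $r$ you should take the $r$-eqcol guaranteed by monotonicity and swap the class of some $q\in Q$ into position $j$, which places it on the face because $q$ dominates $S$.
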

\begin{proof}
Let $q, q'$ be different vertices of $Q$.

Let $\F'$ be the face of $\ECP$ defined by (\ref{R2RANK2AGAIN}) and
$\F = \{ (x,w) \in \ECP ~:~ \lambda^X x + \lambda^W w = \lambda_0\}$ be a face such that $\F' \subset \F$.
According to Remark \ref{TECHNIQUE}, we have to prove that $(\lambda^X, \lambda^W)$ verifies the following equation system: 
\begin{enumerate}
\item[(a)] $\lambda^X_{vj} = \lambda^X_{vn} + \lambda^W_n,~~~
	      \forall~ v \in V \backslash S ~\textrm{such that}~ Q \backslash N(v) \neq \varnothing$.
\item[(b)] $\lambda^X_{vk} = \lambda^X_{vn} + \lambda^W_n,~~~
	      \forall~ v \in V,~ 1 \leq k \leq n-1,~ k \neq j$.
\item[(c)] $\lambda^X_{qn} + \lambda^X_{vj} = \lambda^X_{qj} + \lambda^X_{vn},~~~
	      \forall~ v \in Q \backslash \{q\}$.
\item[(d)] $\lambda^X_{qn} + 2\lambda^X_{vj} = 2\lambda^X_{vn} + \lambda^X_{qj} + \lambda^W_n,~~~
	      \forall~ v \in S \backslash Q$.
\item[(e)] $\lambda^X_{vj} = \lambda^X_{vn} + \lambda^W_n,~~~
	      \forall~ v \in V \backslash S ~\textrm{such that}~ Q \subset N(v)$.
\item[(f)] $\lambda^W_k = 0,~~~ \forall~ \chi_{eq} + 1 \leq k \leq n-1,~ k \neq j$.
\item[(g)] If $j \geq \chi_{eq} + 1$ then
           $\lambda^X_{qn} + \lambda^W_n = \lambda^X_{qj} + \lambda^W_j$.
\end{enumerate}
We present pairs of equitable colorings lying on $\F'$ that allow us to
prove the validity of each equation in the previous system.
\begin{enumerate}
\item[(a)] Let $\hat{q} \in Q \backslash N(v)$ and let $c^1$ be a $(n-1)$-eqcol such that
$c^1(\hat{q}) = c^1(v) = j$ and $c^2 = intro(c^1,v)$.
We conclude that $\lambda^X_{vj} = \lambda^X_{vn} + \lambda^W_n$.
\item[(b)] Let $s, s' \in S \backslash Q$ be non adjacent vertices.\\
\textbf{Case $v = s$}. Let $c^1$ be a $(n-1)$-eqcol such that $c^1(s) = c^1(s') = k$ and $c^1(q) = j$,
and $c^2 = intro(c^1,s)$. Then, $\lambda^X_{sk} = \lambda^X_{sn} + \lambda^W_n$.\\
\textbf{Case $v \neq s$}. Let $c^1$ be a $n$-eqcol such that $c^1(v) = k$, $c^1(s) = n$. If $v = q$, we make
$c^1(q') = j$. Otherwise, we make $c^1(q) = j$. From the coloring $c^2 = swap_{k,n}(c^1)$ we have
$\lambda^X_{vk} + \lambda^X_{sn} = \lambda^X_{vn} + \lambda^X_{sk}$ and since $\lambda^X_{sk} = \lambda^X_{sn} + \lambda^W_n$ we obtain $\lambda^X_{vk} = \lambda^X_{vn} + \lambda^W_n$.
\item[(c)] Let $c^1$ be a $n$-eqcol such that $c^1(q) = n$, $c^1(v) = j$ and $c^2 = swap_{j,n}(c^1)$.
Therefore, $\lambda^X_{qn} + \lambda^X_{vj} = \lambda^X_{qj} + \lambda^X_{vn}$.
\item[(d)] Let $J$ be the connected component in the complement of $G[S \backslash Q]$
such that $v$ is a vertex of $J$. Since $\alpha(S) = 2$, $J$ does not have triangles.
By hypothesis (i), $J$ is not bipartite and therefore there exists at least an odd cycle in $J$ of size $p$ with $p \geq 5$.\\
Now, let $d(v)$ be the minimum distance in $J$ between $v$ and all the odd cycles in $J$, where the \emph{distance} from a vertex $v$ to an odd cycle is defined as the minimum number of vertices of a path between $v$ and a vertex of the odd cycle.
Condition (d) is proved by induction on $d(v)$.\\
\textbf{Case $d(v) = 0$}. Then, $v$ belongs to an odd cycle of size $p \geq 5$ in $J$. Let $v_1 = v, v_2, \ldots, v_p \in S \backslash Q$ be the vertices of that odd cycle, and let $k_1, k_2, \ldots, k_{p+1}$
be colors different from $j$.\\
We denote by $\oplus$ the sum of two integers modulo $p$.
Let $c^1$, $c^2$, $\ldots$, $c^p$ be $(n-1)$-eqcols such that,
for each $1 \leq i \leq p$, $c^i(v_i) = j$, $c^i(v_{i \oplus 1}) = j$,
$c^i(v_r) = k_r ~~\forall~ r \in \{1,\ldots,p\} \backslash \{i, i \oplus 1\}$, $c^i(q) = k_{p+1}$, and let
$c^{p+1}$ be a $n$-eqcol such that $c^{p+1}(v_1) = n$, $c^{p+1}(v_r) = k_r ~~\forall~ r \in \{2,\ldots,p\}$, $c^{p+1}(q) = j$. For instance, if $p = 5$, colors of $v_1$, $\ldots$, $v_5$ and $q$ would be:
\begin{center} \small
\begin{tabular}{|c|c@{\hspace{2pt}}c@{\hspace{2pt}}c@{\hspace{2pt}}c@{\hspace{2pt}}c@{\hspace{2pt}}c|c|c@{\hspace{2pt}}c@{\hspace{2pt}}c@{\hspace{2pt}}c@{\hspace{2pt}}c@{\hspace{2pt}}c|}
\hline
 \multicolumn{7}{|c|}{$c^i$ with $i$ odd} & \multicolumn{7}{|c|}{$c^i$ with $i$ even} \\
\hline
 size & $v_1$ & $v_2$ & $v_3$ & $v_4$ & $v_5$ & $q$ &
 size & $v_1$ & $v_2$ & $v_3$ & $v_4$ & $v_5$ & $q$ \\
\hline
 $n-1$ & $j$ & $j$ & $k_3$ & $k_4$ & $k_5$ & $k_6$ &
 $n-1$ & $k_1$ & $j$ & $j$ & $k_4$ & $k_5$ & $k_6$ \\
 $n-1$ & $k_1$ & $k_2$ & $j$ & $j$ & $k_5$ & $k_6$ &
 $n-1$ & $k_1$ & $k_2$ & $k_3$ & $j$ & $j$ & $k_6$ \\
 $n-1$ & $j$ & $k_2$ & $k_3$ & $k_4$ & $j$ & $k_6$ &
 $n$ & $n$ & $k_2$ & $k_3$ & $k_4$ & $k_5$ & $j$ \\
\hline
\end{tabular}
\end{center}
We assume that the remaining vertices have the same color in all the colorings.
Thus, we obtain
\[ \sum_{\substack{i = 1\\i~\textrm{odd}}}^{p+1} \sum_{v \in V} \lambda^X_{vc^i(v)} =
\sum_{\substack{i = 1\\i~\textrm{even}}}^{p+1} \sum_{v \in V} \lambda^X_{vc^i(v)} + \lambda^W_n. \]
By condition (b), we get 
$\lambda^X_{qn} + 2\lambda^X_{vj} = 2\lambda^X_{vn} + \lambda^X_{qj} + \lambda^W_n$.\\
\textbf{Case $d(v) \geq 1$}. Let $v' \in J$ be a vertex adjacent to $v$ in $J$ such that
$d(v') = d(v) - 1$. By inductive hypothesis, $\lambda^X_{qn} + 2\lambda^X_{v'j} = 2\lambda^X_{v'n} + \lambda^X_{qj} + \lambda^W_n$.\\
Let $c^1$ be a $(n-1)$-eqcol such that $c^1(v) = c^1(v') = j$ and $c^1(q) = k$, where
$k \neq j$. Let $c^2$ be a $n$-eqcol such that $c^2(v) = k$, $c^2(v') = n$, $c^2(q) = j$ and
$c^2(i) = c^1(i) ~~\forall~i \in V \backslash \{v, v', q\}$. Hence
$\lambda^X_{vj} + \lambda^X_{v'j} + \lambda^X_{qk} = \lambda^X_{vk} + \lambda^X_{v'n} + \lambda^X_{qj}
+ \lambda^W_n$. Multiplying this equality by 2, subtracting
$\lambda^X_{qn} + 2\lambda^X_{v'j} = 2\lambda^X_{v'n} + \lambda^X_{qj} + \lambda^W_n$ and
applying condition (b) yields
$\lambda^X_{qn} + 2\lambda^X_{vj} = 2\lambda^X_{vn} + \lambda^X_{qj} + \lambda^W_n$. 
\item[(e)] By hypothesis (ii), we can establish a $(\lceil n/2 \rceil - 1)$-eqcol $c^1$ such
that color class $j$ is $\{v, s, s'\}$ where $s, s' \in S$ (as we did in condition (d) of Theorem \ref{T2RANK1}).
Let $k$ be the color of $q$ in $c^1$ and $c^2 = swap_{j,k}(c^1)$. We get
$\lambda^X_{vj} = \lambda^X_{vn} + \lambda^W_n$ by applying conditions (a)-(d).
\item[(f)] Since $G$ is monotone, there exist a $k$-eqcol $c$ and a $(k-1)$-eqcol $c'$.
If $k < j$, we consider $c^1 = c$ and $c^2 = c'$. If $k > j$, we consider
$c^1 = swap_{c(q),j}(c)$ and $c^2 = swap_{c'(q),j}(c')$. Then, we apply conditions (a)-(e)
to $\lambda^X x^1 + \lambda^W_k = \lambda^X x^2$, where $x^1$ and $x^2$ are the binary variables representing
colorings $c^1$ and $c^2$ respectively.
\item[(g)] Let $c^1$ be a $j$-eqcol such that $c^1(q) = j$ and $c^2$ be a $(j-1)$-eqcol (the existence of these
colorings is proved above). Then, we apply conditions (a)-(e) to $\lambda^X x^1 + \lambda^W_j = \lambda^X x^2$,
where $x^1$ and $x^2$ are the binary variables representing colorings $c^1$ and $c^2$ respectively.
\end{enumerate}
\end{proof}

Theorem \ref{T2RANK2} states that, among other things, $j \leq \lceil n/2 \rceil - 1$ for the
$(S,Q,j)$-2-rank-inequality to define a facet of $\ECP$. Indeed,
this condition is only used in Theorem \ref{T2RANK2} for proving equations given in (e),
\ie $\lambda^X_{vj} = \lambda^X_{vn} + \lambda^W_n,~ \textrm{for all}~ v \in V \backslash S ~\textrm{such that}~ Q \subset N(v)$. So, if every vertex $v \in V \backslash S$ verifies
$Q \backslash N(v) \neq \varnothing$, these equations
vanish from the equation system on $(\lambda^X, \lambda^W)$ and the inequality (\ref{R2RANK2AGAIN}) defines a facet of $\ECP$ even though $j > \lceil n/2 \rceil - 1$. We have proved the following result.

\begin{tcor} \label{T2RANK2COL}
Let $G$ be a monotone graph, $S \subset V$ such that $\alpha(S) = 2$ and $Q = \{ q \in S ~:~ S \subset N[q] \}$.
If $|Q| \geq 2$, no connected component of the complement of $G[S \backslash Q]$
is bipartite and for all $v \in V \backslash S$, $Q \backslash N(v) \neq \varnothing$, then the $(S,Q,j)$-2-rank
inequality defines a facet of $\ECP$ for all $j \leq n-1$.
\end{tcor}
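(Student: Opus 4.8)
The plan is to deduce the corollary directly from the proof of Theorem \ref{T2RANK2}, rather than rebuild the facet argument from scratch. The starting observation is that the defining equation system (a)--(g) established there can be reproduced almost verbatim, and that the two hypotheses being relaxed here---hypothesis (ii) and the bound $j \leq \lceil n/2 \rceil - 1$---enter that proof at exactly one place: in deriving the family of equations (e), \ie $\lambda^X_{vj} = \lambda^X_{vn} + \lambda^W_n$ for those $v \in V \backslash S$ with $Q \subset N(v)$. Indeed, the $(\lceil n/2 \rceil - 1)$-eqcol built in (e) uses hypothesis (ii) to install $\{v,s,s'\}$ as the color class $j$, and this is legitimate only when $j \leq \lceil n/2 \rceil - 1$.

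First I would point out that, under the present hypothesis $Q \backslash N(v) \neq \varnothing$ for every $v \in V \backslash S$, no vertex $v \in V \backslash S$ satisfies $Q \subset N(v)$. Consequently the family (e) is empty and need not be proved at all, and hypothesis (ii), whose sole purpose in Theorem \ref{T2RANK2} was to furnish the colorings used in (e), becomes superfluous; this is precisely why it is absent from the corollary's statement.

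Next I would verify that the remaining equations survive unchanged for every $j \leq n-1$. Equations (a)--(d) are obtained exactly as before: (a) uses only a vertex $\hat{q} \in Q \backslash N(v)$, which now exists for \emph{every} $v \in V \backslash S$, so (a) by itself covers all of $V \backslash S$; equations (b)--(d) invoke only hypothesis (i) together with the $(n-1)$- and $n$-eqcols, none of which requires any bound on $j$. For the normalizing equations (f) and (g), the reductions that previously cited conditions (a)--(e) now cite only (a)--(d); this is legitimate because, once (e) is vacuous, every coordinate is already accounted for by (a)--(d), and the existence of the $k$-, $(k-1)$-, $j$- and $(j-1)$-eqcols used there follows from monotonicity of $G$ for any $j \leq n-1$.

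The only point requiring care---and the sole step beyond bookkeeping---is confirming that discarding (e) does not cost rank, \ie that the surviving system still comprises $dim(\ECP) - 1$ affinely independent equations. Since (e) contributed exactly one equation per vertex $v \in V \backslash S$ with $Q \subset N(v)$ and there are no such vertices, no equation is lost and the coordinates handled by (a)--(g) remain completely covered. Hence the technique of Remark \ref{TECHNIQUE} applies verbatim, and the $(S,Q,j)$-2-rank inequality (\ref{R2RANK2AGAIN}) defines a facet of $\ECP$ for all $j \leq n-1$.
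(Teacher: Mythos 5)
Your proposal is correct and takes essentially the same route as the paper: the authors likewise observe that the bound $j \leq \lceil n/2 \rceil - 1$ and hypothesis (ii) enter the proof of Theorem \ref{T2RANK2} only through the equations in (e), which become vacuous once every $v \in V \backslash S$ satisfies $Q \backslash N(v) \neq \varnothing$, so the rest of that proof carries over unchanged for all $j \leq n-1$. Your extra bookkeeping (that condition (a) now covers all of $V \backslash S$ and that no rank is lost) just makes explicit what the paper leaves implicit.
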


Let us present an example where the previous result is applied.\\

\noindent \textbf{Example.} Let $G$ be the graph presented in Figure \ref{minigraph1}, $S = \{1,2,\ldots,7\}$ and $Q = \{1,2\}$.
The $(S,Q,j)$-2-rank inequality is a facet-defining inequality of $\ECP(G)$ for $1 \leq j \leq 10$ since the
assumptions of Corollary \ref{T2RANK2COL} are satisfied: vertices $3,\ldots,7$ induce an odd cycle in
$\overline{G}$ and for all $v \in \{8,9,10,11\}$, $Q \backslash N(v) = \{1,2\}$.

\subsection{Subneighborhood inequalities}

\begin{tthm} \label{TNEIGHBOR1}
Let $G$ be a monotone graph, $u \in V$, $j \leq n - 1$ such that $\lceil n/j \rceil \leq \lceil n/\chi_{eq} \rceil$
and $S \subset N(u)$ such that $S$ is not a clique of $G$ and, if $S \neq N(u)$ then $\alpha(S) \leq \lceil n/j \rceil - 1$.\\
If
\begin{enumerate}
\item[(i)] for all $3 \leq i \leq \min \{\lceil n/j \rceil, \alpha(S)\}$, there exists a
$\biggl(\biggl\lceil \dfrac{n}{i-1} \biggr\rceil - 1\biggl)$-eqcol whose color class $C_j$ satisfies $|C_j\cap S| = i$,
\item[(ii)] for all $v \in N(u) \backslash S$, there exists an equitable coloring whose color class $C_j$ satisfies
$|C_j \cap S| = \alpha(S)$ and $(C_j\cap N(u)) \backslash S=\{v\}$,
\end{enumerate}
then the $(u,j,S)$-subneighborhood inequality, \ie
\begin{equation} \label{RNEIGHBOR1AGAIN}
\gamma_{jS} x_{uj} + \sum_{v \in S} x_{vj} +
\sum_{k = j+1}^n (\gamma_{jS} - \gamma_{kS}) x_{uk} \leq \gamma_{jS} w_j,
\end{equation}
defines a facet of $\ECP$, where
$\gamma_{kS} = \min \{\lceil n/k \rceil, \alpha(S)\}$.
\end{tthm}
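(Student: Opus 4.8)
The plan is to apply the facet-proving technique of Remark \ref{TECHNIQUE}. Let $\F'$ denote the face of $\ECP$ defined by the $(u,j,S)$-subneighborhood inequality and let $\F = \{(x,w) \in \ECP : \lambda^X x + \lambda^W w = \lambda_0\}$ be any face with $\F' \subset \F$. Writing $\gamma_k = \gamma_{kS} = \min\{\lceil n/k\rceil, \alpha(S)\}$ (the hypothesis $\lceil n/j\rceil \le \lceil n/\chi_{eq}\rceil$ guarantees this simplified form for every $k \ge j$), I would show that $\lambda^X x + \lambda^W w = \lambda_0$ is forced to be a linear combination of (\ref{RNEIGHBOR1AGAIN}) and the minimal equation system of Theorem \ref{TDIM}. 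Concretely, I would prove that $(\lambda^X,\lambda^W)$ satisfies a system of $dim(\ECP)-1$ equalities asserting: (a) $\lambda^X_{vk} = \lambda^X_{vn} + \lambda^W_n$ for every $v \notin S\cup\{u\}$ and every $k \le n-1$; (b) the same flatness for $v \in S$ at all colors $k \neq j$; (c) $\lambda^X_{vj} - \lambda^X_{vn} - \lambda^W_n$ takes a common value $\mu$ over all $v \in S$; (d) the differences $\lambda^X_{uk} - \lambda^X_{un} - \lambda^W_n$ follow the telescoping $\gamma$-pattern prescribed by the coefficients of $x_{uk}$ in (\ref{RNEIGHBOR1AGAIN}); (e) $\lambda^W_k = 0$ for $\chi_{eq} < k < n$, $k \neq j$; and (f) $\lambda^W_j = -\mu\gamma_j$ when $j > \chi_{eq}$. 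Each equality is obtained from the relation $\lambda^X x^1 + \lambda^W w^1 = \lambda^X x^2 + \lambda^W w^2$ evaluated on a carefully chosen pair of equitable colorings lying on $\F'$.

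Building the requisite colorings is straightforward for the flat relations (a), (b) and for (e). Since $\F'$ contains every coloring in which $u$ receives color $j$ (by Lemma \ref{TNEIGHBOR1VALID} the left-hand side then equals $\gamma_j$ automatically), I would start from $n$- and $(n-1)$-eqcols that color $u$ with $j$ and apply the operators $swap$ and $intro$ exactly as in Remark \ref{REMPROC} to move a single vertex $v \neq u$ between two colors, or to pass between an $r$-eqcol and an $(r-1)$-eqcol. Monotonicity of $G$ (\ie $\S = \varnothing$) is what guarantees that such colorings exist for every size $r \in \{\chi_{eq},\ldots,n\}$, so relations (e) and (f) follow by pairing a $k$-eqcol with a $(k-1)$-eqcol on $\F'$ after a swap that keeps $u$ at color $j$.

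The nontrivial content lies in relations (c) and (d), and this is where hypotheses (i) and (ii) enter. To reach colorings of $\F'$ in which $u$ is \emph{not} colored $j$, I would fill the class $C_j$ with vertices of $S$ up to the exact threshold forced by equality in the inequality. Hypothesis (i) supplies, for each $3 \le i \le \min\{\lceil n/j\rceil,\alpha(S)\}$, a $(\lceil n/(i-1)\rceil - 1)$-eqcol with $|C_j \cap S| = i$; these are precisely the sizes at which $\gamma_r$ jumps, so comparing a coloring that places $u$ in $C_j$ against one that instead fills $C_j$ with an additional vertex of $S$ lets me peel off the coefficient differences and recover both the common value $\mu$ across $S$ and the $\gamma_k$-dependent coefficients of $u$. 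Hypothesis (ii) then handles the neighbors $v \in N(u)\setminus S$, which cannot be recolored like vertices outside $N[u]$ because they are adjacent to $u$: the coloring it provides, with $C_j\cap S$ maximal and $v$ the unique $N(u)$-vertex outside $S$ sharing color $j$, produces the pair needed to fold these vertices into relation (a).

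I expect the main obstacle to be step (d): extracting the precise telescoping coefficients $\gamma_{jS}-\gamma_{kS}$ on the variables $x_{uk}$. Unlike the flat relations, this requires chaining comparisons across the several size-thresholds $r = \lceil n/(i-1)\rceil - 1$ and checking at each stage that the two colorings being compared genuinely lie on $\F'$, \ie that the equality condition underlying Lemma \ref{TNEIGHBOR1VALID} (namely $|C_j\cap S| = \gamma_r$ together with $u$ occupying a color of the matching $\gamma$-level) is met. Verifying that these chained equalities telescope exactly into the coefficients of (\ref{RNEIGHBOR1AGAIN}), and that the whole system (a)--(f) is independent and of the right cardinality $dim(\ECP)-1$, is the delicate bookkeeping that the restriction $\lceil n/j\rceil \le \lceil n/\chi_{eq}\rceil$ and conditions (i)--(ii) are tailored to make possible.
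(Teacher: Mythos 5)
Your plan is correct and coincides with the paper's own proof: the equation system you list in (a)--(f) is a reorganization of the system the paper verifies via Remark \ref{TECHNIQUE} (with $\mu = \lambda^X_{uj}-\lambda^X_{un}-\lambda^W_n$), and the coloring pairs you invoke --- $swap$/$intro$ pairs anchored at colorings with $c(u)=j$, hypothesis (i) for the $\gamma_{kS}$-dependent coefficients of the $x_{uk}$, hypothesis (ii) for vertices in $N(u)\backslash S$, and monotonicity for the $\lambda^W_k=0$ relations --- are exactly those used there. One minor point: the paper obtains your relation (c) by a plain swap of colors $j$ and $n$ in an $n$-eqcol having $c(u)=n$ and a vertex of $S$ in $C_j$, so hypothesis (i) is in fact needed only for your step (d), not for (c).
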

\begin{proof}
Let $\F'$ be the face of $\ECP$ defined by (\ref{RNEIGHBOR1AGAIN}) and
$\F = \{ (x,w) \in \ECP ~:~ \lambda^X x + \lambda^W w = \lambda_0\}$ be a face such that $\F' \subset \F$.
According to Remark \ref{TECHNIQUE}, we have to prove that $(\lambda^X, \lambda^W)$ verifies the following equation system: 
\begin{enumerate}
\item[(a)] $\lambda^X_{vj} = \lambda^X_{vn} + \lambda^W_n,~~~ \forall~ v \in V \backslash N[u]$.
\item[(b)] $\lambda^X_{vk} = \lambda^X_{vn} + \lambda^W_n,~~~
	      \forall~ v \in V \backslash \{u\},~ 1 \leq k \leq n-1,~ k \neq j$.
\item[(c)] $\lambda^X_{vj} + \lambda^X_{un} = \lambda^X_{vn} + \lambda^X_{uj},~~~
	      \forall~ v \in S$.
\item[(d)] $\lambda^X_{uk} + (\gamma_{jS} - 1) \lambda^X_{uj} = \gamma_{jS} \lambda^X_{un}
              + \gamma_{jS} \lambda^W_n,~~~ \forall~ 1 \leq k \leq j-1$.
\item[(e)] $\lambda^X_{uk} + (\gamma_{kS} - 1) \lambda^X_{uj} =
	    \gamma_{kS} \lambda^X_{un} + \gamma_{kS} \lambda^W_n,~~~
	      \forall~ j+1 \leq k \leq n-1$.
\item[(f)] $\lambda^X_{vj} = \lambda^X_{vn} + \lambda^W_n,~~~ \forall~ v \in N(u) \backslash S$.
\item[(g)] $\lambda^W_k = 0,~~~ \forall~ \chi_{eq} + 1 \leq k \leq n-1,~ k \neq j$.
\item[(h)] If $j \geq \chi_{eq} + 1$ then
           $\gamma_{jS} \lambda^X_{un} + \gamma_{jS} \lambda^W_n = \gamma_{jS} \lambda^X_{uj} + \lambda^W_j$.
\end{enumerate}
We present pairs of equitable colorings lying on $\F'$ that allow us to
prove the validity of each equation in the previous system.
\begin{enumerate}
\item[(a)] Let $c^1$ be a $(n-1)$-eqcol such that $c^1(u) = c^1(v) = j$ and $c^2 = intro(c^1,v)$. We conclude that
$\lambda^X_{vj} = \lambda^X_{vn} + \lambda^W_n$. 
\item[(b)] Let $s, s' \in S$ be non adjacent vertices.\\
\textbf{Case $v = s$}. Let $c^1$ be a $(n-1)$-eqcol such that $c^1(s) = c^1(s') = k$, $c^1(u) = j$
and $c^2 = intro(c^1,s)$. Then, $\lambda^X_{sk} = \lambda^X_{sn} + \lambda^W_n$.\\
\textbf{Case $v \neq s$}. Let $c^1$ be a $n$-eqcol such that $c^1(v) = k$, $c^1(s) = n$, $c^1(u) = j$ and
$c^2 = swap_{k,n}(c^1)$. We have $\lambda^X_{vk} + \lambda^X_{sn} = \lambda^X_{vn} + \lambda^X_{sk}$.
Since $\lambda^X_{sk} = \lambda^X_{sn} + \lambda^W_n$, we conclude that
$\lambda^X_{vk} = \lambda^X_{vn} + \lambda^W_n$.
\item[(c)] Let $c^1$ be a $n$-eqcol such that $c^1(v) = j$, $c^1(u) = n$ and $c^2 = swap_{j,n}(c^1)$.
Therefore, $\lambda^X_{vj} + \lambda^X_{un} = \lambda^X_{vn} + \lambda^X_{uj}$.
\item[(d)] \textbf{Case $\gamma_{jS} = 2$}. Let $c^1$ be a $(n-1)$-eqcol such that $c^1(u) = k$ and
$c^1(s) = c^1(s') = j$ where $s, s' \in S$.\\
\textbf{Case $\gamma_{jS} \geq 3$}. Let $c$ be the $(\lceil \frac{n}{\gamma_{jS} - 1} \rceil - 1)$-eqcol given
by hypothesis (i) and $c^1 = swap_{c(u),k}(c)$.\\
In both cases, $c^1(u) = k$. Now, let $C_j$ and $C_k$ be the color classes $j$
and $k$ of $c^1$ respectively. Considering $c^2 = swap_{j,k}(c^1)$ give rise to
\[ \lambda^X_{uk} + \sum_{v \in C_j} \lambda^X_{vj} + \sum_{v \in C_k \backslash \{u\}} \lambda^X_{vk} =
   \lambda^X_{uj} + \sum_{v \in C_j} \lambda^X_{vk} + \sum_{v \in C_k \backslash \{u\}} \lambda^X_{vj}. \]
Since $|C_j \cap S| = \gamma_{jS}$, we have $C_j \subset S$ and we can apply (a)-(c) in order to
get $\lambda^X_{uk} + (\gamma_{jS} - 1) \lambda^X_{uj} = \gamma_{jS} \lambda^X_{un} + \gamma_{jS} \lambda^W_n$.
\item[(e)] We proceed in the same way as in (d) except that, for the case $\gamma_{jS} \geq 3$, we use the
$(\lceil \frac{n}{\gamma_{kS} - 1} \rceil - 1)$-eqcol given by hypothesis (i) instead of the
$(\lceil \frac{n}{\gamma_{jS} - 1} \rceil - 1)$-eqcol.
\item[(f)] In first place, let us note that $v \in N(u) \backslash S$ implies $S \subsetneqq N(u)$.
Then, $\alpha(S) \leq \lceil n/j \rceil - 1$ and, by hypothesis (ii), there exists a coloring $c^1$ that paints $v$
and $\alpha(S)$ vertices of $S$ with color $j$ but the remaining vertices of $N(u)$ do not use $j$.
Let $k$ be the color used by vertex $u$ in $c^1$ and let $C_j$, $C_k$ be the color classes $j$ and $k$ in
$c^1$ respectively, and $c^2 = swap_{j,k}(c^1)$. We have
\[ \lambda^X_{uk} + \lambda^X_{vj} + \sum_{w \in C_j\backslash \{v\}} \lambda^X_{wj} +
                                    \sum_{w \in C_k \backslash \{u\}} \lambda^X_{wk} =
   \lambda^X_{uj} + \lambda^X_{vk} + \sum_{w \in C_j\backslash \{v\}} \lambda^X_{wk} +
                                    \sum_{w \in C_k \backslash \{u\}} \lambda^X_{wj}. \]
In virtue of conditions (a)-(e), we obtain $\lambda^X_{vj} = \lambda^X_{vn} + \lambda^W_n$.
\item[(g)] Since $G$ is monotone, there exist a $k$-eqcol $c$ and a $(k-1)$-eqcol $c'$.
If $k < j$, we consider $c^1 = c$ and $c^2 = c'$. If $k > j$, we consider
$c^1 = swap_{c(u),j}(c)$ and $c^2 = swap_{c'(u),j}(c')$. Then, we apply conditions (a)-(f)
to $\lambda^X x^1 + \lambda^W_k = \lambda^X x^2$, where $x^1$ and $x^2$ are the binary variables representing
colorings $c^1$ and $c^2$ respectively.
\item[(h)] Let $c^1$ be a $j$-eqcol such that $c^1(u) = j$ and $c^2$ be a $(j-1)$-eqcol (the existence of these
colorings is proved above). Then, we apply conditions (a)-(f) to $\lambda^X x^1 + \lambda^W_j = \lambda^X x^2$,
where $x^1$ and $x^2$ are the binary variables representing colorings $c^1$ and $c^2$ respectively.
\end{enumerate}
\end{proof}

Let us present two examples where the previous theorem is applied.\\

\noindent \textbf{Example.} Let $G$ be the graph given in Figure \ref{minigraph2}(a). We have that $G$ is
monotone and $\chi_{eq}(G) = 3$. Let us consider $u = 1$, $S = N(1)$ and $j = 3$. In order to prove that the $(u,j,S)$-subneighborhood inequality
defines a facet of $\ECP(G)$, it is enough to exhibit a ($\lceil \frac{11}{2} \rceil - 1$)-eqcol
such that $|C_3 \cap S| = 3$ and a ($\lceil \frac{11}{3} \rceil - 1$)-eqcol such that $|C_3 \cap S| = 4$.
Both colorings are shown in Figure \ref{minigraph2} (b) and (c) respectively.

It is not hard to see that the $(u,j,S)$-subneighborhood inequality is also facet-defining for
$4 \leq j \leq 10$.
On the other hand, $(u,j,S)$-subneighborhood inequality with $j \in \{ 1,2 \}$
is facet-defining by Theorem \ref{TIGHT}.

Therefore, the $(u,j,S)$-subneighborhood inequality defines a facet of $\ECP(G)$ for all $1 \leq j \leq 10$.\\

\noindent \textbf{Example.} Let us consider again the graph given in Figure \ref{minigraph2}(a). The $(u,j,S)$-subneighborhood inequality with $u = 1$, $j = 3$ and $S = \{3,4,5\}$
is facet-defining since $\alpha(S) \leq \lceil \frac{11}{3} \rceil - 1$ and there exist the following colorings:
a ($\lceil \frac{11}{2} \rceil - 1$)-eqcol such that
$|C_3 \cap S| = 3$, an equitable coloring such that $|C_3 \cap S| = 3$ and $(C_3 \cap N(1)) \backslash S = \{ 2 \}$, and
an equitable coloring such that $|C_3 \cap S| = 3$ and $(C_3 \cap N(1)) \backslash S = \{ 6 \}$.
These colorings are shown in Figure \ref{minigraph2} (b), (c) and (d) respectively.

\begin{figure}[h]
  \centering ~~~~~~~~~~~
\begin{graph}(5.8, 2)(-2, -1)
	\freetext(0.4,-0.75){(a) labeling of $G$}
	\roundnode{v1}(-2,0)
	\autonodetext{v1}[w]{1}
	\roundnode{v2}(-1.2,0)
	\autonodetext{v2}[n]{2}
	\roundnode{v3}(-1.7528,0.7608)
	\autonodetext{v3}[n]{3}
	\roundnode{v4}(-2.6472,0.4702)
	\autonodetext{v4}[nw]{4}
	\roundnode{v5}(-2.6472,-0.4702)
	\autonodetext{v5}[sw]{5}
	\roundnode{v6}(-1.7528,-0.7608)
	\autonodetext{v6}[s]{6}
	\roundnode{v7}(-0.4,0)
	\autonodetext{v7}[n]{7}
	\roundnode{v8}(0.4,0)
	\autonodetext{v8}[n]{8}
	\roundnode{v9}(1.2,0)
	\autonodetext{v9}[n]{9}
	\roundnode{v10}(2,0)
	\autonodetext{v10}[n]{10}
	\roundnode{v11}(2.8,0)
	\autonodetext{v11}[n]{11}
	\edge{v1}{v2}
	\edge{v1}{v3}
	\edge{v1}{v4}
	\edge{v1}{v5}
	\edge{v1}{v6}
	\edge{v2}{v7}
	\edge{v7}{v8}
	\edge{v8}{v9}
	\edge{v9}{v10}
	\edge{v10}{v11}
\end{graph}~~~~~~~
\begin{graph}(5.8, 2)(-2, -1)
	\freetext(0.4,-0.75){(b) 5-eqcol in $G$}
	\roundnode{v1}(-2,0)
	\autonodetext{v1}[w]{5}
	\roundnode{v2}(-1.2,0)
	\autonodetext{v2}[n]{1}
	\roundnode{v3}(-1.7528,0.7608)
	\autonodetext{v3}[n]{3}
	\roundnode{v4}(-2.6472,0.4702)
	\autonodetext{v4}[nw]{3}
	\roundnode{v5}(-2.6472,-0.4702)
	\autonodetext{v5}[sw]{3}
	\roundnode{v6}(-1.7528,-0.7608)
	\autonodetext{v6}[s]{2}
	\roundnode{v7}(-0.4,0)
	\autonodetext{v7}[n]{4}
	\roundnode{v8}(0.4,0)
	\autonodetext{v8}[n]{5}
	\roundnode{v9}(1.2,0)
	\autonodetext{v9}[n]{1}
	\roundnode{v10}(2,0)
	\autonodetext{v10}[n]{2}
	\roundnode{v11}(2.8,0)
	\autonodetext{v11}[n]{4}
	\edge{v1}{v2}
	\edge{v1}{v3}
	\edge{v1}{v4}
	\edge{v1}{v5}
	\edge{v1}{v6}
	\edge{v2}{v7}
	\edge{v7}{v8}
	\edge{v8}{v9}
	\edge{v9}{v10}
	\edge{v10}{v11}
\end{graph}\\~~~~~~~~~~~
\begin{graph}(5.8, 3)(-2, -1)
	\freetext(0.8,-0.75){(c) 3-eqcol with $c(2)=3$}
	\roundnode{v1}(-2,0)
	\autonodetext{v1}[w]{1}
	\roundnode{v2}(-1.2,0)
	\autonodetext{v2}[n]{3}
	\roundnode{v3}(-1.7528,0.7608)
	\autonodetext{v3}[n]{3}
	\roundnode{v4}(-2.6472,0.4702)
	\autonodetext{v4}[nw]{3}
	\roundnode{v5}(-2.6472,-0.4702)
	\autonodetext{v5}[sw]{3}
	\roundnode{v6}(-1.7528,-0.7608)
	\autonodetext{v6}[s]{2}
	\roundnode{v7}(-0.4,0)
	\autonodetext{v7}[n]{1}
	\roundnode{v8}(0.4,0)
	\autonodetext{v8}[n]{2}
	\roundnode{v9}(1.2,0)
	\autonodetext{v9}[n]{1}
	\roundnode{v10}(2,0)
	\autonodetext{v10}[n]{2}
	\roundnode{v11}(2.8,0)
	\autonodetext{v11}[n]{1}
	\edge{v1}{v2}
	\edge{v1}{v3}
	\edge{v1}{v4}
	\edge{v1}{v5}
	\edge{v1}{v6}
	\edge{v2}{v7}
	\edge{v7}{v8}
	\edge{v8}{v9}
	\edge{v9}{v10}
	\edge{v10}{v11}
\end{graph}~~~~~~~~~~~
\begin{graph}(5.8, 3)(-2, -1)
	\freetext(0.8,-0.75){(d) 3-eqcol with $c(6)=3$}
	\roundnode{v1}(-2,0)
	\autonodetext{v1}[w]{1}
	\roundnode{v2}(-1.2,0)
	\autonodetext{v2}[n]{2}
	\roundnode{v3}(-1.7528,0.7608)
	\autonodetext{v3}[n]{3}
	\roundnode{v4}(-2.6472,0.4702)
	\autonodetext{v4}[nw]{3}
	\roundnode{v5}(-2.6472,-0.4702)
	\autonodetext{v5}[sw]{3}
	\roundnode{v6}(-1.7528,-0.7608)
	\autonodetext{v6}[s]{3}
	\roundnode{v7}(-0.4,0)
	\autonodetext{v7}[n]{1}
	\roundnode{v8}(0.4,0)
	\autonodetext{v8}[n]{2}
	\roundnode{v9}(1.2,0)
	\autonodetext{v9}[n]{1}
	\roundnode{v10}(2,0)
	\autonodetext{v10}[n]{2}
	\roundnode{v11}(2.8,0)
	\autonodetext{v11}[n]{1}
	\edge{v1}{v2}
	\edge{v1}{v3}
	\edge{v1}{v4}
	\edge{v1}{v5}
	\edge{v1}{v6}
	\edge{v2}{v7}
	\edge{v7}{v8}
	\edge{v8}{v9}
	\edge{v9}{v10}
	\edge{v10}{v11}
\end{graph}
  \caption{}
  \label{minigraph2}
\end{figure}

\begin{tcor} \label{T2RANK2CASEQ1}
Let $G$ be a monotone graph, $j \leq n - 1$ and $q \in V$ such that $\alpha(N(q)) = 2$. Then, the $(q,j,N(q))$-subneighborhood inequality defines a facet of $\ECP$.\\
Moreover, let $S \subset V$ with $\alpha(S) = 2$ and $S \subsetneqq N(q)$.
If $j \leq \lceil n/2 \rceil - 1$ and
for all $v \in N(q) \backslash S$, there exist different vertices $s, s' \in S$
and a stable set $H_v = \{v, s, s'\}$ in $G$ such that:
\begin{itemize}
\item If $n$ is odd, the complement of $G - H_v$ has a perfect matching,
\item If $n$ is even, there exists another stable set $H'_v$ of size 3 in $G$ such that $H_v \cap H'_v = \varnothing$
and the complement of $G - (H_v \cup H'_v)$ has a perfect matching,
\end{itemize}
then the $(q,j,S)$-subneighborhood inequality defines a facet of $\ECP$.
\end{tcor}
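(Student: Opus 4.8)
The plan is to derive both statements from Theorem \ref{TNEIGHBOR1} together with Theorem \ref{TIGHT}, observing that in every case $u = q$, $\alpha(S) = \alpha(N(q)) = 2$, $S \subseteq N(q) = N(u)$, and $S$ is not a clique (it contains a non-adjacent pair). For a fixed $j$ I would split according to whether $\lceil n/j \rceil \leq \lceil n/\chi_{eq} \rceil$ or not. In the first case I apply Theorem \ref{TNEIGHBOR1} directly; in the second case (which forces $j < \chi_{eq}$) I use Theorem \ref{TIGHT} to transfer the facetness of $\F_{\chi_{eq}}$ to $\F_j$, so the problem reduces to verifying the hypotheses of Theorem \ref{TNEIGHBOR1} at $j = \chi_{eq}$.

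For the first statement ($S = N(q)$), the key point is that all the side hypotheses of Theorem \ref{TNEIGHBOR1} degenerate. Since $S = N(u)$, the requirement $\alpha(S) \leq \lceil n/j \rceil - 1$ is not imposed, the set $N(u) \backslash S$ is empty so condition (ii) is vacuous, and since $\min\{\lceil n/j \rceil, \alpha(S)\} = \min\{\lceil n/j \rceil, 2\} \leq 2$ the index range in condition (i) is empty, so (i) is vacuous as well. Hence Theorem \ref{TNEIGHBOR1} gives facetness whenever $\lceil n/j \rceil \leq \lceil n/\chi_{eq} \rceil$, and for the remaining $j$ Theorem \ref{TIGHT} reduces to $j = \chi_{eq}$, which is covered; this yields facetness for every $j \leq n-1$.

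For the second statement ($S \subsetneqq N(q)$), since $N(q) \backslash S \neq \varnothing$, I would first use the perfect-matching hypothesis to build, for each $v \in N(q) \backslash S$, a $(\lceil n/2 \rceil - 1)$-eqcol whose color class $j$ is exactly $H_v = \{v, s, s'\}$: when $n$ is odd the perfect matching of $\overline{G - H_v}$ partitions the remaining $n-3$ vertices into $(n-3)/2$ non-adjacent pairs which, together with $H_v$, give $(n-1)/2 = \lceil n/2 \rceil - 1$ classes of sizes $2$ and $3$; when $n$ is even the two stable sets $H_v, H'_v$ plus the matching of $\overline{G - (H_v \cup H'_v)}$ give $\lceil n/2 \rceil - 1$ classes likewise. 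Assigning color $j$ to the class $H_v$ (legitimate because $j \leq \lceil n/2 \rceil - 1$) yields $|C_j \cap S| = 2 = \alpha(S)$ and $(C_j \cap N(q)) \backslash S = \{v\}$, which is exactly condition (ii) of Theorem \ref{TNEIGHBOR1}; condition (i) is again vacuous, and $\alpha(S) = 2 \leq \lceil n/j \rceil - 1$ holds because $j \leq \lceil n/2 \rceil - 1$ forces $\lceil n/j \rceil \geq 3$. This settles the case $\lceil n/j \rceil \leq \lceil n/\chi_{eq} \rceil$.

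For the complementary case $\lceil n/j \rceil > \lceil n/\chi_{eq} \rceil$, I would again invoke Theorem \ref{TIGHT} to reduce to $j = \chi_{eq}$. The step I expect to require the most care is certifying that Theorem \ref{TNEIGHBOR1} still applies at $j = \chi_{eq}$, namely that $\alpha(S) = 2 \leq \lceil n/\chi_{eq} \rceil - 1$. This follows once one notices that the very coloring constructed above is a $(\lceil n/2 \rceil - 1)$-eqcol, so $\chi_{eq} \leq \lceil n/2 \rceil - 1$, which in turn gives $\lceil n/\chi_{eq} \rceil \geq 3$ for both parities of $n$; the same coloring, relabelled so that $H_v$ receives color $\chi_{eq}$, verifies condition (ii) at $\chi_{eq}$. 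The remaining work — checking that the listed colorings are genuinely equitable and that all stated adjacencies hold — is routine.
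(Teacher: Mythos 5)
Your proof is correct and takes essentially the same route as the paper: both split on whether $\lceil n/j \rceil \leq \lceil n/\chi_{eq} \rceil$, apply Theorem \ref{TNEIGHBOR1} in the first case (hypotheses (i) and (ii) being vacuous when $S = N(q)$, and the matching-based $(\lceil n/2 \rceil - 1)$-eqcol with color class $H_v$ certifying (ii) when $S \subsetneqq N(q)$), and invoke Theorem \ref{TIGHT} to reduce the remaining case to $j = \chi_{eq}$. Your explicit check that $\chi_{eq} \leq \lceil n/2 \rceil - 1$, so that the first case genuinely applies at $j = \chi_{eq}$, is a detail the paper leaves implicit.
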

\begin{proof}
\textbf{Case $\lceil n/j \rceil \leq \lceil n/\chi_{eq} \rceil$}. The $(q,j,N(q))$-subneighborhood inequality defines
a facet of $\ECP$ since hypotheses (i) and (ii) from Theorem \ref{TNEIGHBOR1} hold trivially.\\
Now, let us consider the $(q,j,S)$-subneighborhood inequality. Since $j \leq \lceil n/2 \rceil - 1$, we have that
$\alpha(S) = 2 \leq \lceil n/j \rceil - 1$. Moreover, hypothesis (i) from Theorem \ref{TNEIGHBOR1}
holds trivially.\\
Let $v \in N(q) \backslash S$ and $M_v$, $H_v = \{v, s, s'\}$ and $H'_v$ (if $n$ is even) be the matching and the stable sets given by the hypothesis. Consider the $(\lceil n/2 \rceil - 1)$-eqcol such that the
color class $C_j$ is $H_v$ and the remaining color classes are $H'_v$ (if $n$ is even) and the endpoints of edges of $M_v$. Then, $|C_j \cap S| = 2$, $(C_j \cap N(q)) \backslash S = \{v\}$ and hypothesis (ii) from Theorem \ref{TNEIGHBOR1} holds. Therefore, the $(q,j,S)$-subneighborhood inequality defines a facet of $\ECP$.\\
\textbf{Case $\lceil n/j \rceil > \lceil n/\chi_{eq} \rceil$}. In virtue of the previous case, we know that the
$(q,\chi_{eq},N(q))$-subneighborhood and the $(q,\chi_{eq},S)$-subneighborhood are facet-defining inequalities of
$\ECP$. Hence, the $(q,j,N(q))$-subneighborhood and the $(q,j,S)$-subneighborhood inequality define facets of $\ECP$ due to Theorem \ref{TIGHT}.
\end{proof}

\subsection{Outside-neighborhood inequalities}

\begin{tthm} \label{TNEIGHBOR2}
Let $G$ be a monotone graph, $u \in V$ such that $N(u)$ is not a clique and $\chi_{eq} \leq j \leq \lfloor n/2 \rfloor$.
If
\begin{enumerate}
\item[(i)] there exists $\hat{v} \in V \backslash N[u]$ not universal in $G - u$, 
\item[(ii)] if $n$ is odd, the complement of $G - u$ has a perfect matching,
\item[(iii)] for all $v \in V \backslash N[u]$, the following conditions hold:
\begin{itemize}
\item if $n$ is even, the complement of $G - \{u, v\}$ has a perfect matching,
\item if $n$ is odd, there exists a stable set $H_v \subset V \backslash \{u, v\}$ of size 3 such that
the complement of $G - (H_v \cup \{u, v\})$ has a perfect matching, 
\end{itemize}
\item[(iv)] for all $r$ such that $j \leq r \leq \lfloor n/2 \rfloor$, we have the following:
\begin{itemize}
\item if $\biggl\lfloor \dfrac{n}{r} \biggr\rfloor > \biggl\lfloor \dfrac{n}{r+1} \biggr\rfloor$, then there exists an $r$-eqcol such that
      $C_j \subset N(u)$ and an $r$-eqcol such that $u \in C_j$ and $|C_j| = \lfloor n/r \rfloor$,
\item if $\biggl\lfloor \dfrac{n}{r} \biggr\rfloor = \biggl\lfloor \dfrac{n}{r+1} \biggr\rfloor$, then there exists an $r$-eqcol satisfying conditions given in Remark \ref{NEIGHBOR2POINTS}, \ie lying on the face defined by (\ref{RNEIGHBOR2AGAIN}),
\end{itemize}
\end{enumerate}
then the $(u,j)$-outside-neighborhood inequality, \ie
\begin{equation} \label{RNEIGHBOR2AGAIN}
\biggl(\biggl\lfloor \dfrac{n}{j} \biggr\rfloor - 1 \biggr) x_{uj} - \sum_{v \in V \backslash N[u]} x_{vj}
+ \sum_{k = j+1}^n b_{jk} x_{uk} \leq \sum_{k = j+1}^{n} b_{jk} (w_k - w_{k+1}),
\end{equation}
defines a facet of $\ECP$, where $b_{jk} = \lfloor n/j \rfloor - \lfloor n/k \rfloor$.
\end{tthm}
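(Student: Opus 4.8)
The plan is to follow the facet-proving technique of Remark \ref{TECHNIQUE}. I would let $\F'$ be the face of $\ECP$ defined by (\ref{RNEIGHBOR2AGAIN}) and take an arbitrary face $\F = \{(x,w) \in \ECP : \lambda^X x + \lambda^W w = \lambda_0\}$ with $\F' \subset \F$; the goal reduces to showing that $(\lambda^X,\lambda^W)$ satisfies a system of $dim(\ECP)-1$ independent equalities forcing $\lambda^X x + \lambda^W w = \lambda_0$ to be a linear combination of (\ref{RNEIGHBOR2AGAIN}) and the minimal equation system of Theorem \ref{TDIM}. Because $G$ is monotone, $\S = \varnothing$, so the minimal system reduces to (\ref{TDIM1}), (\ref{TDIM2}) and (\ref{TDIM4}). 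I would organize the target system into the same blocks seen in Theorems \ref{TNEIGHBOR1} and \ref{T2RANK1}: (a) the generic relations $\lambda^X_{vk} = \lambda^X_{vn} + \lambda^W_n$ for every $v \neq u$ and every $1 \leq k \leq n-1$ with $k \neq j$; (b) the relations $\lambda^X_{vj} = \lambda^X_{vn} + \lambda^W_n$ for $v \in N(u)$, expressing that color $j$ plays no special role for neighbours of $u$; (c) relations tying $\lambda^X_{vj}$ for $v \in V\backslash N[u]$ and the whole block $\{\lambda^X_{uk}\}_{k\geq j}$ to the staircase coefficients $\lfloor n/j\rfloor - 1$ and $b_{jk}$ of the inequality; and (d) $\lambda^W_k = 0$ for $\chi_{eq}+1 \leq k \leq n-1$, $k \neq j$, together with a closing relation handling the case $j \geq \chi_{eq}+1$.

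For each equation I would exhibit a pair of equitable colorings $c^1,c^2$ lying on $\F'$ and read off the equality from $\lambda^X x^1 + \lambda^W w^1 = \lambda^X x^2 + \lambda^W w^2$, using the membership criterion of Remark \ref{NEIGHBOR2POINTS} (with $t_j = j$, since $j \geq \chi_{eq}$) to certify that both colorings are tight. The generic relations in block (a) come from \emph{swap} and \emph{intro} pairs built on near-complete colorings; hypotheses (ii) and (iii), which assert the existence of perfect matchings in the relevant complement graphs with the parity-dependent adjustment by the stable sets $H_v$, are exactly what guarantees that such $(n-1)$- and $n$-colorings with the prescribed class $C_j \subset N(u)$ exist. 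Hypothesis (i), furnishing a vertex $\hat{v} \in V\backslash N[u]$ that is not universal in $G-u$, supplies the freedom needed to move vertices outside $N[u]$ between colors while staying on the face, which drives block (c). Block (d) uses monotonicity: for every $\chi_{eq} \leq k \leq n-2$ a $k$-eqcol on $\F'$ exists, so pairing a $k$-eqcol with a $(k-1)$-eqcol and invoking (a)--(c) collapses $\lambda^W_k$ to zero.

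The delicate part — and what I expect to be the main obstacle — is block (c), namely pinning down the coefficients $\lambda^X_{uk}$ for $j < k \leq n$ against the piecewise-constant profile $b_{jk} = \lfloor n/j\rfloor - \lfloor n/k\rfloor$. Here I would walk $r$ upward through the range $j \leq r \leq \lfloor n/2\rfloor$ and, at each step, swap the color of $u$ between $j$ and $r$; this swap preserves tightness only when a coloring with $u \in C_j$ and $|C_j| = \lfloor n/r\rfloor$ coexists on $\F'$ with one in which $C_j \subset N(u)$ and $u$ receives a sufficiently large color, precisely the colorings promised by hypothesis (iv). The two sub-cases of (iv) — whether $\lfloor n/r\rfloor$ strictly decreases or stays constant as $r$ grows — must be handled separately, because only in the strictly-decreasing case does the size of $C_j$ actually change, and matching this behaviour to the jumps of $b_{jk}$ is where the bookkeeping is heaviest. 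Once block (c) is established, the remaining equations follow the routine pattern, and the affine independence of the resulting colorings is argued as for those of Remark \ref{REMPROC}.
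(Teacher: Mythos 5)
Your high-level framework is the paper's (Remark \ref{TECHNIQUE}: fix $\F'$, take $\F \supseteq \F'$ defined by $\lambda^X x + \lambda^W w = \lambda_0$, and derive an equation system on $(\lambda^X,\lambda^W)$ from pairs of tight colorings), and your use of hypotheses (i)--(iv) is in roughly the right places. But your target equation system is wrong, and the error is fatal rather than cosmetic. Your block (d) asserts $\lambda^W_k = 0$ for all $\chi_{eq}+1 \leq k \leq n-1$, $k \neq j$. This cannot be derived, because the inequality (\ref{RNEIGHBOR2AGAIN}) itself, viewed as a vector $(\lambda^X,\lambda^W)$ with $\lambda_0$ its right-hand constant, vanishes identically on $\F'$ and therefore must satisfy any equation system you prove; yet its $w$-part is $\sum_{k=j+1}^n b_{jk}(w_k - w_{k+1}) = \sum_{k=j+1}^n \bigl(\lfloor n/(k-1)\rfloor - \lfloor n/k\rfloor\bigr) w_k$, which is nonzero on many colors $k > j$ --- in particular on $w_{\lfloor n/2\rfloor+1}$, whose coefficient is $1$, and $\lfloor n/2\rfloor+1$ lies in your forbidden range since $\chi_{eq} \leq j \leq \lfloor n/2\rfloor$. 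This is exactly the structural difference between this inequality and the $(u,j,S)$-subneighborhood inequality of Theorem \ref{TNEIGHBOR1} (whose $w$-support is only $\{j\}$), and it is why you cannot recycle that theorem's system here. The paper's system keeps $\lambda^W_{\lfloor n/2\rfloor+1}$ as the free multiplier of the inequality: it proves $\lambda^W_k = 0$ only for $\chi_{eq}+1 \leq k \leq j$, proves $\lambda^W_k = \bigl(\lfloor n/(k-1)\rfloor - \lfloor n/k\rfloor\bigr)\lambda^W_{\lfloor n/2\rfloor+1}$ for $j+1 \leq k \leq n-1$, $k \neq \lfloor n/2\rfloor+1$, and correspondingly parameterizes the $x$-coefficients: $\lambda^X_{vj} = \lambda^X_{vn} + \lambda^W_n + \lambda^W_{\lfloor n/2\rfloor+1}$ for $v \in V\backslash N[u]$, $\lambda^X_{uk} = \lambda^X_{un} + \lambda^W_n + (\lfloor n/j\rfloor - 1)\lambda^W_{\lfloor n/2\rfloor+1}$ for $k < j$, and $\lambda^X_{uk} = \lambda^X_{un} + \lambda^W_n + (\lfloor n/k\rfloor - 1)\lambda^W_{\lfloor n/2\rfloor+1}$ for $j+1 \leq k \leq n-1$. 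Your block (c) gestures at the staircase $b_{jk}$, but combined with your block (d) it is inconsistent: if every $\lambda^W_k$ with $k>j$ vanished, the multiplier of the inequality would be forced to zero, and you would be ``proving'' that every supporting hyperplane of $\F'$ lies in the span of the minimal equation system alone, which is false for a proper face.

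A smaller but telling slip: you close by saying the affine independence of the colorings is argued as in Remark \ref{REMPROC}. That conflates the direct method (exhibiting $\dim(\F')+1$ affinely independent points, as in Theorem \ref{TNEIGHBOR2FACE}) with the indirect method you chose; in Remark \ref{TECHNIQUE} no affine independence argument is needed --- the burden is entirely on deriving the correct equation system, whose solution space must have dimension exactly one more than the span of the minimal system. The repair is concrete: replace your blocks (c) and (d) by the parameterized equations above (the paper's conditions (d)--(h)), and then your coloring constructions --- intro/swap pairs from (i)--(iii) for the generic relations, and the hypothesis (iv) colorings, split according to whether $\lfloor n/r\rfloor$ strictly drops, for the relations involving $u$ and the $\lambda^W$'s --- can be rerun to derive them.
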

\begin{proof}
Let $\F'$ be the face of $\ECP$ defined by (\ref{RNEIGHBOR2AGAIN}) and
$\F = \{ (x,w) \in \ECP ~:~ \lambda^X x + \lambda^W w = \lambda_0\}$ be a face such that $\F' \subset \F$.
According to Remark \ref{TECHNIQUE}, we have to prove that $(\lambda^X, \lambda^W)$ verifies the following equation system: 
\begin{enumerate}
\item[(a)] $\lambda^X_{vk} = \lambda^X_{vn} + \lambda^W_n,
                           ~~~\forall~v \in V \backslash N[u],~ 1 \leq k \leq n - 1,~ k \neq j$.
\item[(b)] $\lambda^X_{vk} = \lambda^X_{vn} + \lambda^W_n,
                           ~~~\forall~v \in N(u),~ 1 \leq k \leq n - 1$.
\item[(c)] $\lambda^X_{uj} = \lambda^X_{un} + \lambda^W_n$.
\item[(d)] $\lambda^X_{vj} = \lambda^X_{vn} + \lambda^W_n +
\lambda^W_{\lfloor n/2 \rfloor + 1},~~~\forall~v \in V \backslash N[u]$.
\item[(e)] $\lambda^X_{uk} = \lambda^X_{un} + \lambda^W_n +
(\lfloor n/j \rfloor - 1)\lambda^W_{\lfloor n/2 \rfloor + 1},~~~\forall~1 \leq k \leq j - 1$.
\item[(f)] $\lambda^X_{uk} = \lambda^X_{un} + \lambda^W_n +
(\lfloor n/k \rfloor - 1)\lambda^W_{\lfloor n/2 \rfloor + 1},~~~\forall~j + 1 \leq k \leq n - 1$.
\item[(g)] If $j \neq \chi_{eq}$, then $\lambda^W_k = 0,~~~\forall~\chi_{eq} + 1 \leq k \leq j$.
\item[(h)] $\lambda^W_k = \biggl(\biggl\lfloor \dfrac{n}{k-1} \biggr\rfloor - \biggl\lfloor \dfrac{n}{k} \biggr\rfloor \biggr)
\lambda^W_{\lfloor n/2 \rfloor + 1},~~~\forall~ j + 1 \leq k \leq n - 1,~ k \neq \lfloor n/2 \rfloor+1$.
\end{enumerate}
We present pairs of equitable colorings lying on $\F'$ that allow us to
prove the validity of each equation in the previous system.
\begin{enumerate}
\item[(a)] By hypothesis (i), there exist $\hat{v} \in V \backslash N[u]$ and $\hat{v}' \in V \backslash \{u, \hat{v}\}$ not adjacent to $\hat{v}$.\\
\textbf{Case $v = \hat{v}$}. Let $c^1$ be a $(n-1)$-eqcol such that
$c^1(u) = j$, $c^1(\hat{v}) = c^1(\hat{v}') = k$ and $c^2 = intro(c^1,\hat{v})$.
Then, $\lambda^X_{\hat{v} k} = \lambda^X_{\hat{v} n} + \lambda^W_n$.\\
\textbf{Case $v \neq \hat{v}$}. Let $c^1$ be a $n$-eqcol such that $c^1(u) = j$, $c^1(v) = k$, $c^1(\hat{v}) = n$ and $c^2 = swap_{k,n}(c^1)$. We have $\lambda^X_{vk} + \lambda^X_{\hat{v} n} = \lambda^X_{vn} + \lambda^X_{\hat{v} k}$ and
therefore $\lambda^X_{vk} = \lambda^X_{vn} + \lambda^W_n$.
\item[(b)] Let $u_1, u_2 \in N(u)$ be non adjacent vertices.\\
\textbf{Case $v = u_1$}. Let $c^1$ be a $(n-1)$-eqcol such that $c^1(u_1) = c^1(u_2) = k$. If
$k = j$ we set $c^1(u) = n-1$, otherwise $c^1(u) = j$. Let $c^2 = intro(c^1,u_1)$.
Then, $\lambda^X_{u_1 k} = \lambda^X_{u_1 n} + \lambda^W_n$.\\
\textbf{Case $v \neq u_1$}. Let $c^1$ be a $n$-eqcol such that $c^1(v) = k$ and $c^1(u_1) = n$. If $k = j$ we set
$c^1(u) = n-1$, otherwise $c^1(u) = j$. Let $c^2 = swap_{k,n}(c^1)$.
We have $\lambda^X_{vk} + \lambda^X_{u_1 n} = \lambda^X_{vn} + \lambda^X_{u_1 k}$ and therefore
$\lambda^X_{vk} = \lambda^X_{vn} + \lambda^W_n$.
\item[(c)] Let $v \in N(u)$, $c^1$ be a $n$-eqcol such that $c^1(u) = j$, $c^1(v) = n$ and
$c^2 = swap_{j,n}(c^1)$. In virtue of condition (b), we obtain $\lambda^X_{uj} = \lambda^X_{un} + \lambda^W_n$.
\item[(d)]  \textbf{Case $n$ even}. Let $M_v$ be the matching given by hypothesis (iii). Let $c^1$ be the
$\lfloor n/2 \rfloor$-eqcol whose color classes are the endpoints of $M_v$ and $C_j = \{u, v\}$.
Let $c^2 = intro(c^1,v)$. We deduce that
$\lambda^X_{vj} = \lambda^X_{v\lfloor n/2 \rfloor + 1} + \lambda^W_{\lfloor n/2 \rfloor + 1}
= \lambda^X_{vn} + \lambda^W_n + \lambda^W_{\lfloor n/2 \rfloor + 1}$.\\
\textbf{Case $n$ odd}. Let $M_v$ and $H_v$ be the matching and the stable set given by hypothesis (iii). Let
$c^1$ be the $\lfloor n/2 \rfloor$-eqcol whose color classes are $H_v$, the endpoints of $M_v$ and $C_j = \{u, v\}$.
Now, let $M$ be the matching given by hypothesis (ii) and let $v'$ be a vertex such that $(v, v')$ belongs to $M$.
Let $c^2$ be the $(\lfloor n/2 \rfloor + 1)$-eqcol
whose color classes are the endpoints of $M \backslash (v, v')$, $C_j = \{u\}$ and
$C_{\lfloor n/2 \rfloor+1} = \{ v, v' \}$.
Thus,
\[ \lambda^X_{vj} + \sum_{i \in V \backslash \{u, v\}} \lambda^X_{ic^1(i)} =
\lambda^X_{v\lfloor n/2 \rfloor + 1} + \sum_{i \in V \backslash \{u, v\}} \lambda^X_{ic^2(i)}
+ \lambda^W_{\lfloor n/2 \rfloor + 1}. \]
Conditions (a) and (b) allow us to reach the desired result.
\item[(e)] Let us notice that, if $r = \lfloor n/\lfloor n/j\rfloor \rfloor$ then
$\lfloor n/j \rfloor = \lfloor n/r \rfloor$, $j \leq r \leq \lfloor n/2 \rfloor$ and
$\lfloor \frac{n}{r} \rfloor > \lfloor \frac{n}{r+1} \rfloor$. By hypothesis (iv), there exists an $r$-eqcol
$c$ such that $N(u)$ contains all the vertices painted with color $j$. Let $c^1 = swap_{c(u),k}(c)$ and
$c^2$ be the $r$-eqcol that paints vertex $u$ and $\lfloor n/j \rfloor-1$ vertices of
$V \backslash N[u]$ with color $j$ also given by hypothesis (iv).
By condition (c), we have $\lambda^X_{uk} + \sum_{v \in V \backslash \{u\}} \lambda^X_{vc^1(v)} =
\lambda^X_{un} + \lambda^W_n + \sum_{v \in V \backslash \{u\}} \lambda^X_{vc^2(v)}$.
Applying conditions (a), (b) and (d), we get
$\lambda^X_{uk} = \lambda^X_{un} + \lambda^W_n +
(\lfloor n/j \rfloor - 1)\lambda^W_{\lfloor n/2 \rfloor + 1}$.
\item[(f)] \textbf{Case $k \leq \lfloor n/2 \rfloor$}. We proceed in the same way as in (e),
but using $r = \lfloor n/\lfloor n/k\rfloor \rfloor$ instead of $\lfloor n/\lfloor n/j\rfloor \rfloor$.\\
\textbf{Case $k \geq \lfloor n/2 \rfloor + 1$}. Then, $\lfloor n/k \rfloor = 1$. Let $v \in N(u)$, $c^1$ be a
$n$-eqcol such that $c^1(u) = k$, $c^1(v) = j$ and $c^2 = swap_{k,j}(c^1)$.
Conditions (b) and (c) allow us to obtain $\lambda^X_{uk} = \lambda^X_{un} + \lambda^W_n$.
\item[(g)-(h)] This condition can be verified by providing a $k$-eqcol $(x^1,w^1)$ and a $(k-1)$-eqcol
$(x^2,w^2)$ lying on $\F'$ and applying conditions (a)-(f) to equation $\lambda^X x^1 + \lambda^W_k = \lambda^X x^2$.\\
Thus, we only need to prove that, for any $\chi_{eq} \leq r \leq n-1$, there exists an $r$-eqcol $c$ lying on $\F'$ .\\ 
\textbf{Case $r < j$}. The existence of $c$ is guaranteed by the monotonicity of G.\\
\textbf{Case $j \leq r \leq \lfloor n/2 \rfloor$}. The existence of $c$ is guaranteed by hypothesis (iv).\\
\textbf{Case $r = \lfloor n/2 \rfloor + 1$}. $c$ may be the $(\lfloor n/2 \rfloor + 1)$-eqcol yielded by condition (d).\\
\textbf{Case $\lfloor n/2 \rfloor+2 \leq r \leq n-1$}. Let us consider the
$(\lfloor n/2 \rfloor + 1)$-eqcol yielded in the previous case and let $v_1, v_2$ be vertices sharing a color
different from $j$. In order to generate a $(\lfloor n/2 \rfloor + 2)$-eqcol $c$, we introduce a new color on $v_1$,
\ie $c = intro(c',v_1)$ where $c'$ is the $(\lfloor n/2 \rfloor + 1)$-eqcol. By repeating this
procedure, we can generate a $(\lfloor n/2 \rfloor + 3)$-eqcol and so on.
\end{enumerate}
\end{proof}

Let us present an example where the previous theorem is applied.\\

\noindent \textbf{Example.} Let $G$ be the graph given in Figure \ref{minigraph2}(a). Let us recall that $G$ is
monotone and $\chi_{eq}(G) = 3$.
We apply Theorem \ref{TNEIGHBOR2} considering $u = 1$ and $j = 3$. It is not hard to see that the assumptions of
this theorem are satisfied. Below, we present some examples of colorings related to hypothesis (iv) of
Theorem \ref{TNEIGHBOR2}. Figure \ref{minigraph3}(a) shows a 3-eqcol of $G$ such that $C_3 \subset N(1)$
and Figure \ref{minigraph3}(b) shows a 3-eqcol of $G$ such that $1 \in C_3$ and $|C_3| = 3$.

By Theorem \ref{TIGHT2}, $(1,j)$-outside-neighborhood inequalities with $j \in \{1,2\}$ are also facet-defining.

\begin{figure}[h]
  \centering ~~~~~~~~~~~
\begin{graph}(5.8, 3)(-2, -1)
	\freetext(0.8,-0.75){(a) 3-eqcol, $C_3 \subset N(1)$}
	\roundnode{v1}(-2,0)
	\autonodetext{v1}[w]{1}
	\roundnode{v2}(-1.2,0)
	\autonodetext{v2}[n]{3}
	\roundnode{v3}(-1.7528,0.7608)
	\autonodetext{v3}[n]{3}
	\roundnode{v4}(-2.6472,0.4702)
	\autonodetext{v4}[nw]{3}
	\roundnode{v5}(-2.6472,-0.4702)
	\autonodetext{v5}[sw]{3}
	\roundnode{v6}(-1.7528,-0.7608)
	\autonodetext{v6}[s]{2}
	\roundnode{v7}(-0.4,0)
	\autonodetext{v7}[n]{1}
	\roundnode{v8}(0.4,0)
	\autonodetext{v8}[n]{2}
	\roundnode{v9}(1.2,0)
	\autonodetext{v9}[n]{1}
	\roundnode{v10}(2,0)
	\autonodetext{v10}[n]{2}
	\roundnode{v11}(2.8,0)
	\autonodetext{v11}[n]{1}
	\edge{v1}{v2}
	\edge{v1}{v3}
	\edge{v1}{v4}
	\edge{v1}{v5}
	\edge{v1}{v6}
	\edge{v2}{v7}
	\edge{v7}{v8}
	\edge{v8}{v9}
	\edge{v9}{v10}
	\edge{v10}{v11}
\end{graph}~~~~~~~
\begin{graph}(5.8, 3)(-2, -1)
	\freetext(1.0,-0.75){(b) 3-eqcol, $1 \in C_3$, $|C_3| = 3$}
	\roundnode{v1}(-2,0)
	\autonodetext{v1}[w]{3}
	\roundnode{v2}(-1.2,0)
	\autonodetext{v2}[n]{2}
	\roundnode{v3}(-1.7528,0.7608)
	\autonodetext{v3}[n]{1}
	\roundnode{v4}(-2.6472,0.4702)
	\autonodetext{v4}[nw]{1}
	\roundnode{v5}(-2.6472,-0.4702)
	\autonodetext{v5}[sw]{1}
	\roundnode{v6}(-1.7528,-0.7608)
	\autonodetext{v6}[s]{2}
	\roundnode{v7}(-0.4,0)
	\autonodetext{v7}[n]{3}
	\roundnode{v8}(0.4,0)
	\autonodetext{v8}[n]{2}
	\roundnode{v9}(1.2,0)
	\autonodetext{v9}[n]{3}
	\roundnode{v10}(2,0)
	\autonodetext{v10}[n]{2}
	\roundnode{v11}(2.8,0)
	\autonodetext{v11}[n]{1}
	\edge{v1}{v2}
	\edge{v1}{v3}
	\edge{v1}{v4}
	\edge{v1}{v5}
	\edge{v1}{v6}
	\edge{v2}{v7}
	\edge{v7}{v8}
	\edge{v8}{v9}
	\edge{v9}{v10}
	\edge{v10}{v11}
\end{graph}
  \caption{}
  \label{minigraph3}
\end{figure}

\subsection{Clique-neighborhood inequalities}

\begin{tthm} \label{TNEIGHBOR3}
Let $G$ be a monotone graph, $u \in V$, $Q$ be a clique of $G$ such that $Q \cap N[u]=\varnothing$ and $j,k$
be numbers verifying $3 \leq k \leq \min \{\alpha(N(u)) + 1, \lceil n/\chi_{eq} \rceil\}$ and
$1 \leq j \leq \biggl\lceil \dfrac{n}{k-1} \biggr\rceil - 1$. If
\begin{enumerate}
\item[(i)] for all $v \in V \backslash (N[u] \cup Q)$, there exist
$\lceil n/3 \rceil \leq r \leq \lceil n/2 \rceil - 1$, $q_1, q_2 \in V$ and two $r$-eqcols such that in one of them
$C_j = \{u, v, q_1\}$ and in the other $C_j = \{u , q_2\}$ ($q_1$ and $q_2$ may be the same vertex),
\item[(ii)] for all $t$ such that $max \{ j, \chi_{eq} \} \leq t \leq n - 3$, we have the following:
\begin{itemize}
\item if $\biggl\lceil \dfrac{n}{t} \biggr\rceil > \biggl\lceil \dfrac{n}{t+1} \biggr\rceil$, there exist $q \in Q$ and
a $t$-eqcol such that $C_j \subset N(u)$, $|C_j| = \lceil n/t \rceil$ and $u, q \in C_t$,
\item if $\biggl\lceil \dfrac{n}{t} \biggr\rceil = \biggl\lceil \dfrac{n}{t+1} \biggr\rceil$, there exists a $t$-eqcol
satisfying conditions given in Remark \ref{NEIGHBOR3POINTS}, \ie lying on the face defined by (\ref{RNEIGHBOR3AGAIN}),
\end{itemize} 
\end{enumerate}
then the $(u,j,k,Q)$-clique-neighborhood inequality, \ie
\begin{multline} \label{RNEIGHBOR3AGAIN}
(k - 1) x_{uj} +
\sum_{l = \lceil \frac{n}{k-1} \rceil}^{n-2} \biggl(k - \biggl\lceil \dfrac{n}{l} \biggr\rceil \biggr) x_{ul} +
(k - 1) \bigl(x_{u n-1} + x_{un} \bigr) +
\!\!\!\sum_{v \in N(u)\cup Q}\!\!\!x_{vj} \\ + \sum_{v \in V \backslash \{u\}}\!\!\!(x_{v n-1} + x_{vn}) \leq
\sum_{l = j}^n b_{ul} (w_l - w_{l+1}),
\end{multline}
defines a facet of $\ECP$, where
\[ b_{ul} = \begin{cases}
 \min\{\lceil n / l \rceil, \alpha(N(u)) + 1\}, &\textrm{if $j \leq l \leq \lceil n/k \rceil - 1$} \\
 k, &\textrm{if $\lceil n/k \rceil \leq l \leq n - 2$} \\
 k + 1, &\textrm{if $l \geq n - 1$}
\end{cases} \]
\end{tthm}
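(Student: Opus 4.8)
The plan is to apply the technique of Remark \ref{TECHNIQUE}. Let $\F'$ be the face defined by (\ref{RNEIGHBOR3AGAIN}) and let $\F = \{(x,w) \in \ECP ~:~ \lambda^X x + \lambda^W w = \lambda_0\}$ be any face with $\F' \subset \F$. Fixing a vertex $q \in Q$ to anchor the colorings, I would show that $(\lambda^X, \lambda^W)$ is forced to satisfy a system of $dim(\ECP)-1$ independent equalities, each expressing that $\lambda^X x + \lambda^W w = \lambda_0$ is a combination of (\ref{RNEIGHBOR3AGAIN}) at equality and the minimal system of Theorem \ref{TDIM}. Throughout, Remark \ref{NEIGHBOR3POINTS} is the essential tool: it tells me exactly which $r$-eqcols lie on $\F'$, so that every pair of colorings used to produce an equation is guaranteed to be tight for (\ref{RNEIGHBOR3AGAIN}).

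The equation system splits into the same blocks as in Theorems \ref{TNEIGHBOR1} and \ref{TNEIGHBOR2}. First, a block of ``background'' equalities $\lambda^X_{vk} = \lambda^X_{vn} + \lambda^W_n$ for every color $k$ lying outside the support of the inequality at a vertex $v \neq u$ (together with the analogous relation tying the jointly-treated columns $n-1$ and $n$, whose coefficients are both $1$ for $v \neq u$). These I would obtain by keeping $u$ painted with color $j$, so the colorings remain on $\F'$, and either swapping color $k$ against color $n$ via $swap_{k,n}$ on a suitable $n$-eqcol, or introducing a fresh color with $intro$ on an $(n-1)$-eqcol, exactly as in the earlier proofs. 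A second block ties the coefficients $\lambda^X_{vj}$ for $v \in N(u) \cup Q$ to $\lambda^X_{vn}$, using colorings whose class $C_j$ is a maximum-cardinality subset of $N(u) \cup Q$ (as permitted by Remark \ref{NEIGHBOR3POINTS}) and then swapping $j$ against an auxiliary color.

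The heart of the argument is the block describing the $u$-column coefficients $\lambda^X_{uk}$ across the three size-regimes of (\ref{RNEIGHBOR3AGAIN}). Here I would feed in precisely the colorings supplied by hypotheses (i) and (ii): for $r \leq \lceil n/k \rceil - 1$ the colorings with $u \in C_j$ and $|C_j \cap Q| = 1$; for the middle regime $\lceil n/k \rceil \leq t \leq n-2$ the $t$-eqcols of hypothesis (ii) realizing $C_j \subset N(u)$ with $|C_j| = \lceil n/t \rceil$ and $u, q \in C_t$; and for $r \geq n-1$ the colorings with $|C_j \cap Q| + |C_{n-1} \setminus \{u\}| + |C_n \setminus \{u\}| = 2$. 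Swapping $u$ between color $j$ (or $n-1,n$) and an auxiliary color, and combining with the background block, should force $\lambda^X_{uk}$ to take the values dictated by the coefficients $k-1$ and $k - \lceil n/l \rceil$. Finally, a block of equalities $\lambda^W_k = 0$ for the colors strictly between $\chi_{eq}$ and the relevant bounds is obtained by pairing a $k$-eqcol on $\F'$ with a $(k-1)$-eqcol on $\F'$, whose existence is guaranteed by monotonicity of $G$ together with hypotheses (i)--(ii), and a single boundary equation handles the case $j \geq \chi_{eq}+1$.

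I expect the main obstacle to be the middle regime of the $u$-column block, i.e. forcing the coefficient $k - \lceil n/l \rceil$ for $\lceil n/(k-1) \rceil \leq l \leq n-2$. The difficulty is twofold: the coefficient varies with $l$, so the induction along $t$ must track the jumps of $\lceil n/t \rceil$ and make essential use of the case distinction $\lceil n/t \rceil > \lceil n/(t+1) \rceil$ versus equality in hypothesis (ii), mirroring condition (iv) of Theorem \ref{TNEIGHBOR2}; and the columns $n-1$ and $n$ are lumped together in (\ref{RNEIGHBOR3AGAIN}) while the clique $Q$ contributes to $C_j$, so the $r \geq n-1$ regime requires carefully balancing $|C_j \cap Q|$ against $|C_{n-1}| + |C_n|$ to keep each coloring pair on $\F'$. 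Verifying that the resulting $dim(\ECP)-1$ equalities are genuinely independent, so that no facet is lost, will, as in the companion theorems, be routine once the colorings are in hand.
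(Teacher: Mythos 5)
Your scaffolding matches the paper's proof: Remark \ref{TECHNIQUE} as the framework, Remark \ref{NEIGHBOR3POINTS} as the certificate that the colorings you build lie on the face, hypotheses (i) and (ii) feeding the $u$-column and the $N(u)\cup Q$ block, and pairs of $t$-eqcols and $(t-1)$-eqcols to pin down the $w$-coefficients. But the equation system you propose to verify is wrong, and in a way that makes two of your blocks unprovable. The test is simple: the coefficient vector of (\ref{RNEIGHBOR3AGAIN}) itself, written as $\pi^X x - \pi^W w = 0$, is constant on $\F'$, so every equation you derive from points of $\F'$ must be satisfied by it. After telescoping, the right-hand side $\sum_{l} b_{ul}(w_l - w_{l+1}) = b_{uj}w_j + \sum_{l>j}(b_{ul}-b_{ul-1})w_l$ has coefficient $b_{un-1}-b_{un-2} = 1$ at $w_{n-1}$ and $0$ at $w_n$. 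Hence for this vector and for $v \neq u$, $k \leq n-2$, $k \neq j$ outside the support: $\lambda^X_{vk}=0$, $\lambda^X_{vn}=1$, $\lambda^W_n = 0$, and your background equation $\lambda^X_{vk} = \lambda^X_{vn} + \lambda^W_n$ reads $0 = 1$. The correct relations (the paper's (b), (c) and (e)) are $\lambda^X_{vn-1} = \lambda^X_{vn} + \lambda^W_n$ together with $\lambda^X_{vk} = \lambda^X_{vn-1} + \lambda^W_{n-1}$: every non-support column picks up an extra $\lambda^W_{n-1}$. The same test kills your last block: since the hyperplane has nonzero $w_l$-coefficients wherever $b_{ul}$ jumps (always at $l=n-1$, and at every color where $\lceil n/l \rceil$ drops while below $\alpha(N(u))+1$), you cannot have $\lambda^W_k = 0$ there. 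Pairing a $k$-eqcol with a $(k-1)$-eqcol on $\F'$ actually forces $\lambda^W_k = (b_{uk}-b_{uk-1})\lambda^W_{n-1}$, which is the paper's condition (h); insisting on $\lambda^W_k = 0$ amounts to forcing the multiplier of (\ref{RNEIGHBOR3AGAIN}) to vanish, \ie to ``proving'' that the inequality is implied by the minimal system of Theorem \ref{TDIM}, contradicting that $\F'$ is a proper face.

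The conceptual point you missed is that this inequality behaves like the outside-neighborhood inequality, not like the subneighborhood one. In Theorem \ref{TNEIGHBOR1} only $w_j$ appears on the right-hand side, so $\lambda^W_k = 0$ is correct there; in Theorem \ref{TNEIGHBOR2} the right-hand side is a telescoping sum and its system accordingly has $\lambda^W_k = \bigl(\lfloor n/(k-1)\rfloor - \lfloor n/k\rfloor\bigr)\lambda^W_{\lfloor n/2\rfloor+1}$. You explicitly invoke Theorem \ref{TNEIGHBOR2} as the model for handling hypothesis (ii), but you did not import the corresponding $\lambda^W$ structure, and you imported the $w$-block of Theorem \ref{TNEIGHBOR1} instead. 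The repair is mechanical once this is noticed: replace your two faulty blocks by the paper's conditions (b), (c), (e) and (h), after which the coloring constructions you describe (intro on $(n-1)$-eqcols with $u,q$ sharing color $j$, swaps anchored at $q \in Q$, the regime-by-regime treatment of $\lambda^X_{ur}$ via hypotheses (i)--(ii)) go through essentially as in the paper. As written, however, the derivations you outline cannot produce the equations you claim, so the proof does not go through.
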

\begin{proof}
Let $\F'$ be the face of $\ECP$ defined by (\ref{RNEIGHBOR3AGAIN}) and
$\F = \{ (x,w) \in \ECP ~:~ \lambda^X x + \lambda^W w = \lambda_0\}$ be a face such that $\F' \subset \F$.
According to Remark \ref{TECHNIQUE}, we have to prove that $(\lambda^X, \lambda^W)$ verifies the following equation system: 
\begin{enumerate}
\item[(a)] $\lambda^X_{uj} = \lambda^X_{un} + \lambda^W_n$.
\item[(b)] $\lambda^X_{vn-1} = \lambda^X_{vn} + \lambda^W_n,~~~ \forall~ v \in V$.
\item[(c)] $\lambda^X_{vr} = \lambda^X_{vn-1} + \lambda^W_{n-1},~~~ \forall~ v \in V \backslash \{u\},~ 1 \leq r \leq n - 2,~ r \neq j$.
\item[(d)] $\lambda^X_{vj} = \lambda^X_{vn} + \lambda^W_n,~~~ \forall~ v \in N(u) \cup Q$.
\item[(e)] $\lambda^X_{vj} = \lambda^X_{vn-1} + \lambda^W_{n-1},~~~ \forall~ v \in V \backslash (N[u] \cup Q)$.
\item[(f)] $\lambda^X_{ur} = \lambda^X_{un} + (k - 1) \lambda^W_{n-1} + \lambda^W_n,~~~ \forall~ 1 \leq r \leq \lceil \frac{n}{k-1} \rceil - 1,~ r \neq j$.
\item[(g)] $\lambda^X_{ur} = \lambda^X_{un} + (\lceil n / r \rceil - 1) \lambda^W_{n-1} + \lambda^W_n,~~~ \forall~ \lceil \frac{n}{k-1} \rceil \leq r \leq n - 2$.
\item[(h)] $\lambda^W_r = (b_{ur} - b_{ur-1}) \lambda^W_{n-1},~~~ \forall~ \chi_{eq} + 1 \leq r \leq n-2$.
\end{enumerate}
We present pairs of equitable colorings lying on $\F'$ that allow us to
prove the validity of each equation in the previous system.
\begin{enumerate}
\item[(a)] Let $q \in Q$, $c^1$ be a $(n-1)$-eqcol such that $c^1(u) = c^1(q) = j$ and $c^2 = intro(c^1,u)$.
Then, $\lambda^X_{uj} = \lambda^X_{un} + \lambda^W_n$.
\item[(b)] \textbf{Case $v = u$}. Let $q \in Q$, $w \in N(u) \cup Q \backslash \{q\}$, $c^1$ be a $(n-1)$-eqcol
such that $c^1(u) = c^1(q) = n-1$, $c^1(w) = j$ and $c^2 = intro(c^1,u)$. Then, $\lambda^X_{un-1} = \lambda^X_{un} + \lambda^W_n$.\\
Now, let $v_1, v_2 \in N(u)$ be non adjacent vertices.\\
\textbf{Case $v = v_1$}. Let $c^1$ be a $(n-1)$-eqcol such that $c^1(v_1) = c^1(v_2) = n-1$,
$c^1(u) = j$ and $c^2 = intro(c^1,v_1)$. We have $\lambda^X_{v_1 n-1} = \lambda^X_{v_1 n} + \lambda^W_n$.\\
\textbf{Case $v \in V \backslash \{u, v_1\}$}. Let $c^1$ be a $n$-eqcol such that $c^1(u) = j$, $c^1(v) = n-1$, $c^1(v_1) = n$ and $c^2 = swap_{n-1,n}(c^1)$.
We have $\lambda^X_{vn-1} + \lambda^X_{v_1 n} = \lambda^X_{vn} + \lambda^X_{v_1 n-1}$ and, since
$\lambda^X_{v_1 n-1} = \lambda^X_{v_1 n} + \lambda^W_n$, we obtain $\lambda^X_{v n-1} = \lambda^X_{v n} + \lambda^W_n$.
\item[(c)] Let $v_1, v_2 \in N(u)$ be non adjacent vertices and $q \in Q$.\\
\textbf{Case $v = v_1$}. Let $c^1$ be a $(n-2)$-eqcol such that $c^1(v_1) = c^1(v_2) = r$,
$c^1(u) = c^1(q) = j$ and $c^2 = intro(c^1,v_1)$. Then, $\lambda^X_{v_1 r} = \lambda^X_{v_1 n-1} + \lambda^W_{n-1}$.\\
\textbf{Case $v \neq v_1$}. Let $c^1$ be a $n$-eqcol such that $c^1(u) = j$, $c^1(v) = r$, $c^1(v_1) = n-1$ and
$c^2 = swap_{r,n-1}(c^1)$. We have $\lambda^X_{vr} + \lambda^X_{v_1 n-1} = \lambda^X_{vn-1} + \lambda^X_{v_1 r}$ and,
since $\lambda^X_{v_1 r} = \lambda^X_{v_1 n-1} + \lambda^W_{n-1}$, we obtain
$\lambda^X_{vr} = \lambda^X_{vn-1} + \lambda^W_{n-1}$.
\item[(d)] Let $q \in Q$.\\
\textbf{Case $v = q$}. Let $c^1$ be a $(n-1)$-eqcol such that $c^1(u) = c^1(q) = j$ and
$c^2 = intro(c^1,q)$. Then, $\lambda^X_{qj} = \lambda^X_{qn} + \lambda^W_n$.\\
\textbf{Case $v \neq q$}. Let $c^1$ be a $n$-eqcol such that $c^1(u) = n-1$, $c^1(q) = n$, $c^1(v) = j$ and $c^2 = swap_{j,n}(c^1)$.
We have $\lambda^X_{vj} + \lambda^X_{qn} = \lambda^X_{vn} + \lambda^X_{qj}$ and, since $\lambda^X_{qj} = \lambda^X_{qn} + \lambda^W_n$, we obtain $\lambda^X_{vj} = \lambda^X_{vn} + \lambda^W_n$.
\item[(e)] Hypothesis (i) ensures that there exists an equitable coloring $c^1$ such that
$c^1(u) = c^1(v) = c^1(q_1) = j$ and the remaining vertices do not use color $j$, and there exists
another equitable coloring $c^2$ (with the same number of colors) such that $c^2(u) = c^2(q_2) = j$
and the remaining vertices do not use color $j$, where $q_1, q_2 \in Q$. 
We have
\[ \sum_{w \in V\backslash\{u,v,q_1\}} \lambda^X_{w c^1(w)} + \lambda^X_{q_1 j} + \lambda^X_{vj}
= \sum_{w \in V\backslash\{u,v,q_2\}} \lambda^X_{w c^2(w)} + \lambda^X_{q_2 j} + \lambda^X_{vc^2(v)} \]
and, by conditions (b)-(d), we derive
$\lambda^X_{vj} = \lambda^X_{vc^2(v)} = \lambda^X_{vn} + \lambda^W_{n-1}$.
\item[(f)] Let $t = \lceil \frac{n}{k - 1} \rceil - 1$. Clearly, $max \{ j, \chi_{eq} \} \leq t \leq n - 3$ and
$\lceil \frac{n}{t} \rceil > \lceil \frac{n}{t+1} \rceil$. By hypothesis (ii), there exists a $t$-eqcol $c$ whose
class color $C_j$ satisfies $C_j \subset N(u)$ and $|C_j| = \lceil n/t \rceil$ and $u$ and a vertex of $Q$ use
color $t$. Let $c^1 = swap_{j,t}(c)$ and $c^2 = swap_{r,t}(c)$ (since $t \geq j$ and $t \geq r$, both colorings are
well-defined). Hence, $c^1(u) = j$ and $c^2(u) = r$. We apply conditions proved before to
$\lambda^X x^1 = \lambda^X x^2$, where $x^1$ and $x^2$ are the binary variables representing colorings
$c^1$ and $c^2$ respectively, and we conclude that
$\lambda^X_{ur} = \lambda^X_{un} + (k - 1) \lambda^W_{n-1} + \lambda^W_n$.
\item[(g)] \textbf{Case $r \leq \lceil \frac{n}{2} \rceil-1$}. We proceed in the same way as in (f), but using
$t = \lceil \frac{n}{\lceil n/r \rceil - 1} \rceil - 1$ instead of
$\lceil \frac{n}{k - 1} \rceil - 1$.\\
\textbf{Case $r \geq \lceil \frac{n}{2} \rceil$}. Let $v_1, v_2 \in N(u)$ be non adjacent vertices and $q \in Q$. Let $c^1$ be a $(n-2)$-eqcol such that
$c^1(v_1) = c^1(v_2) = r$, $c^1(u) = c^1(q) = j$ and $c^2 = swap_{j,r}(c^1)$. We apply conditions proved before
to $\lambda^X x^1 = \lambda^X x^2$, where $x^1$ and $x^2$ are the binary variables representing colorings
$c^1$ and $c^2$ respectively, and we conclude that $\lambda^X_{ur} = \lambda^X_{un} + \lambda^W_{n-1} + \lambda^W_n$.
\item[(h)] This condition can be verified by providing an $r$-eqcol $(x^1,w^1)$ and an $(r-1)$-eqcol
$(x^2,w^2)$ lying on $\F'$ and applying conditions (a)-(g) to equation $\lambda^X x^1 + \lambda^W_r = \lambda^X x^2$.\\
Thus, we only need to prove that, for any $\chi_{eq} \leq t \leq n-2$, there exists a $t$-eqcol $c$ lying on $\F'$ .\\ 
\textbf{Case $t < j$}. The existence of $c$ is guaranteed by the monotonicity of G.\\
\textbf{Case $j \leq t \leq n-3$}. The existence of $c$ is guaranteed by hypothesis (ii).\\
\textbf{Case $t=n-2$}. $c$ may be the $(n-2)$-eqcol yielded by condition (c).
\end{enumerate}
\end{proof}

\begin{tcor} \label{TNEIGHBOR3COR}
Let $G$ be a monotone graph and let $u$, $j$, $k$, $Q$ be defined as in Theorem \ref{TNEIGHBOR3}.
If hypothesis (ii) of Theorem \ref{TNEIGHBOR3} holds and for all $v \in V \backslash (N[u] \cup Q)$:
\begin{itemize}
\item if $n$ is odd,
\begin{itemize}
\item there exists a vertex $q_1 \in Q$ and a stable set $H^1_v = \{u, v, q_1\}$ such that the complement of $G - H^1_v$
has a perfect matching $M_v$,
\item there exists a vertex $q_2 \in Q$ and two disjoint stable sets $H^2_v = \{u, q_2\}$, $H^3_v$ such that
$|H^3_v| = 3$ and the complement of $G - (H^2_v \cup H^3_v)$ has a perfect matching  $M'_v$,
\end{itemize}
\item if $n$ is even,
\begin{itemize}
\item there exists a vertex $q_1 \in Q$ and two disjoint stable sets $H^1_v = \{u, v, q_1\}$, $H^2_v$ such that
$|H^2_v| = 3$ and the complement of $G - (H^1_v \cup H^2_v)$ has a perfect matching $M_v$,
\item there exists a vertex $q_2 \in Q$ and three disjoint stable sets $H^3_v = \{u, q_2\}$, $H^4_v$, $H^5_v$
such that $|H^4_v| = |H^5_v| = 3$ and the complement of $G - (H^3_v \cup H^4_v \cup H^5_v)$ has a perfect matching  $M'_v$,
\end{itemize}
\end{itemize}
then the $(u,j,k,Q)$-clique-neighborhood inequality defines a facet of $\ECP$.
\end{tcor}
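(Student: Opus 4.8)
The plan is to obtain this corollary as an immediate consequence of Theorem \ref{TNEIGHBOR3}. Since hypothesis (ii) of that theorem is assumed here verbatim, the entire task reduces to showing that the perfect-matching and stable-set conditions stated in the corollary imply hypothesis (i) of Theorem \ref{TNEIGHBOR3}. Accordingly, I would fix an arbitrary $v \in V \backslash (N[u] \cup Q)$ and, taking $r = \lceil n/2 \rceil - 1$, exhibit the two required $r$-eqcols: one whose color class $C_j$ is $\{u,v,q_1\}$ and one whose $C_j$ is $\{u,q_2\}$.

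The underlying idea is the familiar dictionary between equitable colorings with small color classes and matchings of the complement: every stable set of $G$ is a legal color class, and a perfect matching of $\overline{G - W}$ on an even set $V \backslash W$ partitions $V \backslash W$ into size-$2$ stable sets of $G$. Before the case analysis I would record two preliminary facts valid for all $n \geq 5$: that $r = \lceil n/2 \rceil - 1$ indeed lies in the admissible interval $[\lceil n/3 \rceil, \lceil n/2 \rceil - 1]$, and that $j \leq r$ (which follows from $j \leq \lceil n/(k-1) \rceil - 1 \leq \lceil n/2 \rceil - 1$ since $k \geq 3$), so that the label $j$ is available as a color in an $r$-eqcol.

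Then I would run a four-case verification --- $n$ odd or even, crossed with the two target classes $C_j = \{u,v,q_1\}$ and $C_j = \{u,q_2\}$ --- in each case assembling the coloring from the prescribed disjoint stable sets together with the edges of the supplied perfect matching, and assigning color $j$ to the distinguished stable set. A short count confirms that each construction exhausts $V$ and yields exactly $r$ color classes, of sizes $2$ and $3$ only (one class of size $3$ when $n$ is odd, two when $n$ is even), so that the colorings are equitable. For example, for $C_j = \{u,q_2\}$ with $n$ odd the classes are $\{u,q_2\}$, the size-$3$ set $H^3_v$, and the $(n-5)/2$ edges of $M'_v$, totalling $2 + (n-5)/2 = (n-1)/2 = r$ classes; the remaining three cases are entirely analogous.

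I do not anticipate any conceptual obstacle, since the genuine content --- the existence of such colorings --- is exactly what the matching hypotheses provide; the work is purely the bookkeeping that each combination of stable sets and matching edges partitions $V$ into the right number of equitably sized classes. The one point requiring care is that hypothesis (i) of Theorem \ref{TNEIGHBOR3} demands both colorings share a common value of $r$, which is automatic here because all four constructions produce $r = \lceil n/2 \rceil - 1$.
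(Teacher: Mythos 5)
Your proposal is correct and follows essentially the same route as the paper: both reduce the corollary to verifying hypothesis (i) of Theorem \ref{TNEIGHBOR3} by building two $(\lceil n/2 \rceil - 1)$-eqcols whose color class $j$ is $\{u,v,q_1\}$ (resp.\ $\{u,q_2\}$), with the remaining classes formed by the prescribed size-3 stable sets and the endpoints of the perfect-matching edges. Your extra bookkeeping (that $\lceil n/2 \rceil - 1$ lies in $[\lceil n/3 \rceil, \lceil n/2 \rceil - 1]$, that $j \leq \lceil n/2 \rceil - 1$, and the class counts) only makes explicit what the paper leaves implicit.
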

\begin{proof}
Let us suppose that $n$ is odd.
Let $v \in V \backslash (N[u] \cup Q)$ and let $M_v$, $M'_v$, $H^1_v$, $H^2_v$ and $H^3_v$ be the matchings and the
stable sets given in the hypothesis. Consider an $(\lceil n/2 \rceil - 1)$-eqcol such that the color class
$j$ is $H^1_v$ and the remaining color classes are the endpoints of edges of $M_v$, and an
$(\lceil n/2 \rceil - 1)$-eqcol such that the color class $j$ is $H^2_v$ and the remaining color classes are $H^3_v$
and the endpoints of edges of $M'_v$. Therefore, hypothesis (i) of Theorem \ref{TNEIGHBOR3} holds and
the $(u,j,k,Q)$-clique-neighborhood inequality defines a facet of $\ECP$.

The proof for $n$ even is analogous to the previous one.
\end{proof}

Let us present an example where the previous result is applied.\\

\noindent \textbf{Example.} Let $G$ be the graph given in Figure \ref{minigraph2}(a). Let us recall that $G$ is
monotone and $\chi_{eq}(G) = 3$. We apply Corollary \ref{TNEIGHBOR3COR} considering $u = 1$, $j = 1$, $k = 4$ and
$Q = \{7,8\}$. It is not hard to see that the assumptions of this corollary are satisfied.
Below, we present some examples of colorings related to hypothesis (ii) of Theorem \ref{TNEIGHBOR3}.
Figure \ref{minigraph4}(a) shows a 3-eqcol of $G$ such that $1,7 \in C_3$, $C_1 \subset N(1)$, $|C_1| = 4$ and
Figure \ref{minigraph4}(b) shows a 5-eqcol of $G$ such that $1,7 \in C_5$, $C_1 \subset N(1)$, $|C_1| = 3$.

\begin{figure}[h]
  \centering ~~~~~~~~~~~
\begin{graph}(5.8, 2)(-2, -1)
	\freetext(0.6,-0.75){(a) 3-eqcol of $G$}
	\roundnode{v1}(-2,0)
	\autonodetext{v1}[w]{3}
	\roundnode{v2}(-1.2,0)
	\autonodetext{v2}[n]{1}
	\roundnode{v3}(-1.7528,0.7608)
	\autonodetext{v3}[n]{1}
	\roundnode{v4}(-2.6472,0.4702)
	\autonodetext{v4}[nw]{1}
	\roundnode{v5}(-2.6472,-0.4702)
	\autonodetext{v5}[sw]{1}
	\roundnode{v6}(-1.7528,-0.7608)
	\autonodetext{v6}[s]{2}
	\roundnode{v7}(-0.4,0)
	\autonodetext{v7}[n]{3}
	\roundnode{v8}(0.4,0)
	\autonodetext{v8}[n]{2}
	\roundnode{v9}(1.2,0)
	\autonodetext{v9}[n]{3}
	\roundnode{v10}(2,0)
	\autonodetext{v10}[n]{2}
	\roundnode{v11}(2.8,0)
	\autonodetext{v11}[n]{3}
	\edge{v1}{v2}
	\edge{v1}{v3}
	\edge{v1}{v4}
	\edge{v1}{v5}
	\edge{v1}{v6}
	\edge{v2}{v7}
	\edge{v7}{v8}
	\edge{v8}{v9}
	\edge{v9}{v10}
	\edge{v10}{v11}
\end{graph}~~~~~~~
\begin{graph}(5.8, 2)(-2, -1)
	\freetext(0.6,-0.75){(b) 5-eqcol of $G$}
	\roundnode{v1}(-2,0)
	\autonodetext{v1}[w]{5}
	\roundnode{v2}(-1.2,0)
	\autonodetext{v2}[n]{1}
	\roundnode{v3}(-1.7528,0.7608)
	\autonodetext{v3}[n]{1}
	\roundnode{v4}(-2.6472,0.4702)
	\autonodetext{v4}[nw]{1}
	\roundnode{v5}(-2.6472,-0.4702)
	\autonodetext{v5}[sw]{2}
	\roundnode{v6}(-1.7528,-0.7608)
	\autonodetext{v6}[s]{4}
	\roundnode{v7}(-0.4,0)
	\autonodetext{v7}[n]{5}
	\roundnode{v8}(0.4,0)
	\autonodetext{v8}[n]{3}
	\roundnode{v9}(1.2,0)
	\autonodetext{v9}[n]{2}
	\roundnode{v10}(2,0)
	\autonodetext{v10}[n]{3}
	\roundnode{v11}(2.8,0)
	\autonodetext{v11}[n]{4}
	\edge{v1}{v2}
	\edge{v1}{v3}
	\edge{v1}{v4}
	\edge{v1}{v5}
	\edge{v1}{v6}
	\edge{v2}{v7}
	\edge{v7}{v8}
	\edge{v8}{v9}
	\edge{v9}{v10}
	\edge{v10}{v11}
\end{graph}
  \caption{}
  \label{minigraph4}
\end{figure}

\subsection{$S$-color inequalities}

\begin{tthm} \label{TONLYCOLORS}
Let $M$ be a matching of the complement of $G$ such that $2 \leq |M| \leq \lfloor \frac{n-1}{2} \rfloor$
and let $S \subset \{1,\ldots,n\}$ such that $|S| = 2|M| - r$ with $r \in \{1,2\}$ and $S$ contains all the colors
greater than $n - |M|$. Then, the $S$-color inequality, \ie
\begin{equation} \label{RONLYCOLORSAGAIN}
\sum_{j \in S} \sum_{v \in V} x_{vj} \leq \sum_{k=1}^{n} b_{Sk} (w_k - w_{k+1}),
\end{equation}
defines a facet of $\ECP$, where
$$b_{Sk} = |S \cap \{1,\ldots,k\}| \biggl\lfloor \dfrac{n}{k} \biggr\rfloor + \min \biggl\{ |S \cap \{1,\ldots,k\}|, n - k \biggl\lfloor \dfrac{n}{k} \biggr\rfloor \biggr\}.$$
\end{tthm}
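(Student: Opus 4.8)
The plan is to apply the technique of Remark~\ref{TECHNIQUE}. Write $m=|M|$ and let $\F'$ be the face cut out by (\ref{RONLYCOLORSAGAIN}); pick an arbitrary face $\F=\{(x,w):\lambda^Xx+\lambda^Ww=\lambda_0\}\supseteq\F'$ and show that $(\lambda^X,\lambda^W)$ is forced into the shape $\lambda^X_{vk}=\mu_v+\alpha\,[k\in S]+\sigma\,[k=n]$ with $\lambda^W$ determined accordingly, i.e. that $\lambda$ is a combination of (\ref{RONLYCOLORSAGAIN}) with the minimal system of Theorem~\ref{TDIM}. Since $b_{Sn}=b_{S,n-1}=|S|$, the inequality puts no weight on $w_n$, so $\lambda^W_n=-\sigma$ is free and the system to verify splits into: (a) $\lambda^X_{vk}=\lambda^X_{vn}+\lambda^W_n$ for all $v$ and all $k\in S\backslash\{n\}$; (b) $\lambda^X_{vk}=\lambda^X_{vk'}$ for all $v$ and all $k,k'\notin S$; (d) the offset $\lambda^X_{vs}-\lambda^X_{vt}$ between an $S$-colour $s$ and a non-$S$-colour $t$ is a constant $\alpha$ independent of $v$; and (c) $\lambda^W_k=-\alpha\,(b_{Sk}-b_{S,k-1})$ for $\chi_{eq}<k<n$, $k\notin\S$, the ranges $k\le\chi_{eq}$, $k\in\S$, $k=n$ being absorbed by the minimal system.

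For the supply of colourings on $\F'$ I would record two facts. First, every $n$-eqcol lies on $\F'$ (all classes are singletons and both sides of (\ref{RONLYCOLORSAGAIN}) equal $|S|$). Second, any $k$-eqcol can be moved onto $\F'$ by a relabelling: composing $swap$'s to place the $\min\{d_{Sk},\,n-k\lfloor n/k\rfloor\}$ largest classes on the $S$-labels not exceeding $k$ realises $\sum_{j\in S}|C_j|=b_{Sk}$. Hence for every $k\in\{\chi_{eq},\dots,n\}\backslash\S$ there is a $k$-eqcol on $\F'$, and block (c) follows by feeding a $k$-eqcol and a $(k-1)$-eqcol on $\F'$ into $\lambda^Xx^1+\lambda^W_k=\lambda^Xx^2$ and simplifying with (a), (b), (d) (both colourings avoid colour $n$, so each $\lambda^Xx$ reduces to a constant plus $\alpha$ times the number of $S$-coloured vertices, namely $b_{Sk}$). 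No monotonicity is needed, since a $k$-eqcol is available exactly when $k\notin\S$.

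Block (a) I would obtain by the usual doubleton-plus-$intro$ device. An $(n-1)$-eqcol lies on $\F'$ iff its unique size-2 class carries an $S$-colour; taking such a coloring with size-2 class $\{v_0,v_0'\}$ painted by an $S$-colour (available because $|S\cap\{1,\dots,n-1\}|=2m-r-1\ge1$) and applying $intro(\cdot,v_0)$ yields $\lambda^X_{v_0k}=\lambda^X_{v_0n}+\lambda^W_n$. The assumption that $G$ has no universal vertex guarantees every vertex has a non-neighbour to serve as $v_0'$, and the relation propagates to all $v$ through the rectangle identities $\lambda^X_{vk}-\lambda^X_{vn}=\lambda^X_{v'k}-\lambda^X_{v'n}$ read off from $swap_{k,n}$ on $n$-eqcols.

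The matching $M$ is indispensable, and the main difficulty, for blocks (b) and (d). The device above cannot be reused for a non-$S$ colour $t$, because an $(n-1)$-eqcol whose size-2 class carries $t$ is \emph{off} $\F'$. Instead I would use the $(n-m)$-eqcols coming from $M$: colouring $m-r$ of the matched pairs with the $m-r$ small $S$-colours forces the face equality, while the remaining $r\ge1$ pairs (and the $n-2m\ge1$ unmatched vertices) are free to carry non-$S$ colours, so a non-$S$ colour can sit on a doubleton without leaving $\F'$. Comparing two such face-colourings of equal size that differ only in which pair is $S$-coloured yields (d), and an $intro$ off a free non-$S$ doubleton (for $r=1$ it lands back on $\F'$) links a non-$S$ colour to an $S$-colour and then, via $n$-eqcol rectangles, closes (b). The delicate point, where I expect the bulk of the work, is that each modification must preserve both equitability and $\sum_{j\in S}|C_j|=b_{Sk}$; since $b_{S,n-m+1}-b_{S,n-m}$ depends on $r$, the $intro$ move lies on $\F'$ for $r=1$ but not for $r=2$, forcing a genuine case split (for $r=2$ one instead reshuffles colours among the free doubletons supplied by $M$). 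The hypotheses $2\le|M|\le\lfloor(n-1)/2\rfloor$ are exactly what guarantee enough doubletons, at least one free singleton, and $n-1\in S$ for these constructions. Once (a)--(d) are established the independent equations number $dim(\ECP)-1$, and (\ref{RONLYCOLORSAGAIN}) defines a facet.
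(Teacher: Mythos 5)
Your proposal is correct, and it rests on the same scaffolding as the paper's proof: the technique of Remark~\ref{TECHNIQUE}, an equation system with the same solution space, the $(n-1)$-eqcol-plus-$intro$ device to tie the $S$-colours to colour $n$, matching-based $(n-|M|)$-eqcols for the non-$S$ colours, and comparisons of consecutive eqcols on the face for the $\lambda^W$-block. Where you genuinely diverge is in the hardest step, the case $r=2$. Writing $p=n-|M|$ and $\nu_k$ for the vertex-independent quantity $\lambda^X_{vk}-\lambda^X_{vn}$ (your $n$-eqcol rectangles), the paper pins the non-$S$ coefficients \emph{directly}: for $r=2$ it builds an explicit table of eight colourings ($p$-, $(p+1)$- and $n$-eqcols) and adds the four resulting equalities to get $2\lambda^X_{u_1 j}=2\lambda^X_{u_1\,p+1}+\lambda^W_{p+1}$, which is where the factor $\frac{1}{r}$ comes from; it also has to exile the case $|S|=2$ (where $T=S\backslash\{p+1,\ldots,n\}=\varnothing$) to Theorem~\ref{TORIGINAL4}. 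You instead prove at this stage only the mutual equality $\nu_k=\nu_{k'}$ for $k,k'\notin S$, and let the factor $\frac{1}{r}$ fall out of your block (c) applied at $k=p+1$: feeding a $p$-eqcol and a $(p+1)$-eqcol on $\F'$ into the face equation gives $(b_{S,p+1}-b_{Sp})(\nu_{k}-\lambda^W_n)=\lambda^W_{p+1}$ for $k\notin S$, and a short computation shows $b_{S,p+1}-b_{Sp}=r$, so the argument is uniform in $r$ (and also in $|S|\geq 2$, removing the paper's special case). This is cleaner, but two points in your narrative need to be made precise for it to close. First, ``reshuffling colours among the free doubletons'' is vacuous by itself: swapping two non-$S$ colours that both sit on doubletons gives $(|C_j|-|C_{j'}|)(\nu_j-\nu_{j'})=0$ with vanishing first factor; you must swap a doubleton colour against a colour sitting on one of the $n-2|M|\geq 1$ unmatched singletons, which is exactly where $|M|\leq\lfloor (n-1)/2\rfloor$ enters. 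Second, for $r=2$ the reshuffle does \emph{not}, as your wording suggests, ``link a non-$S$ colour to an $S$-colour''; that quantitative link is produced only in block (c), so (c) must explicitly include $k=p+1$ (it legitimately does: $p+1>\chi_{eq}$ and $p+1\notin\S$, since the matching supplies both a $p$-eqcol and a $(p+1)$-eqcol), and the order of deduction must be (a), (b), (d) from same-size or $n$-colour comparisons first, then (c) -- there is no circularity. With those two points spelled out, your system leaves exactly the $n+\chi_{eq}+|\S|+2$ degrees of freedom of a facet normal, and the proof is complete.
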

\begin{proof}
Let us note that $|S| \geq 2$. If $|S| = 2$, $S = \{n-1,n\}$ and the $S$-color inequality defines the same
face as the $\{n-1\}$-color inequality as stated in Remark \ref{REMARKONLY}.1. But the $\{n-1\}$-color inequality
is the constraint (\ref{RUPPER}) with $j = n-1$ by Remark \ref{REMARKONLY}.2, which is facet-defining by Theorem \ref{TORIGINAL4}. Then, the $S$-color inequality defines a facet of $\ECP$. So, from now on we assume that
$|S| \geq 3$.

For the sake of simplicity, we define $p = n - |M|$.

Now, let $\F'$ be the face of $\ECP$ defined by (\ref{RONLYCOLORSAGAIN}) and
$\F = \{ (x,w) \in \ECP ~:~ \lambda^X x + \lambda^W w = \lambda_0\}$ be a face such that $\F' \subset \F$.
According to Remark \ref{TECHNIQUE}, we have to prove that $(\lambda^X, \lambda^W)$ verifies the following equation system: 
\begin{align*}
&\textrm{(a)}~~ \lambda^X_{vj} = \lambda^X_{vn} + \lambda^W_n,~~~\forall~v \in V,~
                   j \in S \backslash \{n\}. & \\
&\textrm{(b)}~~ \lambda^X_{vj} = \lambda^X_{vn} + \lambda^W_n + \frac{1}{r} \lambda^W_{p+1},~~~\forall~v \in V,~
                   j \notin S. & \\
&\textrm{(c)}~~ \sum_{k = \theta+1}^j \lambda^W_k = (b_{Sj} - b_{S\theta}) \frac{1}{r} \lambda^W_{p+1}, ~~~\forall~\chi_{eq}+1 \leq j \leq n - 1 ~\textrm{such that} &  \\
&~~~~~~~~~~~~~~~~ j \neq p+1,~ j \notin \S ~\textrm{and}~ \theta = \max \{j' \in \mathbb{Z} : j' \leq j-1,~j' \notin \S\}. &
\end{align*}
We present pairs of equitable colorings lying on $\F'$ that allow us to
prove the validity of each equation in the previous system.
\begin{enumerate}
\item[(a)] Let $v'$ be a vertex not adjacent to $v$. It exists since $G$ does not have universal vertices.
Let $c^1$ be a $(n-1)$-eqcol such that $c^1(v) = c^1(v') = j$ and $c^2 = intro(c^1,v)$. We conclude that
$\lambda^X_{v j} = \lambda^X_{v n} + \lambda^W_n$.
\item[(b)] Since $j \notin S$, we know that $j \leq p$ so we can propose $p$-colorings using $j$. 
Let $\{ (u_1, u'_1), (u_2, u'_2), \ldots, (u_{|M|}, u'_{|M|}) \}$ be the matching $M$ of the complement of $G$ and
let $T = S \backslash \{ p + 1, \ldots, n \}$. Since $\{ p + 1, \ldots, n \} \subset S$ and $|S| = 2|M| - r$, we
have $|T| = |S| - (n - p) = |M| - r$. Moreover, $T \neq \varnothing$.\\
In order to prove $\lambda^X_{vj} = \lambda^X_{vn} + \lambda^W_n + \frac{1}{r} \lambda^W_{p+1}$, we consider
three cases:\\
\textbf{Case $v = u_1$ and $r = 1$}. Let us consider that
$T = \{ t_1, t_2, \ldots, t_{|M|-1} \}$. Let $c^1$ be a $p$-eqcol such that $c^1(u_{i+1}) = c^1(u'_{i+1}) = t_i$ for
$1 \leq i \leq |M|-1$, $c^1(u_1) = c^1(u'_1) = j$ and $c^2 = intro(c^1,u_1)$.
Therefore, $\lambda^X_{u_1 j} = \lambda^X_{u_1 p+1} + \lambda^W_{p+1}$. As condition (a) asserts that
$\lambda^X_{u_1 p+1} = \lambda^X_{u_1 n} + \lambda^W_n$, we conclude that
$\lambda^X_{u_1 j} = \lambda^X_{u_1 n} + \lambda^W_n + \lambda^W_{p+1}$.\\
\textbf{Case $v = u_1$ and $r = 2$}. Since $|M| \leq \lfloor \frac{n-1}{2} \rfloor$, we have
$|\{1,\ldots,p\} \backslash T| = p - |M| + 2 \geq 3$ and we can ensure that there exist different colors
$k, l \in \{1,\ldots,p\} \backslash (T \cup \{j\})$. Moreover, there exists a vertex
$w \in V \backslash \{u_1, u'_1, \ldots, u_{|M|}, u'_{|M|}\}$ because $M$ is not perfect.\\
Now, we propose a pair of equitable colorings (namely $c^1$ and $c^2$)
in order to obtain several equalities. Let us consider $T = \{ t_1, t_2, \ldots, t_{|M|-2} \}$ and $c^1$, $c^2$ be equitable colorings such
that $c^1(u_{i+2}) = c^1(u'_{i+2}) = t_i$ for $1 \leq i \leq |M|-2$, $c^2(i) = c^1(i)$ for
$i \in V \backslash \{ u_1, u'_1, u_2, u'_2, w \}$ and
the colors of vertices $u_1$, $u'_1$, $u_2$, $u'_2$ and $w$ are:
\begin{center} \small
\begin{tabular}{|c|c@{\hspace{3pt}}c@{\hspace{3pt}}c@{\hspace{3pt}}c@{\hspace{3pt}}c|c|c@{\hspace{3pt}}c@{\hspace{3pt}}c@{\hspace{3pt}}c@{\hspace{3pt}}c|}
\hline
 \multicolumn{6}{|c|}{$c^1$} & \multicolumn{6}{|c|}{$c^2$} \\
\hline
 size & $u_1$ & $u'_1$ & $u_2$ & $u'_2$ & $w$ &
 size & $u_1$ & $u'_1$ & $u_2$ & $u'_2$ & $w$ \\
\hline
 $p$ & $j$ & $j$   & $k$ & $k$ & $l$ & $p+1$ & $p+1$ & $p+1$ & $k$ & $j$ & $l$ \\
 $p$ & $l$ & $l$   & $j$ & $j$ & $k$ & $p$ & $l$   & $l$   & $k$ & $k$ & $j$ \\
 $n$   & $j$ & $p+1$ & $l$ & $k$ & $n$ & $n$   & $p+1$ & $j$   & $l$ & $k$ & $n$ \\
 $n$   & $l$ & $p+1$ & $k$ & $n$ & $j$ & $n$   & $l$   & $p+1$ & $j$ & $n$ & $k$ \\
\hline
\end{tabular}
\end{center}
Each combination gives us a different equality of the form $\lambda^X x_1 + \lambda^W w_1 = \lambda^X x_2 + \lambda^W w_2$, namely 
\begin{enumerate}
\item[1.] $\lambda^X_{u_1 j} + \lambda^X_{u'_1 j} + \lambda^X_{u'_2 k} =
 \lambda^X_{u_1 p+1} + \lambda^X_{u'_1 p+1} + \lambda^X_{u'_2 j} + \lambda^W_{p+1}$
\item[2.] $\lambda^X_{u_2 j} + \lambda^X_{u'_2 j} + \lambda^X_{w k} =
 \lambda^X_{u_2 k} + \lambda^X_{u'_2 k} + \lambda^X_{w j}$
\item[3.] $\lambda^X_{u_1 j} + \lambda^X_{u'_1 p+1} =
 \lambda^X_{u_1 p+1} + \lambda^X_{u'_1 j}$
\item[4.] $\lambda^X_{u_2 k} + \lambda^X_{w j} =
 \lambda^X_{u_2 j} + \lambda^X_{w k}$
\end{enumerate}
Let us note that the addition
of the previous equalities gives $2 \lambda^X_{u_1 j} = 2 \lambda^X_{u_1 p+1} + \lambda^W_{p+1}$.
Since condition (a) asserts that $\lambda^X_{u_1 p+1} = \lambda^X_{u_1 n} + \lambda^W_n$, we conclude that
$2 \lambda^X_{u_1 j} = 2 \lambda^X_{u_1 n} + 2 \lambda^W_n + \lambda^W_{p+1}$.\\
\textbf{Case $v \neq u_1$}. Let $c^1$ be a $n$-eqcol such that $c^1(v) = j$, $c^1(u_1) = n$ and
$c^2 = swap_{j,n}(c^1)$. The conditions proved recently allows us to conclude that
$\lambda^X_{vj} = \lambda^X_{vn} + \lambda^W_n + \frac{1}{r} \lambda^W_{p+1}$.
\item[(c)] Let $(x^1,w^1)$ be a $j$-eqcol and $(x^2,w^2)$ be a $\theta$-eqcol. If any of these colorings does not
lie on $\F'$, we can always swap its color classes so that it belongs to the face.
Thus $\lambda^X x^1 + \sum_{k = \theta+1}^j \lambda^W_k = \lambda^X x^2$.
In virtue of conditions (a) and (b), the previous equation becomes
\[ \sum_{v \in V} \lambda^X_{vn} + n \lambda^W_n + (n - b_{Sj}) \frac{1}{r} \lambda^W_{p+1} + \sum_{k = \theta+1}^j \lambda^W_k = \sum_{v \in V} \lambda^X_{vn} + n \lambda^W_n + (n - b_{S\theta}) \frac{1}{r} \lambda^W_{p+1}, \]
and this leads to $\sum_{k = \theta+1}^j \lambda^W_k = (b_{Sj} - b_{S\theta}) \frac{1}{r} \lambda^W_{p+1}$.
\end{enumerate}
\end{proof}

Let us present an example where the previous theorem is applied.\\

\noindent \textbf{Example.} We assume that $G$ is the graph presented in Figure \ref{minigraph2}(a). Let us note that $\overline{G}$ has the matching $\{(4,5)$, $(3,6)$, $(1,7)$, $(2,8)$, $(9,11)\}$. So, for all $S$ such that
$8 \leq |S| \leq 9$ and $\{7,\ldots,11\} \subset S$, the assumptions of Theorem \ref{TONLYCOLORS}
hold and the $S$-color inequality defines a facet of $\ECP$ as expected. Furthermore,
since $\overline{G}$ has also matchings of sizes between 2 and 5, the $S$-color
inequality defines a facet for all $S$ such that $3 \leq |S| \leq 9$ and
$\{ 11 - \lceil \frac{|S|+1}{2} \rceil, \ldots, 11\} \subset S$.\\

Unlike the last theorem, Theorems \ref{T2RANK1}, \ref{T2RANK2}, \ref{TNEIGHBOR1}, \ref{TNEIGHBOR2}
and \ref{TNEIGHBOR3} are restricted to monotone graphs. However, these results might be extended to
the general case, but the proofs behind them turn very cryptic.

\end{document}